\pgfplotsset{compat=1.6}
\newtheorem{theorem}{Theorem}
\newtheorem{definition}{Definition}
\newtheorem{lemma}{Lemma}
\newtheorem{example}{Example}
\newtheorem{corollary}[theorem]{Corollary}
\newtheorem{property}{Property}
\newtheorem{remark}{Remark}
\newcommand{\ceil}[1]{\lceil #1 \rceil}
\newcommand{\floor}[1]{\lfloor #1 \rfloor}
\newcommand{\expect}[1]{\mathbb{E}[#1]}
\newcommand{\expectExplicit}[2]{\mathbb{E}_{#2}[#1]}
\newcommand{\norm}[1]{||#1||}
\newcommand{\randwalk}[2] {\ensuremath{RW{\left({#1},{#2}\right)}}}
\newcommand{\TRANSACT}[3] {\ensuremath{({#1} \to_{#2} {#3})}} 
\newcommand{\TRANSOLD}[3] {\ensuremath{({#1} \to_{2^{#2}} {#3})}}
\definecolor{myblue}{RGB}{80,80,160}
\definecolor{mygreen}{RGB}{80,160,80}
\newcommand{\newtext}[1]{\textcolor{black}{#1}} 
\begin{document}
\bstctlcite{IEEEexample:BSTcontrol} 

\title{Codes for Load Balancing in TCAMs: Size Analysis}


\author{Yaniv Sadeh, Ori Rottenstreich and Haim Kaplan}


\maketitle
{\let\thefootnote\relax\footnotetext{
This manuscript combines and extends two papers, presented at SOSR 2021~\cite{SadehSOSRTCAMsize} and at ISIT 2022~\cite{SadehISITTCAMsize}.
Yaniv Sadeh is with Tel-Aviv University, Israel (yanivsadeh@mail.tau.ac.il). 
Ori Rottenstreich is with the Technion - Israel Institute of Technology,  Israel  (or@technion.ac.il). 
Haim Kaplan is with Tel-Aviv University, Israel (haimk@tau.ac.il). }}

\begin{abstract}
Traffic splitting is a required functionality in networks, for example for load balancing over paths or servers, or by the source's access restrictions. The capacities of the servers (or the number of users with particular access restrictions) determine the sizes of the parts into which traffic should be split. A recent approach implements traffic splitting within the ternary content addressable memory (TCAM), which is often available in switches. It is important to reduce the amount of memory allocated for this task since TCAMs are power consuming and are often also required for other tasks such as classification and routing. Recent works suggested algorithms to compute a smallest implementation of a given partition in the longest prefix match (LPM) model. In this paper we analyze properties of such minimal representations and prove lower and upper bounds on their size. The upper bounds hold for general TCAMs, and we also prove an additional lower-bound for general TCAMs. We also analyze the expected size of a representation, for uniformly random ordered partitions. We show that the expected representation size of a random partition is at least half the size for the worst-case partition, and is linear in the number of parts and in the logarithm of the size of the  address space.
\end{abstract}

\section{Introduction}
\label{section_indtroduction}
In many networking applications, traffic has to be split into multiple possible targets. For example, this is required in order to partition traffic among multiple paths to a destination based on link capacities, and when sending traffic to one of multiple servers proportionally to their CPU or memory resources for load balancing. Traffic splitting also arises in maintaining access-control lists (ACLs). Here, we want to limit the number of users with specific permissions. We do this by associating a fixed quota of $W$-bit identifiers with each ACL, and granting a particular access only to users that have one of these identifiers.
In general, we address any scenario where traffic should be split by allocating a particular subset of identifiers of a specified size to each part (a part could be associated with a server or an ACL or with some other object).

It is increasingly common to rely on network switches  to perform the split~\cite{al2008scalable},\cite{Ananta},\cite{DASH_alg}. Equal cost multipath routing (ECMP)~\cite{RFC2992} uses hashing on flows to uniformly select one of target values written as memory entries. 
WCMP~\cite{WCMP},\cite{Zegura} (Weighted ECMP) generalizes the selection for non-uniform selections through entry repetitions, implying a distribution according to the number of appearances of each possible target. The implementation of some distributions in WCMP may require a large hash table. While for instance implementing a 1:2 ratio can be done with three entries (one for the first target and two for the second), the implementation of a  ratio  of the form $1:2^W-1$ is expensive, requiring $2^W$ entries. Memory can grow quickly for particular distributions over many targets, even if they are only being approximated. A recent approach~\cite{DASH_alg} refrains from memory blowup by comparing the hash to range-boundaries. Since the hash is tested sequentially against each range, it restricts the total number of load-balancing targets.

Recently, a natural approach was taken to implement traffic splitting within the Ternary Content Addressable Memory (TCAM), available in commodity switch architectures. For some partitions this allows a much cheaper representation~\cite{WangBR11},\cite{Niagara},\cite{AccurateExp},\cite{BitMatcher}. In particular, a partition of the form $1:2^W-1$ can be implemented with only two entries. Unfortunately, TCAMs are power consuming and thus are of limited size~\cite{appelman2012performance},\cite{McKeownABPPRST08}. Therefore a common goal is to minimize the representation of a partition in TCAMs. Finding a representation of a partition becomes more difficult when the number of possible targets is large. Focusing on the \emph{Longest Prefix Match model} (LPM), \cite{Niagara} suggested an algorithm named \emph{Niagara}, and showed that it produces small representations in practice.
They also evaluated a tradeoff of reduced accuracy for less rules. \cite{BitMatcher} suggested an optimal algorithm named \emph{Bit Matcher} that computes a smallest TCAM for a target partition. They also  proved that Niagara always computes  a smallest  TCAM, explaining its good empirical performance. A representation of a partition can be memory intensive when the number of possible targets is large. \cite{TCAM_Linf_TON} and \cite{TCAM_L1_INFOCOM} consider ways of finding approximate-partitions whose representation is cheaper than a desired input partition.

In this paper we analyze the size of the resulting (minimal) TCAM computed by Bit Matcher and Niagara. We prove upper and lower bounds on the size of the smallest TCAM needed to represent a given partition. We also give a lower bound on the minimum size of any TCAM\footnote{Henceforth, ``general TCAM'' or plain ``TCAM'' will refer to the general case, and ``LPM TCAM'' will refer to the restricted case.} needed to represent a given partition. Note that any upper bound for LPM TCAMs is an upper bound for general TCAMs. The optimization problem for general TCAMs, that is, how to find a smallest or approximately smallest TCAM for a given partition, is open. We also provide an average-case analysis to the size of an LPM TCAM.

{\bf Our Contributions.} 
(1) We prove two upper and two lower bounds on the size of the smallest LPM TCAM for a given partition over any number $k$ of targets.
The first upper and lower bounds are general and hold  for all partitions of $2^W$ into $k$ parts. They are derived through new analysis of the Bit Matcher algorithm \cite{BitMatcher}. The upper bound is roughly $\frac{1}{3}kW$. The two additional upper and lower bounds are partition-specific and consider the particular values $p_1,\ldots,p_k$ of the $k$ parts. The upper bound holds for LPM TCAMs. The lower bound has two versions, stronger for LPM TCAMs and weaker for general TCAMs. The partition-specific LPM bounds are a $2$-approximation. That is, the upper bound is at most twice larger than the lower bound.

(2) We demonstrate the tightness of our LPM bounds by constructing partitions with minimal TCAM representations that match them. We also provide examples showing that a general TCAM implementation for a partition may be strictly smaller than the best LPM implementation.

(3) We analyze the average size of the smallest LPM TCAM required to represent a partition drawn uniformly from the set of all ordered-partitions of $2^W$ into $k$ integer parts. This analysis shows that the expected size of the smallest LPM TCAM is roughly between $\frac{1}{6}kW$ and $\min(\frac{1}{5},\frac{1}{6} + \frac{1}{\sqrt{6\pi k}}) \cdot kW$, establishing that ``typical'' partitions cannot be encoded much more efficiently compared to the worst-case (factor of half).

(4) We evaluate how tight our bounds are, as well as our average-case analysis, experimentally, by sampling many partitions.

\newtext{The structure of the rest of the paper is as follows. In Section~\ref{section_model} we formally define the problem and set some terminology and definitions. In Section~\ref{section_size_by_bitmatcher} we derive general bounds that depend on the number of targets $k$ and the TCAM width $W$ and construct worst-case partitions that require a large number of rules. In Section~\ref{section_size_by_signed_bits} we derive additional bounds, tailored for any given partition, depending on the signed-bits representation of its parts. In Section~\ref{section_average_case} we analyze the average-case, i.e. the expected number of rules. We complement our analysis by experiments in Section~\ref{section_experiments}. Section~\ref{section_related_work} surveys some related work, and Section~\ref{section_conclusions} summarizes our findings and adds a few conclusions.}

\section{Model and Terminology}
\label{section_model}

A Ternary Content Addressable Memory (TCAM) of width $W$ is a table of entries, or \emph{rules}, each containing a \emph{pattern} and a \emph{target}. We assume that each target is an integer in $\{1,\ldots,k\}$, and also define a special target $0$ for dealing with addresses that are not matched by any rule. Each pattern is of length $W$ and consists of bits (0 or 1) and don't-cares ($*$). An \emph{address} is said to match a pattern if all of the specified bits of the pattern (ignoring don't-cares) agree with the corresponding bits of the address. If several rules fit an address, the first rule applies. An address $v$ is associated with the target of the rule that applies to $v$.

Most of the analysis in this paper follows the \emph{Longest Prefix Match (LPM)} model, restricting rule-patterns in the TCAM to include wildcards only as a suffix such that a pattern can be described by \emph{a prefix} of bits. 
This model is motivated by specialized hardware as in \cite{LPM_TCAM}, and is assumed in many previous studies \cite{WangBR11},\cite{Niagara},\cite{AccurateExp},\cite{BitMatcher}. Common programmable switch architectures such as RMT and Intel’s FlexPipe have tables  dedicated to LPM~\cite{NSDIJose15},\cite{Forwarding13},\cite{FlexPipe}. In general, much less is known about general TCAM rules. We do not have tighter upper bounds for general TCAMs, but we do provide a specific lower bound in Theorem~\ref{theorem_non_lpm_lower_bound_signed_bits}.

There can be multiple ways to represent the same partition in a TCAM, as a partition does not restrict the particular addresses mapped to each target but only their number. For instance, with $W=3$ the rules $\{ \textsc{011} \to 1, \textsc{01*} \to 2, \textsc{0**} \to 3, \textsc{***} \to 1\}$ imply a partition [5,1,2]  of addresses mapped to each of the targets \{1,2,3\}. Similarly, the same partition can also be derived using only three rules $\{\textsc{000} \to 2, \textsc{01*} \to 3, \textsc{***} \to 1\}$ (although this changes the identity of the addresses mapped to each target).

Given a desired partition $P = [p_1,\ldots,p_k]$ of the whole address space of $2^W$ addresses of $W$-bits such that $p_i > 0$ addresses should reach target $i$ ($\sum_i p_i = 2^W$), we aim to know the size of a smallest set of TCAM rules that partition traffic according to $P$. Note that $k \le 2^W$, and all addresses are considered equal in this model.\footnote{The model assumes implicitly that every address is equally likely to arrive, therefore the TCAM implementation only requires each target to receive a certain number of addresses. This assumption might not hold in practice, but it can be mitigated by ignoring bits which are mostly fixed like subnet masks etc. For example, \cite{Kang2014NiagaraSL} analyzed traces of real-data and concluded that for those traces about $6{-}8$ bits out of the client's IPv4 address are ``practically uniform''.}

A TCAM $T$ can be identified with a sequence $s$ of transactions between targets, defined as follows. Start with an empty sequence, and consider the change in the mapping defined by $T$ when we delete the first rule of $T$, with target $i \in \{1,\ldots,k\}$. Following this deletion some of the addresses may change their mapping to a different target, or become unallocated. If by deleting this rule, $m$ addresses are re-mapped from $i$ to $j \in \{0,\ldots,k\}$ (recall that $j=0$ means unallocated), we add to $s$ a transaction $\TRANSACT{i}{m}{j}$. We then delete the next rule of $T$ and add the corresponding transactions to $s$, and continue until $T$ is empty and all addresses are unallocated.

\begin{definition}[Transactions] 
\label{def_transaction}
Denote a transaction of size $m$ from $p_i$ to $p_j$ by $\TRANSACT{i}{m}{j}$. Applying this transaction to a partition $P = [p_1,\ldots,p_k]$ updates its values as: $p_i \leftarrow p_i - m$, $p_j \leftarrow p_j + m$.
\end{definition}

In the LPM model, a deletion of a single TCAM rule corresponds to exactly one transaction (or none if the rule was redundant), of size that is a power of $2$. 

\begin{example}
\label{example_tcam_to_sequence}
Consider the rules: $\{011 \to 1, 01{*} \to 2, 0{*}{*} \to 3, {*}{*}{*} \to 1 \}$ with $W=3$. They partition $2^W = 8$ addresses to $k=3$ targets. Deleting the first rule corresponds to the transaction $\TRANSACT{1}{1}{2}$. The deletion of each of the following three rules also corresponds  to a single transaction, $\TRANSACT{2}{2}{3}$, $\TRANSACT{3}{4}{1}$ and $\TRANSACT{1}{8}{0}$, respectively, see Fig.~\ref{figure_mapping_example}.

\begin{figure}[t]
  \centering
    \includegraphics[width=0.99\linewidth]{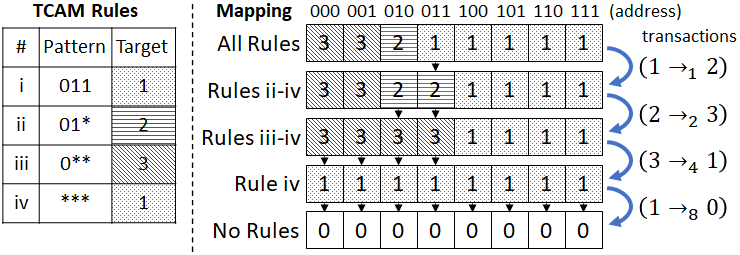}
  \caption{Example~\ref{example_tcam_to_sequence}: Rules and corresponding transactions due to remapping.}
  \label{figure_mapping_example}
\end{figure}
\end{example}

\begin{example}
\label{example_tcam_to_sequence_non_lpm}
To see that a deletion may correspond to multiple transactions, consider the following TCAM rules in order of priority: $\{0{*} \to 1, {*}0 \to 2, {*}1 \to 3 \}$. Deleting the first rule corresponds to the transactions $\TRANSACT{1}{1}{2}$ and $\TRANSACT{1}{1}{3}$.
\end{example}

\begin{figure}[t]
  \centering
    \includegraphics[width=0.8\linewidth]{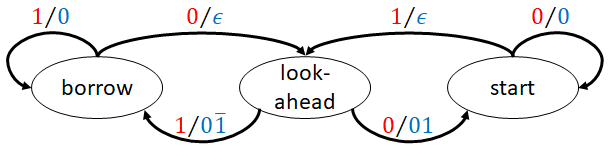}
  \caption{Automaton to convert binary to signed-bits form. Bits are in red, output signed-bits are in blue. To convert a number $n \ge 0$, process it with two leading zero bits. Reading and writing is done \emph{right-to-left} (LSB to MSB). For example, ${11}_{10} = {001011}_{2}$
  outputs $(01)\epsilon(0\overline{1})\epsilon(0\overline{1})\epsilon$,
  i.e.:
  ${11}_{10} = 10\overline{1}0\overline{1}$.}
  \label{figure_signed_bits_automaton}
\end{figure}

\begin{definition}[Complexity] 
\label{definition_length_of_partition}
Let $P=[p_1,\ldots,p_k]$ be a partition of $2^W$. We define $n(P)$ to be the size of the smallest general TCAM that realizes $P$. We also define $\lambda(P)$ to be the length of a shortest sequence of transactions of sizes that are powers of $2$, that zeroes $P$. This is also equal to the size of a smallest LPM TCAM that realizes $P$.
We say that $\lambda(P)$ is the \emph{complexity} of $P$. 
\end{definition}

It was shown in~\cite{BitMatcher} that Bit Matcher (see Algorithm~\ref{alg_match_variants})
computes a (shortest) sequence of transactions for an input partition $P$ whose length is $\lambda(P)$ which can also be mapped to a TCAM table. No smaller LPM TCAM exists, as it will correspond to a shorter sequence in contradiction to the minimality of $\lambda(P)$.

We conclude this section with important definitions, Tables~\ref{table_all_results_bounds}-\ref{table_all_results_existence} that summarize all the results from subsequent sections, and a remark regarding general TCAMs.

\begin{algorithm}[t]
    \SetAlgoLined
    \DontPrintSemicolon

    \SetKwFunction{funcGenerateSequence}{Compute}
    \SetKwFunction{funcProcessLowLevel}{ProcessLevel}
    \SetKwProg{Fn}{Function}{:}{}
    
    {
    \Fn{\funcGenerateSequence{partition $P$, $\sum_{i=1}^k p_i =2^W$}}{
        Initialize $s$ to be an empty sequence.\;
        \For{level $d = 0 \ldots W-1$} {
            $s' = \funcProcessLowLevel(P,d)$. Update $s \leftarrow s \cup s'$ and apply $s'$ on $P$.
        }
        \Return $s \cup \{ \TRANSACT{i}{2^W}{0} \ |\ 1 \le i \le k \wedge p_i = 2^W \}$.
        }
    \;
    // \textbf{Bit Matcher (BM) \cite{BitMatcher}} processing \;
    \Fn{\funcProcessLowLevel{partition $P$, $d$}}{
        {\footnotesize // $p_i$ is \emph{bit lexicographic} smaller than $p_j$ if ${p_i}^r < {p_j}^r$ where $n^r$ is the bit-reverse of $n$ with respect to word size $W$.}
    
       Let $A = \{ i \mid i \ge 1 \wedge p_i[d] = 1\}$.\;
       
       Let $A_h \subset A$ consists of the $|A|/2$ indices of the $p_i$s that are largest in bit lexicographic order, and let $A_l = A \setminus A_h$. Pair the elements of $A_h$ and $A_l$ arbitrarily. For each pair $i \in A_l,j \in A_h$ append to $s'$ (initially $s' = \emptyset$) the transaction $\TRANSACT{i}{2^d}{j}$.\;
       Finally, \Return $s'$.
       
    }
    \;
    // \textbf{Random Matcher (RM)} processing \;
    \Fn{\funcProcessLowLevel{partition $P$, $d$}}{
        Let $s' = \emptyset$. Pair the weights $i \ge 1$ where $p_i[d] = 1$ uniformly at random. For each pair, say $i,j$: \;
        If $p_i[d+1] < p_j[d+1]$: Append $\TRANSACT{i}{2^d}{j}$ to $s'$. \;
        If $p_i[d+1] > p_j[d+1]$: Append $\TRANSACT{j}{2^d}{i}$ to $s'$. \;
        If $p_i[d+1] = p_j[d+1]$: Append to $s'$ either $\TRANSACT{i}{2^d}{j}$ or $\TRANSACT{j}{2^d}{i}$ with equal probability. \;
        Finally, \Return $s'$.
    }
    \;
    // \textbf{Signed Matcher (SM)} processing \;
    \Fn{\funcProcessLowLevel{partition $P$, $d$}}{
        Let $s' = \emptyset$. Let $F = \{ i\ |\ \phi(p_i)[d] = 1 \}$ and $G = \{ i\ |\ \phi(p_i)[d] = -1 \}$. Pair $F$ and $G$, and for each such pair $i \in F$,$j \in G$ append $\TRANSACT{i}{2^d}{j}$ to $s'$. If $|F| > |G|$, for each unpaired $i \in F$, append $\TRANSACT{i}{2^d}{0}$ to $s'$. If $|F| < |G|$, for each unpaired $j \in G$, append $\TRANSACT{0}{2^d}{j}$ to $s'$. Finally, \Return $s'$.
    }
    }
    \caption{Matching Variants: Bit Matcher (BM), Random Matcher (RM), Signed Matcher (SM)}
    \label{alg_match_variants}
\end{algorithm}

\begin{definition}[Expected Complexity]
\label{definition_avg_length_of_partition}
We define the \emph{normalized expected complexity} as: $L(k,W) \equiv \frac{\mathbb{E}[\lambda(P)]}{Wk}$ where $P$ is a uniformly random ordered-partition of $2^W$ to $k$ positive parts.\footnote{Two partitions $P$ and $P'$ with the same components ordered differently are considered different ordered-partitions. For example $[1,3] \ne [3,1]$.} $L(k,W)$ can be interpreted as an ``average rules per bit'' since the partition has $k$ numbers, each of $W$ bits. We also define $L(k) \equiv \lim_{W \to \infty}{L(k,W)}$.
\end{definition}

\begin{definition}[Signed-bits Representation]
\label{definition_signed_bits}
The \emph{canonical signed-bits representation} of a number $n$, denoted by $\phi(n)$, is a representation such that $n = \sum_{i=0}^{d}{(2^i a_i)}$ where $\forall i: a_i \in \{-1,0,1\}$, $a_d \ne 0$, and $\forall i: a_i \cdot a_{i+1} = 0$ (no consecutive non-zeros). We denote by $|\phi(n)|$ the number of \textbf{non-zero} signed-bits in this representation,\footnote{The sequence $|\phi(n)|$ for $n \ge 0$ is known as https://oeis.org/A007302.}
and extend $\phi$ to vectors $P = [p_1,\ldots,p_k]$ element-wise: $|\phi (P)| = \sum_{i=1}^{k}{|\phi(p_i)|}$. Also, let $M(P) = \max_{i \in [k]}{|\phi(p_i)|}$. Finally, we denote $-1$ by $\overline{1}$.
\end{definition}

One can verify that $\phi(n) = n$ for $n \in \{-1,0,1\}$, and: $\phi(2n) = \phi(n) \circ 0$, $\phi(4n+1) = \phi(n) \circ 01$, $\phi(4n-1) = \phi(n) \circ 0\overline{1}$ where $\circ$ stands for concatenation. This representation can also be computed from least to most significant bit by the automaton in Fig.~\ref{figure_signed_bits_automaton}. It converts long sequences of $1$-bits to $0$, and determines whether each sequence should begin (least significant bit) with $1$ or $\overline{1}$ according to its state. \cite{signedBitsAutomaton} studies a more general problem and also discusses this particular automaton. Overall, signed-bits are of interest since they come up in minimization/optimization scenarios, see \cite[Section 6]{SignedBitsLooplessGrayCode} for a survey.
One may think of a signed representation  of an integer as the difference of two non-negative integers.

\begin{definition}[Level]
\label{definition_binary_levels}
Let $n$ be a non-negative integer. We denote by $n[\ell]$ the $\ell$-th bit in the binary representation of $n$. We denote by $\phi(n)[\ell]$ the $\ell$-th signed-bit of $\phi(n)$. We refer to $\ell$ as the level of this bit, $\ell=0$ is the least-significant level.
\end{definition}

\begin{table}[!ht]
    \begin{center}
        \begin{tabular}{|c|c|}

        \hline
        Remark~\ref{remark_size_lower_bound_k} &
        $\lambda(P) \ge n(P) \ge k$ \\
        \hline
        
        Theorem~\ref{theorem_size_upper_bound_third} &
        \begin{tabular}{@{}c@{}}
            $k = 2:   \lambda(P) \le \frac{1}{2}W + 2$ \\
            $k \ge 3: \lambda(P) \le \frac{1}{3} k(W - \floor{\lg k} + 4)$
        \end{tabular} \\
        \hline
        
        Theorem~\ref{theorem_bounds_signed_bits} &
        $\ceil{\frac{|\phi(P)|+1}{2}} \le \lambda(P) \le |\phi(P)| +1-M(P)$ \\
        \hline
        
        Theorem~\ref{theorem_non_lpm_lower_bound_signed_bits} & 
        \begin{tabular}{@{}c@{}}
            Assume $|\phi(p_1)| \ge |\phi(p_2)| \ge \ldots \ge |\phi(p_k)|$, then \\
            $n(P) \ge \max_{i=1,\ldots,k}{\lg ( |\phi(p_i)| + 1) + i - 1}$
        \end{tabular} \\
        \hline
        \hline
        
        Theorem~\ref{theorem_avg_case_rand_bits_rough} &
        $L(k) \in [\frac{1}{6},\frac{1}{5}]$ and $L(2) = \frac{1}{6}$ \\
        \hline
        
        Theorem~\ref{theorem_bound_per_k} &
        $L(k) \le \frac{1}{6} + c(k)$\ , $\lim_{k \to \infty}{\sqrt{6\pi k} \cdot c(k)} = 1$ \\
        \hline
        
        \end{tabular}
    \end{center}
    \caption{Summary of bounds regarding $n(P)$, $\lambda(P)$
    and $L(k)$ in terms of the parameters $k$ and $W$, or
    in terms of signed-representation notations $\phi(p_i)$, $\phi(P)$, $M(P)$ as in Definitions~\ref{definition_signed_bits}.}
    \label{table_all_results_bounds}
\end{table}

\begin{table}[!ht]
    \begin{center}
        \begin{tabular}{|c|c|}

        \hline
        
        Theorem~\ref{theorem_example_worst_case_k_2} &
        $k=2: \exists P \text{\ s.t.\ } \lambda(P) = \ceil{\frac{W}{2}} + 1$ \\
        \hline
        
        Theorem~\ref{theorem_example_worst_case_k_3} &
        $k=3: \exists P \text{\ s.t.\ } \lambda(P) = W+1$ \\
        \hline
        
        Theorem~\ref{theorem_example_worst_case_k_any} &
        $k \ge 4: \exists P \text{\ s.t.\ } \lambda(P) > \floor{\frac{k-1}{3}}(W-\ceil{\lg k}+1)$ \\
        \hline
        
        Theorem~\ref{theorem_tight_signed_bits_bounds} &
        $\exists P$ s.t. Theorem~\ref{theorem_bounds_signed_bits} is tight (UB/LB/both) \\
        \hline
        
        Corollary~\ref{corollary_lower_bound_non_lpm_size} &
        $\exists P$ s.t.\ $n(P) \ge \lg (W - \ceil{\lg k} + 3) + k-2$ \\
        \hline
        
        \end{tabular}
    \end{center}
    \caption{Summary of results regarding construction of partitions to demonstrate the tightness of the bounds in Table~\ref{table_all_results_bounds}.}
    \label{table_all_results_existence}
\end{table}

\begin{remark}[General TCAMs]
\label{remark_non_prefix_remark}
No non-trivial algorithm to compute exactly or to approximate a smallest general TCAM for a given partition is known.\footnote{If the mapping is a function, i.e. each address has a predefined target, then there are exponential algorithms, see \cite{tcams_function_non_lpm2009}.} On the other hand, there are also no hardness proofs for this problem. The LPM model is much simpler because every rule corresponds to exactly one transaction. 
Indeed, using general rules sometimes allows a smaller TCAM. For the following two partitions $\lambda(P) > n(P)$.

The partition $P = [4,3,3,3,3]$ satisfies $\lambda(P) = 7$. But $n(P) = 5$ as follows: $ \{ {*}{*}{0}{0} \to 1 , {0}{0}{*}{*} \to 2 , {0}{1}{*}{*} \to 3 , {1}{0}{*}{*} \to 4 , {1}{1}{*}{*} \to 5 \}$. Deleting the first rule corresponds to $4$ transactions (one-to-many): $\TRANSACT{1}{1}{i}$ for $i=2,3,4,5$.

Using general rules can reduce the size of a TCAM even for $k=2$. For $W=10$, to represent $P = [683,341]$ we need  $\lambda(P) = 6$ LPM rules (see Theorem~\ref{theorem_example_worst_case_k_2}). But $n(P) \le 5$:
$ \{
{0}{0}{0}{0}{0}{0}{0}{0}{0}{0} \to 2 ,
{*}{0}{0}{0}{*}{*}{*}{0}{0}{0} \to 1 ,
{*}{*}{0}{0}{0}{*}{*}{*}{*}{*} \to 2 ,
{0}{0}{*}{*}{*}{*}{*}{*}{*}{*} \to 2 ,
{*}{*}{*}{*}{*}{*}{*}{*}{*}{*} \to 1
\}$.
\end{remark}



\section{TCAM Size Bounds in terms of $k$ and $W$}
\label{section_size_by_bitmatcher}

In this section we prove upper and lower bounds on the complexity $\lambda(P)$ of a  partition $P$ as a function of the number of targets $k$ and the sum $2^W$ (TCAM width $W$).


\begin{remark}[Trivial Lower Bound]
\label{remark_size_lower_bound_k}
For any partition $P$ to $k$ targets we have that $n(P) \ge k$. This is because each target must be associated with at least one rule. Also, $\lambda(P) \ge n(P)$ because LPM rules are more restrictive.
\end{remark}

The remaining analysis in this section focuses on the upper bound, and is based on the properties of sequences 
which compute an optimal (minimal size) LPM TCAM for a given partition as generated by the Bit Matcher algorithm~\cite{BitMatcher} described in Algorithm~\ref{alg_match_variants}. Observe that the transactions are generated in an increasing order of size. The analysis relies on the bit-lexicographic order \newtext{(see Algorithm~\ref{alg_match_variants})} used by Bit Matcher.



The first lemma refers to changes in the binary representation of the  weights following  an application  of a transaction.

\begin{lemma}
\label{lemma_zeroed_bits_strict}
Let $P = [p_1,\ldots,p_k]$ be a partition of $2^W$, and let $s$ be a sequence of transactions generated for $P$ by Bit Matcher (see Algorithm~\ref{alg_match_variants}). The  following two properties hold: 

(i) Consider a specific transaction $\TRANSACT{i}{2^\ell}{j}$ after all smaller transactions have been applied. Let $z$ denote the number of consecutive bits in levels $\ge \ell$
of $p_i$ and $p_j$ that are guaranteed to be zero (and stay zero) following this transaction.\footnote{Note that levels $0,\ldots,\ell-1$ are known to be zero \emph{prior} to the transaction.} If $\ell < W-1$, then $z \ge 3$. Moreover, if $k=2$ then $z \ge 4$. 

(ii) Zeroed bits associated with different transactions are different, and the bits that are zeroed by $\TRANSACT{i}{2^\ell}{j}$ are consecutive from $p_i[\ell]$ and $p_j[\ell]$ upwards, until the next level in which $i$ and $j$ participate in a transaction (this level may be different for $i$ and $j$).
\end{lemma}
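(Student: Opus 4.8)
The plan is to track, for each weight, how its binary representation evolves as Bit Matcher applies transactions in increasing order of level, and to read off the zeroed bits from this evolution. Two preliminary invariants drive everything. First, since levels are processed bottom-up and a transaction at level $\ell$ only adds or subtracts $2^\ell$, when level $\ell$ is reached all bits $0,\ldots,\ell-1$ of every weight are already zero; I would prove this by induction on $\ell$. Second, at every processed level $d$ the set $A = \{i : p_i[d]=1\}$ has even size: because all weights are multiples of $2^d$ and $\sum_i p_i = 2^W$ is a multiple of $2^{d+1}$, reducing modulo $2^{d+1}$ gives $2^d \sum_i p_i[d] \equiv 0$, so $|A| = \sum_i p_i[d]$ is even. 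Hence the pairing of $A_h$ with $A_l$ is perfect and weight $i$ participates at level $\ell$ if and only if $p_i[\ell]=1$ at that time; equivalently, the next level above $\ell$ at which $i$ participates is exactly the next set bit of $p_i$. I would also record the reading of the bit-lexicographic order: comparing bit-reverses amounts to comparing bit strings from the least significant end upward, so among the elements of $A$ (which agree on levels $0,\ldots,\ell$) a pair $i \in A_l$, $j \in A_h$ satisfies $p_i <_{\mathrm{bitlex}} p_j$, meaning that at the lowest level above $\ell$ where they differ, $p_i$ has a $0$ and $p_j$ has a $1$.

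Next I would compute the effect of the transaction $\TRANSACT{i}{2^\ell}{j}$ and establish part (ii). For the source $p_i$, subtracting $2^\ell$ (with bits below $\ell$ zero) merely clears bit $\ell$, so the newly-guaranteed zero bits of $p_i$ form the consecutive run from level $\ell$ up to the level just below the next set bit of $p_i$; by the invariant above that next set bit is precisely the next level at which $i$ participates, so these bits stay zero until then. For the target $p_j$, adding $2^\ell$ to a number whose bits $\ell,\ldots,\ell+a-1$ are $1$ and whose bit $\ell+a$ is $0$ turns bits $\ell,\ldots,\ell+a-1$ into $0$ and sets bit $\ell+a$; again the new set bit $\ell+a$ is the next level at which $j$ participates, so the run $\ell,\ldots,\ell+a-1$ stays zero until then. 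This gives the ``consecutive from $p_i[\ell]$ and $p_j[\ell]$ upwards until the next participation'' statement. Disjointness across transactions is then immediate: for a fixed weight its successive participation levels $\ell_1 < \ell_2 < \cdots$ cut its bit positions into disjoint intervals $[\ell_t,\ell_{t+1})$, the transaction at $\ell_t$ owns exactly the bits in $[\ell_t,\ell_{t+1})$, and across different weights the ``(weight, level)'' labels differ.

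For part (i), write $z = a + b$, where $a \ge 1$ is the length of the cleared run of $p_j$ (its trailing $1$-run from level $\ell$) and $b \ge 1$ is the length of the cleared run of $p_i$ (so $\ell+b$ is the next set bit of $p_i$ above $\ell$). When $\ell < W-1$ the position $\ell+1$ is a legal level, and I claim $(a,b) \ne (1,1)$: if $a=1$ then $p_j[\ell+1]=0$, and if $b=1$ then $p_i[\ell+1]=1$, so $p_i$ and $p_j$ would differ at level $\ell+1$ with $p_i[\ell+1]=1 > 0 = p_j[\ell+1]$, forcing $p_i >_{\mathrm{bitlex}} p_j$ and contradicting $i \in A_l$, $j \in A_h$. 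Hence $a \ge 2$ or $b \ge 2$, so $z \ge 3$.

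Finally, the $k=2$ strengthening is the most delicate step. Here a transaction at level $\ell$ means $p_1[\ell]=p_2[\ell]=1$, and since bits below $\ell$ vanish, the binary addition $p_1+p_2$ produces a carry of $1$ out of level $\ell$. Because $p_1+p_2 = 2^W$ has a $0$ in bit $\ell+1$ (as $\ell+1 \le W-1$), the incoming carry forces $p_1[\ell+1]+p_2[\ell+1]$ to be odd, i.e. exactly one of $p_1[\ell+1],p_2[\ell+1]$ equals $1$. Thus $p_i$ and $p_j$ already differ at level $\ell+1$, and combined with the bit-lexicographic order this pins down $p_i[\ell+1]=0$ and $p_j[\ell+1]=1$, whence $a \ge 2$ and $b \ge 2$ and $z \ge 4$. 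I expect this carry/complementarity argument, together with checking that the ``stays zero until the next participation'' bookkeeping is faithful (which rests on the evenness of $|A|$ ensuring that every set bit is actually acted upon), to be the part that needs the most care.
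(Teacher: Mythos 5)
Your proof is correct and follows essentially the same approach as the paper: both use the bit-lexicographic pairing to rule out the configuration $p_i[\ell+1]=1,\ p_j[\ell+1]=0$ (your $(a,b)\ne(1,1)$ claim is the paper's three-case analysis in run-length form), and both use the same parity/carry argument on $p_1+p_2=2^W$ for the $k=2$ strengthening. The only difference is that you explicitly verify the supporting bookkeeping — the evenness of $|A|$, the bottom-up zeroing invariant, and the interval-disjointness for part (ii) — which the paper dismisses as trivial.
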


\begin{example}
Assume that $p_i = 5$ and $p_j = 3$ and that we apply the transaction $\TRANSACT{i}{1}{j}$. Then the weights change to $p_i = p_j = 4$, whose binary representation is $1\underline{00}$.
The zeroed bits are underlined, and in this example, $z=4$ (two in $i$ and two in $j$).
\end{example}

\begin{proof}[Proof of Lemma~\ref{lemma_zeroed_bits_strict}]
Bit Matcher applies a transaction at level $\ell$  between targets $i$ and $j$ with
bit $\ell$ equals 
$1$. It follows that the first two bits that are guaranteed to be zero  following this transaction are $p_i[\ell]$ and $p_j[\ell]$.

Since Bit Matcher applies the transaction in bit-lexicographic order, before the transaction is applied we have $p_i[\ell+1] \le p_j[\ell+1]$. Consider the three possible cases:
\begin{enumerate}[leftmargin=*,label=(\arabic*)]
    \item $p_i[\ell+1]=0$ and $p_j[\ell+1]=0$: following the transaction, we have $p_i[\ell+1]=0$, $p_j[\ell+1]=1$. 
    
    \item \label{case_2z} $p_i[\ell+1]=0$ and $p_j[\ell+1]=1$: following the transaction, we have $p_i[\ell+1]=0$, $p_j[\ell+1]=0$ (due to carry). 
    
    \item $p_i[\ell+1]=1$ and $p_j[\ell+1]=1$: following the transaction, we have $p_i[\ell+1]=1$, $p_j[\ell+1]=0$ (due to carry). 
\end{enumerate}

Overall, we see that $z \ge 3$ ($p_i[\ell] = p_j[\ell] = 0$ and at least another bit in level $\ell+1$). When $k=2$, only Case~\ref{case_2z} is possible, because $p_1 + p_2 = 2^W$, so we get $z \ge 4$.
Property (ii) of the claim is trivial by the way Bit Matcher works.
\end{proof}

Now we use Lemma \ref{lemma_zeroed_bits_strict} to prove a worst-case upper bound. That is, the largest possible minimum-size LPM TCAM for a partition of $2^W$ addresses to $k$ targets.

\begin{theorem}[Upper Bound]
\label{theorem_size_upper_bound_third}
Let $P$ be a partition of $2^W$ to $k$ parts.
\begin{itemize}[leftmargin=*,noitemsep]
    \item If $k=2$: $\lambda(P) \le \frac{1}{4} k(W - \floor{\lg k} + 1) + k = \frac{1}{2}W + 2$.
    \item If $k \ge 3$: $\lambda(P) \le \frac{1}{3} k(W - \floor{\lg k} + 1) + k = \frac{1}{3} k(W - \floor{\lg k} + 4)$.
\end{itemize} 
\end{theorem}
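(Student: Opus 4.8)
The plan is to bound the number of transactions produced by Bit Matcher, which equals $\lambda(P)$ since Bit Matcher computes a shortest transaction sequence (Definition~\ref{definition_length_of_partition}). I would organize these transactions by the level $\ell$ at which they act, treating the very top of the range and the final cleanup transactions separately from the bulk. The bulk is handled by a charging scheme that converts the ``at least three zeroed bits per transaction'' guarantee of Lemma~\ref{lemma_zeroed_bits_strict} into the factor $\tfrac13$ (and $\tfrac14$ when $k=2$, via $z\ge 4$), while the global constraint $\sum_i p_i = 2^W$ supplies the saving $-\floor{\lg k}$.

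First I would set up the charging. For a transaction $\TRANSACT{i}{2^\ell}{j}$ with $\ell \le W-2$, Lemma~\ref{lemma_zeroed_bits_strict}(i) guarantees at least three zeroed bits, and I charge it exactly three: the two bits at its own level $\ell$ (on $p_i$ and $p_j$) plus one zeroed bit at level $\ell+1$ (for $k=2$, only Case (2) occurs and I charge four, using both level-$(\ell+1)$ bits). By Lemma~\ref{lemma_zeroed_bits_strict}(ii) the bits charged to different transactions are distinct, and the level-$(\ell+1)$ bit is genuinely attributable here: the part carrying it has its bit $\ell+1$ finalized to $0$, so that part sits out every level-$(\ell+1)$ transaction. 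Writing $T_{\mathrm{low}}$ for the number of transactions at levels $0,\dots,W-2$, the total number of charged bits is then exactly $3\,T_{\mathrm{low}}$ (resp. $4\,T_{\mathrm{low}}$ for $k=2$).

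Next I would bound, level by level, the charged bits lying at a fixed level $\ell$. These come from two disjoint groups of parts: parts that participate in a level-$\ell$ transaction (charged on their own bit) and parts carried up from a level-$(\ell-1)$ transaction (whose bit $\ell$ was just finalized to $0$, so they do not participate at level $\ell$). Disjointness bounds the count by the number of distinct parts, i.e.\ by $k$. The sum constraint gives a second bound: a part with bit $\ell$ set has value $\ge 2^\ell$, so at most $2^{W-\ell}$ parts participate at level $\ell$ and at most $2^{W-\ell}$ are carried up from level $\ell-1$, for a total of at most $2^{W-\ell+1}$. Hence the charged bits at level $\ell$ number at most $\min(k,\,2^{W-\ell+1})$, and summing over levels,
\[
3\,T_{\mathrm{low}} \;\le\; \sum_{\ell=0}^{W-1}\min\!\bigl(k,\,2^{W-\ell+1}\bigr)
\;=\; k\,(W-\floor{\lg k}+1) + O(k),
\]
where the right-hand side is a flat part contributing $k$ for each of the roughly $W-\floor{\lg k}$ lowest levels plus a geometric tail of size $O(k)$.

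Finally I would account for the leftover transactions. At level $W-1$ at most two parts can carry the top bit (otherwise the sum exceeds $2^W$), so there is at most one transaction there; and at most one part can ever reach value $2^W$, so there is at most one final transaction $\TRANSACT{i}{2^W}{0}$. Dividing the displayed bound by $3$ and adding these two contributions yields the claimed $\tfrac13 k(W-\floor{\lg k}+1)+k$; the $k=2$ case runs identically with $4\,T_{\mathrm{low}}$ and the cap $2$ per level, giving $\tfrac12 W + 2$. I expect the genuine work to be twofold: (a) making the attribution of the level-$(\ell+1)$ charge airtight, namely that the carried-into part truly sits out level $\ell+1$, which is exactly where Lemma~\ref{lemma_zeroed_bits_strict}(ii) and the bit-lexicographic ordering are needed; and (b) the constant bookkeeping, since landing on the additive $+k$ rather than a looser $O(k)$ forces using the distinctness cap $k$ (not $\tfrac32 k$) at the low levels and evaluating the geometric tail of $\sum_\ell \min(k,2^{W-\ell+1})$ exactly.
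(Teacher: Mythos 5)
Your proof is correct, and it runs on the same engine as the paper's: charging each Bit Matcher transaction to at least three (four when $k=2$) distinct zeroed bits via Lemma~\ref{lemma_zeroed_bits_strict}, with the $-\floor{\lg k}$ saving supplied by the constraint $\sum_i p_i = 2^W$. Where you genuinely differ is the treatment of the top levels. The paper cuts at $N = W - \floor{\lg k}$: it charges only the transactions at levels $0,\dots,N-1$ against the $k(N+1)$ bits lying in the lowest $N+1$ levels, and then bounds \emph{all} remaining transactions by $k$ with a separate counting argument --- once the low levels are zero, at most $k$ one-bits survive (each bit in the top $\floor{\lg k}$ levels contributes at least $2^W/k$ to the sum), and every further transaction, including the final $\TRANSACT{i}{2^W}{0}$, removes at least one of them. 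You instead run the charging uniformly through level $W-2$, cap the charged bits at each level $\ell$ by $\min(k,2^{W-\ell+1})$, and pay only $+2$ for the (at most one) level-$(W-1)$ transaction and the (exactly one) final transaction, extracting the $\lg k$ saving from the geometric tail. The arithmetic does close: the tail evaluates to $2^{\ceil{\lg k}}-4 \le 2k-4$, so your route gives $\frac{1}{3}k(W-\floor{\lg k}+1)+\frac{2}{3}k+\frac{2}{3}$, which sits inside the theorem's $+k$ for all $k \ge 2$, and the $k=2$ computation gives $\frac{1}{2}W+2$ exactly. So the paper's version buys a shorter top-level argument with no exact tail evaluation, while yours buys a single uniform charging scheme and in fact a slightly better additive constant. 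Two small clean-ups to your write-up: the per-level cap of $k$ needs no disjointness argument at all, since each level holds only $k$ bits (one per part) and charged bits are distinct by Lemma~\ref{lemma_zeroed_bits_strict}(ii); and the count of carried-up charges should be justified as one per level-$(\ell-1)$ transaction, of which there are at most $\frac{1}{2}\cdot 2^{W-\ell+1}$ (a participant had value at least $2^{\ell-1}$ \emph{at processing time}), rather than by a value bound on the carried-into part --- the giver in a transaction can end up arbitrarily small, so its current value proves nothing.
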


\begin{proof}
We argue first for $k \ge 3$ and then consider the case $k=2$. Consider the binary representation of the weights $p_1,\ldots,p_k$. By Lemma~\ref{lemma_zeroed_bits_strict}, 
each  transaction of Bit Matcher at level $\ell$ can be associated with (at least) $3$ bits at levels $\ell$ and $\ell+1$ that remain zero after it is applied. Furthermore, bits associated with different transactions are different.

Let $N = W - \floor{\lg k}$. Since there are at most $k(N + 1)$ bits in the $N + 1$ least significant levels (in total for all targets),
and we associated uniquely three bits with each transaction, then we can have at most $\frac{1}{3} k(N + 1)$ transactions at the $N$ least significant levels. Following these transactions the $N$ least significant levels of $p_1,\ldots,p_k$ are all zero. We considered $N+1$ levels since transactions at the first $N$ levels may ``charge'' bits at the $N+1$ level.

After these transactions have been applied, we observe that no more than $k$ bits are $1$ in the (current) binary representations of all $p_i$, for $i \ge 1$. Indeed, each $1$ bit at the top $\floor{\lg k}$ levels
contributes at least $2^{W-\floor{\lg k}} \ge \frac{2^W}{k}$ to the sum of all $p_i$, which equals $2^W$. Thus,
Bit Matcher performs at most $k$ additional transactions at the most significant $\floor{\lg k}$ levels. In total we get that for any partition $P$ with $k \ge 3$: $\lambda(P) \le \frac{1}{3} k(W - \floor{\lg k} + 1) + k$.

When $k=2$, we repeat the same argument. The only difference is that each transaction at level $\le N$ is associated with (at least) $4$ zeroed bits (by Lemma~\ref{lemma_zeroed_bits_strict}), so the factor of $\frac{1}{3}$ in the bound changes to $\frac{1}{4}$.
\end{proof}

Next, we show worst-case partitions matching the upper bounds on $\lambda(P)$ in Theorem~\ref{theorem_size_upper_bound_third} up to minor additive constants.

\begin{theorem}[$k=2$]
\label{theorem_example_worst_case_k_2}
Let $x = round\left(\frac{2^W}{3}\right)$ for $W \ge 1$. The partition $P = [x,2^W-x]$ satisfies $\lambda(P) = \ceil{\frac{W}{2}} + 1$.
\end{theorem}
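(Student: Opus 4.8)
The plan is to pin down the exact binary and canonical signed-bit forms of the two parts, compute $|\phi(P)|$, and then match an explicit construction (upper bound) against a potential-function lower bound. First I would determine $x$ explicitly: since $2^W \equiv (-1)^W \pmod 3$ the rounding is unambiguous, giving $x=(2^W-1)/3$ when $W$ is even and $x=(2^W+1)/3$ when $W$ is odd. Writing $W=2m$ (even) or $W=2m+1$ (odd), one checks that in the even case $p_1=x=\sum_{j=0}^{m-1}2^{2j}$ (binary $0101\cdots01$, already non-adjacent, so $|\phi(p_1)|=m$), while $p_2=2^W-x$ has canonical form with a single $+1$ at level $2m$ and a $-1$ at each even level $0,2,\dots,2m-2$, so $|\phi(p_2)|=m+1$. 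In the odd case the roles essentially swap: $p_2=\sum_{j=0}^{m}2^{2j}$ with $|\phi(p_2)|=m+1$, and $p_1=x$ takes the same ``one positive top bit, negative even bits'' signed form with $|\phi(p_1)|=m+1$. In all cases the key identity is $|\phi(P)|=|\phi(p_1)|+|\phi(p_2)|=W+1$.

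For the upper bound I would exhibit a sequence of $\ceil{W/2}+1$ power-of-two transactions, which suffices since $\lambda(P)$ is the minimum over all such sequences. When $W=2m$, transfer each bit of $x$ to target $2$ via $\TRANSACT{1}{2^{2j}}{2}$ for $j=0,\dots,m-1$; this empties $p_1$ and makes $p_2=2^W$, after which $\TRANSACT{2}{2^W}{0}$ finishes, for a total of $m+1=\ceil{W/2}+1$. When $W=2m+1$, apply $\TRANSACT{2}{2^{2j}}{1}$ for $j=0,\dots,m-1$; a short computation shows both parts become $2^{2m}$, and two ejections $\TRANSACT{1}{2^{2m}}{0}$, $\TRANSACT{2}{2^{2m}}{0}$ finish, for a total of $m+2=\ceil{W/2}+1$.

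For the lower bound I would use the potential $\Psi=|\phi(p_1)|+|\phi(p_2)|$. The crucial lemma is that $|\phi(\cdot)|$ changes by at most $1$ when a single power of two is added to or subtracted from a part; this follows from the characterization of $|\phi(n)|$ as the minimum number of signed powers of two summing to $n$ (repeats never lower the count, and one extra term covers the added $\pm2^\ell$). Consequently an inter-target transaction lowers $\Psi$ by at most $2$, while any transaction involving target $0$ touches a single part and lowers $\Psi$ by at most $1$. Since the total mass must drop from $2^W$ to $0$ and inter-target transactions preserve mass, at least one transaction involves target $0$. Hence a sequence of $T$ transactions lowers $\Psi$ by at most $2(T-1)+1=2T-1$; as $\Psi$ must fall from $W+1$ to $0$ we get $2T-1\ge W+1$, i.e.\ $T\ge\ceil{(W+2)/2}=\ceil{W/2}+1$.

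The main obstacle is the signed-bit bookkeeping: correctly deriving the canonical forms of both parts in each parity class (the carries in $2^W-x$ are what collapse it to a single leading $+1$ over all-negative even bits) and thereby the identity $|\phi(P)|=W+1$. A second delicate point is that the ``$+1$'' in the final count is genuinely needed only in the odd case, and it is supplied precisely by the refinement that a mass-ejecting (to-$0$) transaction lowers $\Psi$ by only $1$ rather than $2$; dropping this refinement would lose the additive constant. Combining the two matching bounds yields $\lambda(P)=\ceil{W/2}+1$.
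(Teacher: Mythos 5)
Your proof is correct, but it takes a genuinely different route from the paper's. The paper's proof leans on the externally proven optimality of Bit Matcher: it observes that $x$ and $2^W-x$ have alternating, complementing binary representations (both odd), then simply counts what Bit Matcher does — one transaction per even level, i.e. $\ceil{\frac{W}{2}}$ transactions, plus the final ejection $\TRANSACT{i}{2^W}{0}$ — and optimality of Bit Matcher turns this count into both the upper and the lower bound at once. You instead build a self-contained two-sided argument: an explicit transaction sequence for the upper bound, and for the lower bound a potential argument on $|\phi(p_1)|+|\phi(p_2)|$ using the cancellation fact $\big||\phi(n\pm 2^h)|-|\phi(n)|\big|\le 1$, refined so that mass-ejecting transactions to target $0$ only lower the potential by $1$. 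This is, in essence, a specialization (and inline re-derivation) of machinery the paper only develops later: your cancellation fact is Lemma~\ref{lemma_signed_bits_cancellation}, your potential argument mirrors the proof of Theorem~\ref{theorem_bounds_signed_bits} (where the target-$0$ bookkeeping is handled by adjoining $x_0=-2^W$, making $|\phi(X)|=|\phi(P)|+1$), and your observation that both bounds meet at $\ceil{\frac{W}{2}}+1$ is exactly the $k=2$ ``sandwiched'' case recorded in Theorem~\ref{theorem_tight_signed_bits_bounds}. What the paper's route buys is brevity and an exact description of what the optimal algorithm does on this instance, at the price of invoking Bit Matcher's optimality as a black box; what your route buys is independence from that result, plus the insight that this worst-case partition for $k=2$ is precisely a tight instance of the signed-bits bounds — at the price of more careful bookkeeping (the canonical-form computations in both parity classes, and the refinement for to-$0$ transactions, without which you would lose the additive $+1$).
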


\begin{proof}
The binary expansion of $\frac{1}{3} = 0.0101\ldots$ is infinite with alternation of $0$s and $1$s. The binary representations of $\frac{2^W}{3}$ and $\frac{2 \cdot 2^W}{3}$ are both shifts of this representation, so modulo $2$ one of them is $0.1010\ldots$ and the other is $1.0101\ldots$. In both cases the rounding makes both $x$ and $2^W-x$ odd, so both $x$ and $2^W - x$ have alternating and complementing bit representations except for the least significant bit in which both have $1$.
The Bit Matcher algorithm applies $\ceil{\frac{W}{2}}$ transactions, at every even level ($\ell = 0,2,\ldots$). It follows that the total number of transactions is $\ceil{\frac{W}{2}} + 1$, the $+1$ is due to the last transaction $\TRANSACT{i}{2^W}{0}$ for $i=1$ or $i=2$.
\end{proof}

\begin{theorem}[$k=3$]
\label{theorem_example_worst_case_k_3}
Let $x = \floor{\frac{2^W}{3}}$. If $x$ is even, let $P = [x,x+1,x+1]$, and if $x$ is odd let $P = [x,x,x+1]$ (one can verify that $\sum_{i=1}^{3}{p_i} = 2^W$ in both cases). Then $\lambda(P) = W+1$.
\end{theorem}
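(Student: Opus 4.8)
The plan is to invoke the optimality of Bit Matcher, exactly as in the proof of Theorem~\ref{theorem_example_worst_case_k_2}: since the length of the sequence it produces equals $\lambda(P)$, it suffices to show that on this particular $P$ Bit Matcher performs exactly $W+1$ transactions. First I would record the binary structure of the parts. Writing $2^W = 3\lfloor 2^W/3\rfloor + r$ with $r\in\{1,2\}$ (because $2^W \equiv (-1)^W \pmod 3$), the parity case-split in the statement is precisely the one that makes $\sum_i p_i = 2^W$: $x$ odd corresponds to $W$ even and $r=1$ (parts $[x,x,x+1]$), while $x$ even corresponds to $W$ odd and $r=2$ (parts $[x,x+1,x+1]$). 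In both cases the multiset of parts is $\{a,a,c\}$ with $a$ odd and $2a+c=2^W$, where $a$ has the alternating expansion inherited from $\tfrac13 = 0.010101\ldots_2$ (so $a = 0101\cdots01$ up to the leading levels) and $c = a \pm 1$. This is the $k=3$ analogue of the $\tfrac13$-type construction used for $k=2$.

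The heart of the argument is an inductive invariant describing the three weights as Bit Matcher sweeps the levels $\ell = 0,1,\ldots,W-1$. I would prove that at every level exactly two of the current weights have bit $\ell$ set, so that exactly one transaction of size $2^\ell$ is appended per level. Concretely, I would maintain that before processing level $\ell$ the weights have the shape ``two equal values $a$ and a third value $b=a+2^{\ell}$, with the two equal values having their lowest set bit exactly at level $\ell$''. The base case is that the two odd parts must be paired at level $0$, forced because the third part is even. For the inductive step I would verify the carry cascade directly: pairing the two equal smaller values at an even level sends them to $a-2^\ell$ and $a+2^\ell=b$, producing two equal larger values $b$; pairing those two at the next level sends them to $b-2^{\ell+1}=a-2^\ell$ and $b+2^{\ell+1}$, restoring the two-equal-plus-one shape shifted up by two. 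The arithmetic identities $b=a+2^{\ell}$, $a\mapsto a-2^{\ell}$, $b\mapsto b+2^{\ell+1}$ keep the invariant self-consistent (one checks $b+2^{\ell+1}=(a-2^\ell)+2^{\ell+2}$), and the $W$-even and $W$-odd cases differ only in which of the two equal values is the odd one paired first, so the same invariant covers both after relabeling.

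Carrying the induction to the top, after level $W-1$ the weights have collapsed to $[0,0,2^W]$, which contributes exactly one final transaction $\TRANSACT{i}{2^W}{0}$. Summing one transaction per level $0,\ldots,W-1$ together with this last one gives $W+1$ transactions, and optimality of Bit Matcher yields $\lambda(P)=W+1$; the degenerate case $W=1$ (where $x=0$) is excluded since all parts must be positive. The main obstacle is formulating the invariant tightly enough that the ``exactly one transaction per level, and never zero'' claim is transparent — in particular ruling out any level being skipped — rather than the individual carry computations, which become routine once the invariant is in place.
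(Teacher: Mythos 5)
Your proposal is correct and shares the paper's top-level strategy: by optimality of Bit Matcher it suffices to show that exactly one transaction occurs at each level $0,\ldots,W-1$, plus the final transaction \TRANSACT{i}{2^W}{0}. The inductive mechanism, however, is genuinely different. The paper does no bit or carry arithmetic at all: pairing the two odd weights leaves all three weights even, and halving them yields exactly the theorem's construction for width $W-1$ with the two forms swapped; this gives the mutual recursion $\lambda_1(W)=1+\lambda_2(W-1)$ and $\lambda_2(W)=1+\lambda_1(W-1)$ with base case $\lambda([0,0,1])=1$, so the alternating expansion of $\floor{2^W/3}$ never enters the argument. Your proof instead fixes the instance and sweeps the levels with a shape invariant, which is workable but needs two fixes. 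First, as literally stated your invariant (singleton $=a+2^{\ell}$) holds only at alternate levels, since the singleton switches between lying $2^{\ell}$ above and $2^{\ell}$ below the pair; so the invariant must carry a sign $\pm$, or the induction must advance two levels per step (which is what your carry-cascade step implicitly does). Second---and this would considerably simplify your write-up---the alternating bit pattern of $x$ is not needed at all. Use the invariant ``before level $\ell$ the multiset of weights is $\{a,a,a\pm 2^{\ell}\}$ with all three divisible by $2^{\ell}$''. Since the sum $2^W$ is divisible by $2^{\ell+1}$ (for $\ell<W$), the number of weights with bit $\ell$ set must be even; it cannot be zero because the pair and the singleton differ by exactly $2^{\ell}$; and the two equal weights have identical bits, so it is precisely the pair that has bit $\ell$ set. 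Bit Matcher therefore pairs them, the state becomes $\{a',a',a'\mp 2^{\ell+1}\}$ with all three weights divisible by $2^{\ell+1}$, and the invariant propagates one level at a time with no carry computation; at the top, the sum constraint forces the terminal state $[0,0,2^W]$ (the alternative would require $3a'=2^{W+1}$, which is impossible). This dissolves what you called the main obstacle---ruling out skipped levels---and makes your route essentially as clean as the paper's rescaling recursion.
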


Note that if we substitute $k=3$ in Theorem~\ref{theorem_size_upper_bound_third} we get an upper bound of $W+3$.

\begin{proof}
\newtext{Denote the number of transactions for the partition $[x,x,x+1]$ as $\lambda_1(W)$ and for $[x,x+1,x+1]$ as $\lambda_2(W)$.}

\newtext{Consider first the case of odd $x$ and the partition $[x,x,x+1]$. Since $x$ is odd, the first transaction of Bit Matcher matches the two $x$-weights, and following this step we get the partition $[x-1,x+1,x+1]$. The resulting weights are all even, and therefore equivalent to the partition $[\frac{x-1}{2},\frac{x+1}{2},\frac{x+1}{2}]$: If $s$ is a sequence that corresponds to $[\frac{x-1}{2},\frac{x+1}{2},\frac{x+1}{2}]$ then if we multiply the size of each transaction of $s$ by $2$ we get a sequence that corresponds to $[x-1,x+1,x+1]$, and vice versa. If we denote $y = \frac{x-1}{2}$ the partition $[\frac{x-1}{2},\frac{x+1}{2},\frac{x+1}{2}]$ becomes $[y,y+1,y+1]$, that sums to $2^{W-1}$, so we conclude that $\lambda_1(W) = 1 + \lambda_2(W-1)$.}
    
\newtext{Now consider the case of $[x,x+1,x+1]$. In this case $x+1$ is odd, so after the first transaction of Bit Matcher we get: $[x,x,x+2]$, and similarly we conclude that $\lambda_2(W) = 1 + \lambda_1(W-1)$.}

\newtext{The relations $\lambda_1(W) = 1 + \lambda_2(W-1)$ and $\lambda_2(W) = 1 + \lambda_1(W-1)$
 simply imply that we have a transaction per level. The base-case for this recursive relation is $\lambda_1(0) = \lambda([0,0,1]) = 1$. We conclude that $\lambda(P) = W+1$.}
\end{proof}

\begin{theorem}[$k>3$]
\label{theorem_example_worst_case_k_any}
There exists a partition $P$ such that $\lambda(P) > \floor{\frac{k-1}{3}}(W-\ceil{\lg k}+1)$.
\end{theorem}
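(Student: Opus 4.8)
The plan is to build a worst-case partition by \emph{tiling} the address space with $t := \floor{\frac{k-1}{3}}$ disjoint copies of the $k=3$ worst-case partition of Theorem~\ref{theorem_example_worst_case_k_3}, each living on its own sub-cube, and then to charge the cost of each copy to a disjoint set of TCAM rules. Concretely, set $r := \ceil{\lg k} - 1$ and $W' := W - r = W - \ceil{\lg k} + 1$, and assume $W' \ge 2$ (when $W \le \ceil{\lg k}$ the claimed bound is below $k$, so it already follows from the trivial bound $\lambda(P) \ge k$ of Remark~\ref{remark_size_lower_bound_k}, since $k > t$). One checks that $t \le 2^{r}-1$ for every $k \ge 4$, so we may pick $t$ distinct length-$r$ prefixes $\pi_1,\dots,\pi_t$ together with at least one further unused length-$r$ prefix $\pi_0$. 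Inside the sub-cube $C_g$ defined by $\pi_g$ (which contains $2^{W'}$ addresses) we install the three parts of the width-$W'$ instance of Theorem~\ref{theorem_example_worst_case_k_3}; the remaining $k-3t \in \{1,2,3\}$ parts are placed, with positive sizes summing to the $2^W - t\,2^{W'} \ge 2^{W'}$ leftover addresses, inside the cubes of the unused prefixes. This is feasible because the leftover address count is at least $2^{W'} \ge 3 \ge k - 3t$.

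The heart of the argument is a decomposition lemma: for \emph{any} LPM TCAM $T$ realizing $P$, and each $g$, the number $m_g$ of rules of $T$ whose prefix strictly extends $\pi_g$ (the rules ``private'' to $C_g$) satisfies $m_g \ge W'$. To see this, restrict $T$ to the addresses of $C_g$. Rules with prefix longer than $r$ that are consistent with $\pi_g$ are exactly the $m_g$ private rules and act inside $C_g$; every other rule touching $C_g$ has prefix length $\le r$ and therefore covers $C_g$ entirely, and among these covering rules a single one -- the longest (hence most specific) prefix of $\pi_g$ -- determines the target of all $C_g$-addresses left uncovered by the private rules. Hence the restriction of $T$ to $C_g$ is an LPM TCAM of width $W'$ for the sub-partition $Q_g$ using $m_g$ private rules plus one synthesized default, i.e.\ $m_g + 1$ rules. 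Since $Q_g$ is the $k=3$ worst case, Theorem~\ref{theorem_example_worst_case_k_3} gives $\lambda(Q_g) = W' + 1$, and minimality forces $m_g + 1 \ge W'+1$, i.e.\ $m_g \ge W'$.

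Finally I would sum the contributions, applied to a smallest $T$ so that $|T| = \lambda(P)$. The private-rule sets of distinct sub-cubes are disjoint (distinct length-$r$ prefixes), and none of them covers the nonempty leftover region; therefore any rule of $T$ that maps a leftover address -- at least one must exist, since every address is allocated -- is distinct from all private rules. This yields $\lambda(P) = |T| \ge \sum_{g=1}^{t} m_g + 1 \ge tW' + 1 > tW' = \floor{\frac{k-1}{3}}\,(W - \ceil{\lg k} + 1)$, which is exactly the claimed strict inequality.

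The step I expect to be the main obstacle is the decomposition lemma $m_g \ge W'$: one must argue carefully, using longest-prefix-match semantics, that \emph{all} covering rules of $C_g$ collapse to a single effective default, so that the restriction really is a legal width-$W'$ LPM TCAM with only one rule more than the private ones -- this is what lets the strong minimality $\lambda(Q_g)=W'+1$ of the $k=3$ instance propagate into a global lower bound. The bookkeeping for feasibility (parities inside the $k=3$ instance, the inequality $t \le 2^r-1$, and the placement of the $1$--$3$ leftover parts) is routine and can be dispatched by the estimates above.
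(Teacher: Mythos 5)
Your construction and the final bookkeeping are fine, but the heart of your argument --- the decomposition lemma $m_g \ge W'$ --- is unsound, and the reason is conceptual rather than technical. A partition $P=[p_1,\ldots,p_k]$ prescribes only the \emph{sizes} of the parts, not which addresses go to which target; $\lambda(P)$ is the minimum over \emph{all} LPM TCAMs whose induced size vector is $P$. When you ``install'' the triplet $Q_g$ inside the sub-cube $C_g$ you are fixing one particular address assignment, but a competing TCAM $T$ realizing $P$ is under no obligation to respect it. For instance, $T$ may cover all of $C_g$ with a single short rule (zero private rules, so $m_g=0$) and assemble the three sizes of your triplet from addresses scattered across other sub-cubes. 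Consequently, the restriction of $T$ to $C_g$ realizes whatever size vector $T$ happens to induce on $C_g$ --- not $Q_g$ --- so Theorem~\ref{theorem_example_worst_case_k_3} cannot be invoked there, and $m_g\ge W'$ does not follow for arbitrary $T$. Since the lemma must be quantified over all TCAMs realizing $P$ in order to lower-bound $\lambda(P)$, this gap breaks the proof: what you have actually bounded is the size of those TCAMs that happen to respect your intended placement.

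This is exactly the difficulty the paper's proof is organized around. It stays in the transaction-sequence formulation of $\lambda(P)$ (Definition~\ref{definition_length_of_partition}): the triplet weights are scaled copies of Theorem~\ref{theorem_example_worst_case_k_3} chosen so that each triplet sums to $2^{W-1-\ceil{\lg m}}$ with $m=\floor{\frac{k-1}{3}}$; since all triplets are \emph{identical}, a symmetry argument lets one assume an optimal (Bit Matcher) sequence performs no cross-triplet transactions; and the $1$--$3$ leftover weights are chosen with great care ($N$; or $N-1$ and $1$; or $\frac{N}{2}$, $\frac{N}{2}-1$ and $1$) precisely so that they provably do not interact with the triplets at the levels where the triplets accumulate their transactions. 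Only then is the per-triplet count of Theorem~\ref{theorem_example_worst_case_k_3} legitimately summed. To salvage a geometric sub-cube argument you would need a claim of the form ``every TCAM realizing these sizes can be rearranged, without increasing the number of rules, into one respecting the sub-cube placement,'' and proving that is essentially as hard as the symmetry/non-interference argument the paper makes on sequences.
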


\begin{proof}
\newtext{The proof is very technical, but its idea is simple:} Divide the parts to $m = \floor{\frac{k-1}{3}}$ disjoint triplets and allocate a total weight of $2^{W-1 - \ceil{\lg m}}$ to each triplet as in Theorem~\ref{theorem_example_worst_case_k_3}. Allocate the remaining weight to the remaining parts. The core of the proof is to show that each triplet contributes $W- \ceil{\lg m} > W-\ceil{\lg k}+1$ transactions.

\newtext{Formally, let $m = \floor{\frac{k-1}{3}}$ be the number of triplets. This leaves out $k - 3m$ left-out targets, which is a number between 1 and 3. Now, for each triplet we allocate a total weight of $2^{W-1 - \ceil{\lg m}}$, and divide it between the targets of the triplet as in Theorem~\ref{theorem_example_worst_case_k_3}. This assignment allocates $m \cdot 2^{W-1 - \ceil{\lg m}}$ out of the total $2^W$ weight, so a total-weight of $N = 2^W - m \cdot 2^{W-1 - \ceil{\lg m}}$ remains to be divided between the left-out targets. Note that $N>0$ because the total weight of all the triplets is at most $2^{W-1}$.}

\looseness=-1
\newtext{If we can argue that a shortest sequence of transactions exists such that the targets of each triplet perform transactions between themselves, without interference from targets of other triplets and the left-out targets, then by Theorem~\ref{theorem_example_worst_case_k_3} we get at least $(W-1 - \ceil{\lg m}) + 1$ transactions per triplet, which can be lower-bounded by:
$$
W - \ceil{\lg m}
\ge W - \ceil{\lg \frac{k-1}{3}}
= W - \ceil{\lg k + \lg \frac{k-1}{3k}}
$$
$$
\ge W - \ceil{\lg k - 1.58}
\ge W - \ceil{\lg k} + 1
$$
We sum this up over the $m$ triplets and get a lower-bound of $\floor{\frac{k-1}{3}} (W - \ceil{\lg k} + 1)$ on the total number of rules. The true lower-bound is larger by $k-3m$, because each of the left-out targets must participate in at least one transaction, but this addition is at most $3$, so we neglect it for the sake of simplicity of the lower-bound expression.}

\newtext{Now, it remains to explain how to divide the left-out total-weight $N$ among the left-out targets, and why we can assume that effectively it is as if the transactions happen separately for each triplet. First, note that by symmetry we may assume that transactions don't happen between targets that belong to different triplets. It remains to explain why the left-out targets do not break this symmetry:}
\begin{itemize}[leftmargin=*,noitemsep]
    \item \newtext{If there is a single left-out target: then its value must be $N$. Since $N = 2^W - m \cdot 2^{W-1 - \ceil{\lg m}}$ it must be  a non-zero multiple of $2^{W-1 - \ceil{\lg m}}$. This means that if we consider a Bit Matcher sequence, the left-out target will participate in a transaction only at a level $W-1 - \ceil{\lg m}$ or higher, which is higher than the levels in which the triplets have their transactions. Thus, at the point in time where the left-out target starts making transactions, we have already made $\floor{\frac{k-1}{3}} (W - \ceil{\lg k})$ transactions. At this point, additional $m+1$ are guaranteed for $m+1$ non-zero weights.}
    
    \item \newtext{If there are two left-out targets: we partition the total left-out weight such that one target is of size $N-1$ and the other is of size $1$. Since $N-1$ is initially the largest in $<_{lex}$ order (it is a large power of 2 minus 1), and $1$ is the smallest, we may assume that Bit Matcher matches the two left-out targets together, so they do not interfere with the triplets. Afterwards, the same arguments of the previous case apply.}
    
    \item \newtext{If there are three left-out targets: they form their own triplet, but we cannot simply allocate them like the other triplets unless $N$ happens to be a power of $2$. So instead we partition the total left-out weight such that one target is of size $\frac{N}{2}$, another is of size $\frac{N}{2} - 1$ and the third is of size $1$.
    By the same argument as the previous case, we may assume that Bit Matcher matches $\frac{N}{2}-1$ with $1$, without interfering with the transaction of each triplet, so after the first level
    the weights of the left-out targets become $\frac{N}{2}$, $\frac{N}{2}$ and $0$.
    Since $N$ is a multiple of $2^{W-1 - \ceil{\lg m}}$, $\frac{N}{2}$ is a multiple of $2^{W-2 - \ceil{\lg m}}$, and the lowest level at which the left-out weights will participate in a transaction is $W-2 - \ceil{\lg m}$. When the sequence of transactions is executed up to that level,
    the weights of the targets in each triplet are $2^{W-2 - \ceil{\lg m}}$, $2^{W-2 - \ceil{\lg m}}$ and $0$, and each triplet already produced $W- \ceil{\lg m} - 2$ transactions. 
    Since two more targets of each triplet are yet to become zero, and each transaction zeroes at most one target we can associate two more transaction with each triplet. So the total number of transactions is still guaranteed to be as stated.}
\end{itemize}

\newtext{In conclusion, this construction requires at least $\floor{\frac{k-1}{3}}(W - \ceil{\lg k} + 1)$ rules, which is only slightly smaller than the upper bound $\frac{1}{3} k(W - \floor{\lg k} + 4)$ of Theorem~\ref{theorem_size_upper_bound_third}.}
\end{proof}

\begin{example}[\newtext{Concrete Example for Theorem~\ref{theorem_example_worst_case_k_any}}]
\label{example_for_upper_bound_construction}
\newtext{The following example illustrates the last case of Theorem~\ref{theorem_example_worst_case_k_any}.}

\newtext{Let $k=12$ and $W = 7$. So we divide the targets to $m=\floor{\frac{k-1}{3}} = 3$ triplets, and have $3$ additional left-out targets. We allocate the sum of each triplet to be $2^{W-1-\ceil{\lg 3}} = 2^4 = 16$, and the total left-out weight is $N = 128 - 3 \cdot 16 = 80$. In each triplet, the weights are divided as close to $\frac{16}{3}$ as possible, which means $5$, $5$ and $6$. There are three left-out targets, so the left-out weight is divided as $1$, $39$ and $40$. Thus, the resulting partition is: $P = [5,5,6,5,5,6,5,5,6,1,39,40]$.}

\newtext{One possible Bit Matcher sequence for $P$ is as follows:}
\begin{itemize}[leftmargin=*,noitemsep,topsep=0pt]
    \item \newtext{First transactions, level-$0$, transactions from first to second target in each triplet: \TRANSOLD{1}{0}{2} \TRANSOLD{4}{0}{5} \TRANSOLD{7}{0}{8} \TRANSOLD{9}{0}{10} after which, the state is $[4,6,6,4,6,6,4,6,6,0,40,40]$}

    \item \newtext{Level-$1$, transactions from second to third target in each triplet: \TRANSOLD{2}{1}{3} \TRANSOLD{5}{1}{6} \TRANSOLD{8}{1}{9} after which, the state is $[4,4,8,4,4,8,4,4,8,0,40,40]$}

    \item \newtext{Level-$2$, transactions from first to second target in each triplet: \TRANSOLD{1}{2}{2} \TRANSOLD{4}{2}{5} \TRANSOLD{7}{2}{8} after which, the state is $[0,8,8,0,8,8,0,8,8,0,40,40]$}

    \item \newtext{From this point, when one of the weights in each triplet became zero, the targets of the triplets might have transactions with the left-out targets. For example, Bit Matcher can make the following transaction at level 3: \TRANSOLD{2}{3}{3} \TRANSOLD{5}{3}{6} \TRANSOLD{8}{3}{11} \TRANSOLD{9}{3}{12} after which the state is $[0,0,16,0,0,16,0,0,0,0,48,48]$. Observe that $8$ and $9$ belong to the same triplet, yet both make a transaction with a left-out target.}

    \item \newtext{The sequence concludes with: \TRANSOLD{3}{4}{11} \TRANSOLD{6}{4}{12} \TRANSOLD{11}{6}{12} \TRANSOLD{12}{7}{0}}
\end{itemize}

\newtext{Overall, we have $18$ transaction, compared to the lower-bound of (substitute $k=12$,$W=7$): $\floor{\frac{12-1}{3}}(7 - \ceil{\lg 12} + 1) = 12$. If we do not neglect the addition of $k-3m$ transactions due to the left-over weights, the lower bound is in fact $15$ for this example.}
\end{example}

\section{TCAM Size Signed-Bits Bounds}
\label{section_size_by_signed_bits}

In this section we prove tighter lower and upper bounds on $\lambda(P)$. The upper bound applies to $n(P)$ as well, and we also improve the trivial lower bound of $k$ on $n(P)$. These bounds depend on the signed-bit representation of $p_1,\ldots,p_k$, rather than just on $k$ and $W$. \newtext{Revisit Definition~\ref{definition_signed_bits} for notations.}

\begin{property}[Signed-bits Sparsity]
\label{property_signed_bits_most_sparse}
\newtext{Proven in \cite[Section 4]{SignedBitsChords}: Let $\norm{d}$ denote the number of $1$ bits in the binary representation of $d$. If $n = x-y$ for non-negative integers $x,y$, then $|\phi(n)| \le \norm{x} + \norm{y}$. Equality is obtained when $x = n_{+} \equiv \sum_{a_i = 1}{2^i}$ and $y = n_{-} \equiv \sum_{a_i = \overline{1}}{2^i}$ where $a_i$ are the coefficients in $\phi(n)$.}
\end{property}

\newtext{Regarding Property~\ref{property_signed_bits_most_sparse}, we note that the terminology of \cite{SignedBitsChords} uses regular expressions. The idea behind the statement in "bitwise terminology" is to show that any $x$ and $y$ such that $n=x-y $ and $\norm{x} + \norm{y}$ is minimized, can be modified by repeated  changes $(x,y) \to (x',y')$ from the LSB upwards, such that we end-up with $(n_{+},n_{-})$ and $\norm{n_{+}} + \norm{n_{-}} \le \norm{x} + \norm{y}$ (hence, minimality follows).}

\begin{lemma}
\label{lemma_signed_bits_cancellation}
For  integers $n$ and $h$: $\big | |\phi(n + 2^h)| - |\phi(n)| \big | \le 1$.
\end{lemma}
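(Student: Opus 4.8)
The plan is to reduce the statement to the minimality characterization of the canonical signed-bits representation provided by Property~\ref{property_signed_bits_most_sparse}. That property states that whenever $n = x - y$ for non-negative integers $x,y$, we have $|\phi(n)| \le \norm{x} + \norm{y}$, with equality for $x = n_+$, $y = n_-$. Read correctly, this says that $|\phi(n)|$ equals the \emph{minimum} number of signed powers of two needed to write $n$. Since inserting or deleting a single term $\pm 2^h$ changes that count by at most one, both directions of the claimed inequality should drop out, and the symmetry between $n$ and $n+2^h$ will let me handle the two directions by the same argument.

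First I would record a small sub-claim about ordinary binary weights: for any non-negative integer $a$ (and $h \ge 0$), $\norm{a + 2^h} \le \norm{a} + 1$. This is immediate from carry propagation: adding $2^h$ either turns a $0$-bit into a $1$-bit, raising the weight by exactly one, or it triggers a carry chain that replaces a run of consecutive $1$-bits by $0$-bits while setting a single higher bit, which can only decrease or preserve the weight. Either way the weight cannot grow by more than one.

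For the upper bound I would write $n = n_+ - n_-$ using the canonical representation, so that $|\phi(n)| = \norm{n_+} + \norm{n_-}$ by the equality case of Property~\ref{property_signed_bits_most_sparse}. Then $n + 2^h = (n_+ + 2^h) - n_-$ expresses $n + 2^h$ as a difference of two non-negative integers, so Property~\ref{property_signed_bits_most_sparse} together with the sub-claim yields
\[
|\phi(n + 2^h)| \le \norm{n_+ + 2^h} + \norm{n_-} \le \norm{n_+} + 1 + \norm{n_-} = |\phi(n)| + 1 .
\]
For the lower bound I would apply the identical argument to $m = n + 2^h$, using $n = m - 2^h = m_+ - (m_- + 2^h)$ with $m = m_+ - m_-$ the canonical representation of $m$; this gives $|\phi(n)| \le |\phi(m)| + 1$, i.e. $|\phi(n+2^h)| \ge |\phi(n)| - 1$. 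Combining the two bounds gives $\big| |\phi(n+2^h)| - |\phi(n)| \big| \le 1$.

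I do not expect a serious obstacle here: the only point requiring care is to invoke Property~\ref{property_signed_bits_most_sparse} in the right direction, namely as an \emph{upper} bound on $|\phi(\cdot)|$ in terms of \emph{any} difference representation. Both steps use exactly this, and the interchangeability of $n$ and $n + 2^h$ means the lower-bound direction needs no new idea beyond relabeling.
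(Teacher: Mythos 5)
Your proof is correct and is essentially identical to the paper's: the paper also writes $n = m_+ - (m_-+2^h)$ for $m = n+2^h$ and $m = (n_+ + 2^h) - n_-$, invokes Property~\ref{property_signed_bits_most_sparse} as an upper bound on $|\phi(\cdot)|$ over arbitrary difference representations (with equality at the canonical one), and absorbs the added $2^h$ at the cost of at most one bit of ordinary binary weight. Your explicit sub-claim $\norm{a+2^h}\le\norm{a}+1$ is just a spelled-out version of the paper's remark that the inequality "is by counting standard bits (it may be strict due to potential carry)."
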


\begin{proof}
For every integer $k$ we use the notations of $k_{+}$ and $k_{-}$ as defined in Property~\ref{property_signed_bits_most_sparse}. Denote $m = n + 2^h$. Then $n = m - 2^h = m_{+} - (m_{-} + 2^h)$, and therefore, together with Property~\ref{property_signed_bits_most_sparse}: $|\phi(n)| \le \norm{m_{+}} + \norm{m_{-} + 2^h} \le \norm{m_{+}} + \norm{m_{-}} + \norm{2^h} = |\phi(m)| + 1$, where the second inequality is by counting "standard" bits (it may be a strict inequality due to potential carry). The other direction is similar: $|\phi(m)| \le \norm{n_{+} + 2^h} + \norm{n_{-}} \le |\phi(n)| + 1$. Together: $\big | |\phi(n + 2^h)| - |\phi(n)| \big | \le 1$.
\end{proof}

\begin{theorem}
\label{theorem_bounds_signed_bits}
For any partition $P$, $\left\lceil \frac{|\phi(P)|+1}{2} \right\rceil \le \lambda(P) \le |\phi(P)|+1 - M(P)$.
\end{theorem}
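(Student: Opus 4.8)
The plan is to prove the two inequalities separately, using a potential argument for the lower bound and an explicit ``accumulator'' construction for the upper bound.

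\textbf{Lower bound.} I would use $\Phi(P) = |\phi(P)| = \sum_{i\ge 1}|\phi(p_i)|$ as a potential. By Lemma~\ref{lemma_signed_bits_cancellation}, adding or subtracting a single power of two from a part changes that part's signed-bit count by at most $1$. Hence a transaction $\TRANSACT{i}{2^\ell}{j}$ between two real targets ($i,j\ge 1$) changes $\Phi$ by at most $2$, whereas a transaction involving target $0$ touches only one real part and changes $\Phi$ by at most $1$. Any zeroing sequence must drive $\Phi$ from $|\phi(P)|$ down to $0$, and since the total weight $2^W>0$ must leave targets $1,\ldots,k$, at least one transaction involves target $0$. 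Writing $a$ for the number of transactions between real targets and $b\ge 1$ for those involving target $0$, we get $|\phi(P)|\le 2a+b$, so $\lambda(P)=a+b\ge \tfrac{|\phi(P)|+b}{2}\ge \tfrac{|\phi(P)|+1}{2}$, and rounding up the integer $a+b$ gives the claimed lower bound.

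\textbf{Upper bound.} I would first rewrite the target as $\lambda(P)\le \left(\sum_{i\ne i^*}|\phi(p_i)|\right)+1$, where $i^*$ is a part achieving $|\phi(p_{i^*})|=M(P)$. The idea is to treat $i^*$ as an accumulator: empty every other part into $i^*$ and finish with a single dump $\TRANSACT{i^*}{2^W}{0}$. To empty a part $p_j$ I spend exactly $|\phi(p_j)|$ transactions, one per nonzero signed-bit: for a $+2^\ell$ I move $2^\ell$ from $j$ into $i^*$ via $\TRANSACT{j}{2^\ell}{i^*}$, and for a $-2^\ell$ I move $2^\ell$ from $i^*$ into $j$ via $\TRANSACT{i^*}{2^\ell}{j}$. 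Because the canonical representation has non-adjacent nonzeros (Definition~\ref{definition_signed_bits}), each such move deletes exactly one signed-bit of $p_j$ without a carry that disturbs the others, and the net weight transferred to $i^*$ equals $p_j$. After all parts $j\ne i^*$ are emptied we have $p_{i^*}=2^W$, so the total cost is $\sum_{i\ne i^*}|\phi(p_i)|+1=|\phi(P)|+1-M(P)$.

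\textbf{Main obstacle.} The counting above is immediate; the real work is to certify that this construction is a legal zeroing sequence, i.e.\ that it corresponds to an actual LPM TCAM and, in particular, that no intermediate part value ever becomes negative. For the part being emptied I would cancel its signed-bits from least to most significant, so that after clearing the levels below $\ell$ the current value of $p_j$ equals the integer formed by its signed-bits at levels $\ge \ell$; using the non-adjacency of canonical signed-bits (cf.\ Property~\ref{property_signed_bits_most_sparse}) one checks this ``high part'' is always non-negative and is at least $2^\ell$ whenever level $\ell$ carries a $+1$, so every outgoing move is feasible and $p_j$ stays non-negative. The delicate case is the accumulator, since canceling a $-2^\ell$ bit withdraws $2^\ell$ from $i^*$ and $i^*$ need not be the largest part. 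Here I would exploit that a part whose top signed-bit lies at level $m$ has value at least $\tfrac{2}{3}2^{m}$, which bounds the temporary overdraft incurred while emptying $p_j$ by $p_j$ itself; processing the parts in an order that keeps the accumulator ahead of these overdrafts, or equivalently running the whole construction in reverse as a splitting process that starts from all $2^W$ concentrated at $i^*$, keeps $i^*$ non-negative throughout. Making this feasibility argument airtight, rather than the arithmetic of the count, is the step I expect to be the main obstacle.
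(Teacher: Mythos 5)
Your proposal is correct and follows essentially the same route as the paper's proof: your lower bound is the paper's counting argument via Lemma~\ref{lemma_signed_bits_cancellation} (the paper phrases it by appending a virtual coordinate $x_0=-2^W$, so that \emph{every} transaction, including those touching target $0$, kills at most two of the $|\phi(P)|+1$ signed bits; your separate accounting of the $b\ge 1$ target-$0$ transactions is an equivalent bookkeeping), and your upper bound is exactly the paper's anchor construction with the identical count $|\phi(P)|+1-M(P)$.

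The one substantive difference is your ``main obstacle,'' which is not an obstacle in the paper's framework. Definition~\ref{def_transaction} and Definition~\ref{definition_length_of_partition} define $\lambda(P)$ as the length of a shortest sequence of power-of-two transactions that zeroes $P$, with no requirement that intermediate values stay non-negative and no ordering constraint; the identification of this combinatorial quantity with the minimum LPM TCAM size is imported wholesale from \cite{BitMatcher}, not re-proved in this theorem. Hence exhibiting the multiset of transactions and verifying its cardinality already establishes the upper bound, and the paper's proof accordingly never even fixes an order of the transactions. Your instinct that ordering is delicate is nonetheless sound: for instance, for $W=7$ and $P=[11,11,11,11,11,11,11,11,11,29]$ (every part has $|\phi(p_i)|=3$, anchor $p_1=11$), the natural level-by-level schedule leaves the anchor holding $4$ after level $0$ while owing $9\cdot 4=36$ at level $2$, so it goes negative, and one must instead interleave parts (e.g., fully cash in one part's $+16$ before paying the next part's $\overline{1}$-bits) to stay non-negative. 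So if non-negativity were part of the definition, the feasibility argument you sketch would indeed require real work; under the paper's definitions it is simply not needed, and the rest of your proof stands on its own.
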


\begin{proof}
Define the vector $X$ such that $x_0 = -2^W$ and for $i \in [k]$: $x_i = p_i$. Then every sequence $s$ that zeroes $P$ also zeroes $X$ because the excess of $2^W$ weight from indices $1$ to $k$ is transacted to index $0$. Each transaction $\TRANSACT{i}{2^\ell}{j}$ has a size that is a power of two, so by Lemma~\ref{lemma_signed_bits_cancellation} it can decrease the total number of non-zero signed-bits in the representation of $X$ by at most $2$, one in the representation of $x_i$ and one in the representation of $x_j$. We need to zero $|\phi(X)| = |\phi(P)| + |\phi(-2^W)| = |\phi(P)|+1$ signed-bits, so any sequence that zeroes $X$ must have at least $\frac{|\phi(P)|+1}{2}$ transactions. This proves that $\ceil{\frac{|\phi(P)|+1}{2}} \le \lambda(P)$.

let $j = argmax_{i \in [k]}{|\phi(x_i)|}$ be the ``anchor index''. We can zero its bits ``for free'' by taking care of all other indices as follows: for $i \ne j$, if $\phi(x_i)[\ell] = 1$ we apply the transaction $\TRANSACT{i}{2^\ell}{j}$, and if $\phi(x_i)[\ell] = \overline{1}$ we apply the transaction $\TRANSACT{j}{2^\ell}{i}$. These transactions zero every $x_i$ for $i \ne j$, and since $\sum_{i=0}^{k}{x_i} = 0$ is an invariant, we get that the sequence also zeroes $x_j$. Overall, this sequence requires $|\phi(P)|+1 - M(P)$ transactions, and this proves $\lambda(P) \le |\phi(P)|+1 - M(P)$.
\end{proof}

\begin{remark}
\label{remark_approximation}
\newtext{Observe that the upper and lower bounds are tight up to a factor of $2$. This means that we could derive a $2$-approximation algorithm by generating the sequence in the proof of the upper bound and convert it into an LPM TCAM: The anchor is the target of the match-all rule, and we construct the other rules one per transaction, in order of non-increasing size of the transactions, such that addresses are always "taken" from/to the anchor by the other targets.}
\end{remark}

\begin{remark}
\label{remark_improved_phi_upper_bound}
The upper bound in Theorem~\ref{theorem_bounds_signed_bits} can be improved further, by changing the anchor index throughout the levels. If $\phi(x_{j_0})$ is dense in non-zero signed-bits in the lower levels, say up to level $d_0$, then we can pick $j_0$ as the anchor and generate the transactions of sizes at most $2^{d_0}$ against it. Then, we can decide to change the anchor to $j_1$ for all the transactions up to size $2^{d_1}$ (for $d_1 > d_0)$, and so on. In every anchor-change, we might have to add another transaction due to carry that accumulates in the current anchor. Because of this carry, it is not necessarily beneficial to switch anchors too frequently. However, it is likely that by dividing $W$ into more than one bulk of levels we can improve the upper bound. That being said, the expression $|\phi(P)|+1-M(P)$ has a simple closed-form and is already a $2$-approximation.
\end{remark}

\begin{theorem}
\label{theorem_tight_signed_bits_bounds}
The bounds of Theorem~\ref{theorem_bounds_signed_bits} are tight. More formally, there exist partitions such that:
\begin{enumerate}[label=(\arabic*)]
    \item $\ceil{\frac{|\phi(P)|+1}{2}} < \lambda(P) = |\phi(P)|+1 - M(P)$ (tight UB).
    \item $\ceil{\frac{|\phi(P)|+1}{2}} = \lambda(P) < |\phi(P)|+1 - M(P)$ (tight LB).
    \item $\ceil{\frac{|\phi(P)|+1}{2}} = \lambda(P) = |\phi(P)|+1 - M(P)$ (sandwiched). In particular, for $k=2$ this is always the case.
\end{enumerate}
\end{theorem}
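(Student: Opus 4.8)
The plan is to control the gap between the two bounds of Theorem~\ref{theorem_bounds_signed_bits} through the single quantity $M(P)$, and then to pin $\lambda(P)$ to the desired endpoint by a supplementary argument that Theorem~\ref{theorem_bounds_signed_bits} alone does not supply. Writing $S = |\phi(P)|$ and using the identity $\ceil{(S+1)/2} = \floor{S/2}+1$, the upper minus lower bound equals $g = (S+1-M(P)) - \ceil{(S+1)/2} = \ceil{S/2} - M(P) \ge 0$. Hence the bounds coincide (the sandwiched case) precisely when $M(P) = \ceil{S/2}$, and there is room for a strict inequality exactly when $M(P) < \ceil{S/2}$, i.e. when no single part carries roughly half of all the signed-bits. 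Accordingly, I would build partitions with a prescribed relationship between $M(P)$ and $S$, and then decide on which side $\lambda(P)$ falls.

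For item (3) and the claim that $k=2$ is always sandwiched, I would argue as follows. For $k=2$ we have $p_2 = 2^W - p_1$, so applying Lemma~\ref{lemma_signed_bits_cancellation} with $n=-p_1$ and $h=W$, together with the symmetry $|\phi(-p_1)| = |\phi(p_1)|$ of the canonical signed-bits representation, yields $\big||\phi(p_2)|-|\phi(p_1)|\big| \le 1$. When $|\phi(p_1)|$ and $|\phi(p_2)|$ differ by at most one, their maximum equals $\ceil{S/2}$, so $M(P)=\ceil{S/2}$, $g=0$, and $\lambda(P)$ is forced to the common value. Any two-part partition therefore serves as a sandwiched example, which proves item (3).

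For item (1) (tight upper bound, strict lower bound) the difficulty is that Theorem~\ref{theorem_bounds_signed_bits} supplies only the weak lower bound $\ceil{(S+1)/2}$, which is strictly below $|\phi(P)|+1-M(P)$; I need an independent lower bound that matches the upper bound. The trick is to make the upper bound collapse onto the trivial bound $\lambda(P)\ge k$ of Remark~\ref{remark_size_lower_bound_k}: take any partition of $2^W$ into $k\ge 3$ powers of two, for example $P=[2^{W-1},2^{W-2},\ldots,2^{W-k+1},2^{W-k+1}]$ (whose entries telescope to $2^W$). Then every part contributes a single signed-bit, so $S=k$, $M(P)=1$, the upper bound equals $k$, and the lower bound equals $\ceil{(k+1)/2}<k$. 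Remark~\ref{remark_size_lower_bound_k} gives $\lambda(P)\ge k$ while Theorem~\ref{theorem_bounds_signed_bits} gives $\lambda(P)\le k$, so $\lambda(P)=k$ meets the upper bound while exceeding the lower one. (Alternatively one may invoke the exact value $\lambda=W+1$ of the $k=3$ worst case $[x,x,x+1]$ in Theorem~\ref{theorem_example_worst_case_k_3}, for which the upper bound is $W+1$ as well.)

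For item (2) (tight lower bound, strict upper bound) I would instead exhibit an explicit optimal sequence of length $\ceil{(S+1)/2}$. Take $P=[2^a-1,\,1,\,2^W-2^a]$ with $2\le a\le W-2$, so the two outer parts have two signed-bits each and the middle one has a single one, giving $S=5$, $M(P)=2$, lower bound $3$ and upper bound $4$. The sequence $\TRANSACT{2}{1}{1}$, $\TRANSACT{1}{2^a}{3}$, $\TRANSACT{3}{2^W}{0}$ zeroes $P$ in three power-of-two transactions: the first turns $2^a-1$ into $2^a$ and empties the middle part, the second moves $2^a$ so that the last part becomes $2^W$, and the third clears it. Since the lower bound of Theorem~\ref{theorem_bounds_signed_bits} gives $\lambda(P)\ge 3$, we conclude $\lambda(P)=3$, matching the lower bound and strictly below the upper one. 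The main obstacle throughout is the upper-tight case: unlike item (2), where a constructed sequence certifies optimality directly, item (1) requires a matching lower bound that Theorem~\ref{theorem_bounds_signed_bits} cannot provide, which is exactly why the construction is engineered so that the elementary bound $\lambda(P)\ge k$ already reaches the upper bound.
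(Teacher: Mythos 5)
Your proposal is correct, and for items (2) and (3) it essentially mirrors the paper's proof: for (3) you use exactly the paper's argument (Lemma~\ref{lemma_signed_bits_cancellation} applied to $p_2 = 2^W - p_1$ plus the symmetry $|\phi(-n)|=|\phi(n)|$, forcing $\big||\phi(p_1)|-|\phi(p_2)|\big|\le 1$ and hence equal bounds), and for (2) you, like the paper, certify optimality by exhibiting an explicit power-of-two transaction sequence whose length matches the lower bound (the paper uses $[1,3,12]$; your parameterized family $[2^a-1,\,1,\,2^W-2^a]$ is equally valid). Where you genuinely diverge is item (1): the paper takes $P=[5,5,5,1]$ and justifies $\lambda(P)=|\phi(P)|+1-M(P)$ by the (somewhat informally stated) principle that when no part has a $\overline{1}$ coefficient, each transaction except the last zeroes at most one signed bit, with Bit Matcher's optimality implicitly supplying the matching lower bound. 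You instead engineer the partition (all parts powers of two, $k\ge 3$) so that the signed-bits upper bound collapses to $k$, and then invoke the elementary bound $\lambda(P)\ge n(P)\ge k$ of Remark~\ref{remark_size_lower_bound_k} as the lower-bound certificate. This is a clean, fully rigorous trick that avoids any appeal to the optimality of Bit Matcher or to per-transaction accounting; its only cost is that your witness is degenerate in the sense that the upper bound is tight exactly where it coincides with the trivial bound $k$, whereas the paper's example shows the upper bound can be attained well above $k$ (there $\lambda=6>k=4$). Your opening observation that the gap between the two bounds equals $\ceil{|\phi(P)|/2}-M(P)$, so that the sandwiched case occurs precisely when $M(P)=\ceil{|\phi(P)|/2}$, is a nice structural remark not made explicit in the paper. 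Minor caveats you implicitly handle but should state: your item (1) family needs $k\le W+1$ so all parts are positive powers of two, and your item (2) family needs $2\le a\le W-2$ so that both outer parts contribute exactly two signed bits.
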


\begin{proof}
The upper bound is tight for any partition in which each transaction except the last can zero at most one signed-bit. For example, any partition where the signed-bits representation of every weight does not have any coefficient that is $\overline{1}$, e.g. $P = [5,5,5,1]$ (with $W=4$) has $\lambda(P) = 6$ by $\TRANSACT{4}{1}{3}\TRANSACT{2}{1}{1}\TRANSACT{3}{2}{1}\TRANSACT{3}{4}{2}\TRANSACT{2}{8}{1}\TRANSACT{1}{16}{0}$. Since $\phi(P) = [101,101,101,1]$, we have $|\phi(P)| = 7$, $M(P) = 2$. Here the lower bound is not tight.

The lower bound is tight for any partition in which each transaction zeroes two signed-bits, e.g. $P = [1,3,12]$ for which $\lambda(P) = 3$ by $\TRANSACT{1}{1}{2}\TRANSACT{2}{4}{3}\TRANSACT{3}{16}{0}$. Since $\phi(P) = [1, 10\overline{1},10\overline{1}00]$, we have $|\phi(P)| = 5$, $M(P) = 2$. Here the upper bound is not tight.

The partitions in which the bounds are equal are those where a single weight participates in all the transactions, the anchor, and each transaction zeroes two signed-bits. For example, $P = [15,4,45]$ (with $W=6$): $\lambda(P) = 4$ by $\TRANSACT{3}{1}{1}\TRANSACT{2}{4}{3}\TRANSACT{1}{16}{3}\TRANSACT{3}{64}{0}$. Since $\phi(P) = [1000\overline{1},100,10\overline{1}0\overline{1}01]$, we have $|\phi(P)| = 7$, $M(P) = 4$. In particular, when $k=2$ this is always the case (originally shown by \cite{AccurateExp}): Denote $m_i = |\phi(p_i)|$ and $m = \min(m_1,m_2)$, then the upper bound is $|\phi(P)|+1-M(P) = m+1$. Because $p_2 = 2^W - p_1$ and $|\phi(p_2)| = |\phi(-p_2)|$, by Lemma~\ref{lemma_signed_bits_cancellation}, $|m_1 - m_2| \le 1$, so no matter whether $m_1 = m_2 = m$ or $|m_1 - m_2| = 1$, we get that the lower bound also equals $\ceil{\frac{|\phi(P)|+1}{2}} = m + 1$.
\end{proof}

\newtext{The previous theorems dealt with LPM TCAMs. Clearly the upper bound holds also for general TCAMs.} Theorem~\ref{theorem_non_lpm_lower_bound_signed_bits} below gives a lower bound on the minimal size of a general TCAM.

\begin{lemma}
\label{lemma_intersection_always_power_of_two}
\newtext{Let $r_1,\ldots,r_m$ be $m$ patterns of rules (over $\{0,1,*\}^W$).
The number of addresses that satisfy all $m$ rules 
 is a power of $2$, or $0$.}
\end{lemma}

\begin{proof}
\newtext{Consider first the case $m = 2$. If there is a bit in which $r_1$ is $0$ and $r_2$ is $1$ or vice versa then there are no addresses that satisfy both $r_1$ and $r_2$. Otherwise, an address satisfies both $r_1$ and $r_2$ if and only if it also satisfies a single pattern that (a) has a don't-care at position $i$ if both $r_1$ and $r_2$ have don't-cares at position $i$; (b) it has $0$ at position $i$ if either one of $r_1$ and $r_2$ has a $0$ at position $i$ and $1$ if either one of $r_1$ and $r_2$ has a $1$ at position $i$.}

\newtext{If $m>2$, we can apply the argument in the previous paragraph to replace $r_1$ and $r_2$ by a single pattern without changing the set of addresses that satisfy all rules. We repeat this argument until either we identify that the intersection is empty or we end up with a single pattern. The size of the intersection is then $2^h$ where $h$ is the number of don't-care bits in the final pattern.}
\end{proof}

\begin{theorem}
\label{theorem_non_lpm_lower_bound_signed_bits}
Let $P$ be a partition, and without loss of generality assume that $|\phi(p_1)| \ge |\phi(p_2)| \ge \ldots \ge |\phi(p_k)|$. Then $n(P) \ge \max_{i=1,\ldots,k}{\lg ( |\phi(p_i)| + 1) + i - 1}$.
\end{theorem}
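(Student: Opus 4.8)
The plan is to bound, for each target $i$, its signed-bits complexity $|\phi(p_i)|$ by a quantity exponential in the number of rules that can \emph{influence} target $i$, and then to combine these per-target bounds using a counting argument on the positions of the rules. Fix an optimal general TCAM with $N=n(P)$ rules $r_1,\ldots,r_N$ in priority order; write $A_m$ for the set of addresses matching the pattern of $r_m$, and $t_m$ for its target. For a target $i$ let $M_i$ be the largest index $m$ with $t_m=i$; this is well defined since $p_i>0$ forces target $i$ to own at least one rule. The addresses assigned to $i$ depend only on $r_1,\ldots,r_{M_i}$, because every later rule has target $\ne i$ and can only capture addresses that were already not assigned to $i$.

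The key step I would prove first is the lemma $|\phi(p_i)|\le 2^{M_i}-1$. For an address set $B$ obtained as an intersection of rule patterns, and for a priority-ordered list of $c$ rules, let $g(B,i)$ count the addresses of $B$ whose first matching rule in the list has target $i$, and let $g_{T'}$ denote the same count for the list $T'$ with the top rule removed. Peeling the top rule $r_1$ and splitting $B=(B\cap A_1)\sqcup(B\setminus A_1)$ gives
\[
 g(B,i) \;=\; [\,t_1=i\,]\cdot|B\cap A_1| \;+\; g_{T'}(B,i) \;-\; g_{T'}(B\cap A_1,i).
\]
Both $B$ and $B\cap A_1$ are intersections of patterns, so by Lemma~\ref{lemma_intersection_always_power_of_two} the term $|B\cap A_1|$ is a power of two or $0$, hence has $|\phi(\cdot)|\le 1$. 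Using that $|\phi(\cdot)|$ is subadditive under both $+$ and $-$ (which follows from Property~\ref{property_signed_bits_most_sparse}: writing $x=x_{+}-x_{-}$ and $y=y_{+}-y_{-}$ one gets $|\phi(x\pm y)|\le\norm{x_{+}}+\norm{x_{-}}+\norm{y_{+}}+\norm{y_{-}}=|\phi(x)|+|\phi(y)|$), the maximum $h(c)$ of $|\phi(g(B,i))|$ over all pattern-intersections $B$, targets $i$, and $c$-rule lists obeys $h(c)\le 2h(c-1)+1$ with $h(0)=0$, so $h(c)\le 2^{c}-1$. Applying this to the prefix $r_1,\ldots,r_{M_i}$ on $B=\{0,1\}^W$ yields $|\phi(p_i)|\le 2^{M_i}-1$.

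The second step is the counting. Since the rule at position $M_i$ has target $i$, distinct targets have distinct values $M_i$, so $M_1,\ldots,M_k$ are $k$ distinct integers in $\{1,\ldots,N\}$. Assume the targets are ordered so that $|\phi(p_1)|\ge\cdots\ge|\phi(p_k)|$, and fix $i$. Each of the targets $1,\ldots,i$ satisfies $|\phi(p_j)|\ge|\phi(p_i)|$, so by the key lemma $M_j\ge\lg(|\phi(p_j)|+1)\ge\lg(|\phi(p_i)|+1)$. Thus $M_1,\ldots,M_i$ are $i$ distinct integers all at least $\lg(|\phi(p_i)|+1)$, whence their maximum is at least $\lg(|\phi(p_i)|+1)+i-1$; since every $M_j\le N$, we conclude $n(P)=N\ge\lg(|\phi(p_i)|+1)+i-1$. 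Taking the maximum over $i$ gives the stated bound.

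The hard part will be the key lemma, and in particular keeping the recursion inside the class of pattern-intersections: the subtlety is that $B\setminus A_1$ is not itself such a set, which is exactly why I rewrite it as $g_{T'}(B,i)-g_{T'}(B\cap A_1,i)$ so that Lemma~\ref{lemma_intersection_always_power_of_two} continues to apply and the branching factor of the recursion is exactly $2$ (yielding the clean bound $2^{c}-1$). Everything after that—the subadditivity of $|\phi|$ extracted from Property~\ref{property_signed_bits_most_sparse}, the distinctness of the positions $M_i$, and the pigeonhole-style argument that $i$ distinct integers above a threshold force a maximum that is $i-1$ larger than the threshold—I expect to be routine.
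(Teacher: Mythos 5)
Your proposal is correct and follows essentially the same route as the paper's proof: the same key lemma ($|\phi(p_i)| \le 2^{M_i}-1$, where $M_i$ is the position of target $i$'s lowest-priority rule), proved via the power-of-two intersection property of Lemma~\ref{lemma_intersection_always_power_of_two}, combined with the same counting idea on the distinct positions $M_1,\ldots,M_k$. The only differences are presentational: your peeling recursion $h(c)\le 2h(c-1)+1$ is an unrolled (and arguably more rigorous) version of the paper's inclusion-exclusion expansion, and your direct pigeonhole on $i$ distinct integers replaces the paper's equivalent permutation-minimization step.
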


\begin{proof}
\newtext{Consider a set of TCAM rules for $P$.
Let $r_j$ be the rule of target $j$ with least priority. Denote the number of rules below (lower priority) $r_j$ by $x_j$ and the the number of rules above $r_j$, including $r_j$, by $m_j$. Let $f(j) = \lg (|\phi(p_j)| + 1)$. We  prove that $m_j \ge f(j)$ and this implies the lemma.
Indeed, the least priority rule of each target induce a permutation $\pi$ on the targets. That is $\pi(j)$ is the number of targets whose least priority rule is below (or equal) the least priority rule of target $j$.
It follows that the size of the TCAM is at least $ \max_{j=1,\ldots,k}{(f(j) + \pi(j) - 1)}$. This expression is minimized by the identity permutation because $f$ is non-increasing, so $\pi$ should be non-decreasing to minimize this.} 

\newtext{Now we prove $m_j \ge f(j)$. Fix a target $j$ and look at the $m_j$ rules above and including its lowest-priority rule. These $m_j$ rules define up to $2^{m_j} - 1$ non-empty intersections of subsets of rules, that include at least one rule of target $j$.
An intersection of each such  subset $S$ is identified with the target that belongs to the rule of highest priority in $S$.
Since
these $m_j$ rules define the  $p_j$ addresses that are allocated to target $j$ we must get $p_j$ by the following process.
First, go over all the intersections of a single rule, and add the number of addresses that they cover if their target is $j$. Then, go over intersections of size $2$, and subtract their size if they represent over-counting of addresses: either because they are the intersection of two rules with target $j$, or because this intersection has targets $i,j$ for $i \ne j$ and is associated with $i$ (higher priority). Then go over intersections of size $3$, and add their sizes in case that we over-subtracted them, and so on. In general, this process is done according to the inclusion-exclusion principle.}

\newtext{By Lemma~\ref{lemma_intersection_always_power_of_two} each addition or subtraction in this process is of a  power of two, so the final number that we get, which is $p_j$, has at most $2^{m_j} - 1$ signed-bits in its signed-bits representation. This is an upper bound on $|\phi(p_j)|$ because as noted some intersections may not correspond to additions or subtractions, and also there may be cancellations (add $2^h$ and subtract $2^h$) or carry (e.g.\ adding $2^h$ twice is equivalent to adding $2^{h+1}$ once). Therefore, we conclude that $|\phi(p_j)| \le 2^{m_j} - 1$. Extracting $m_j$ we get: $\lg (|\phi(p_j)|+1) \le m_j$.}
\end{proof}

We note that the lower bound of Theorem~\ref{theorem_non_lpm_lower_bound_signed_bits} is likely very loose for many partitions, \newtext{because of the strong assumption that all possible intersections are non-empty may be false, as well as the ``wishful'' scenario such that every signed-bit of $p_i$ will correspond to a unique intersection, etc.} However, this lower bound enables us to determine hard-partitions even for general TCAMs, despite the fact that no feasible algorithm to compute or approximate such TCAMs exist.

\begin{example}
\label{example_non_prefix_rules}
Consider the partition $P = [683,341]$ from Remark~\ref{remark_non_prefix_remark}. In signed-bits, we have $p_1 = 10\overline{1}0\overline{1}0\overline{1}0\overline{1}0\overline{1}$ and $p_2 = 101010101$. Then $|\phi(p_1)| = 6$ and $|\phi(p_2)| = 5$, and by Theorem~\ref{theorem_non_lpm_lower_bound_signed_bits} we get $n(P) \ge \max(\lg(7),\lg(6)+1) \Rightarrow n(P) \ge 4$. This means that every TCAM that represents $P$ must have at least $4$ rules. \newtext{In Remark~\ref{remark_non_prefix_remark} we showed that $n(P) \le 5$.}
\end{example}

\begin{theorem}
\label{theorem_tight_signed_bits_lower_bound_general_tcam}
\newtext{The lower bound of Theorem~\ref{theorem_non_lpm_lower_bound_signed_bits} cannot be made tighter (larger) in general. That is, there are partitions such that $n(P) = \max_{i=1,\ldots,k}{\lg ( |\phi(p_i)| + 1) + i - 1}$.}
\end{theorem}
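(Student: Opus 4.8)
The lower bound $n(P)\ge \max_{i}\big(\lg(|\phi(p_i)|+1)+i-1\big)$ is already supplied by Theorem~\ref{theorem_non_lpm_lower_bound_signed_bits}, so the entire task is to construct partitions together with matching general TCAMs that attain it. The plan is to reverse-engineer the inclusion--exclusion argument in the proof of Theorem~\ref{theorem_non_lpm_lower_bound_signed_bits}: there, for a target $j$ whose lowest-priority rule sees $m_j$ rules above (and including) it, the count $p_j$ is a $\pm$-sum over the $2^{m_j}-1$ nonempty subset-intersections of those rules, and the key inequality $|\phi(p_j)|\le 2^{m_j}-1$ comes from Lemma~\ref{lemma_intersection_always_power_of_two} (each intersection is a power of $2$ or empty). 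Equality holds precisely when this inclusion--exclusion is \emph{lossless}: all $2^{m_j}-1$ intersections are nonempty, they are \emph{distinct} powers of $2$, and no carries or cancellations occur when the signed terms are combined, so the result already sits in canonical signed-bits form with exactly $2^{m_j}-1$ nonzero signed-bits. I would therefore build a ``rich'' target realizing this lossless scenario and then stack such gadgets so every sorted term of the maximum is simultaneously saturated.

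The core gadget uses $m$ general rules over a Venn-style partition of the $W$ bit positions into regions $R_S$, one per nonempty $S\subseteq[m]$, of sizes $n_S\ge 0$. Rule $r_t$ (all rules of this gadget pointing to the same rich target) has don't-cares exactly on $\bigcup_{S\ni t}R_S$ and $0$ elsewhere; then an address satisfies all rules in a subset $T$ iff its bits outside $\bigcup_{t\in T}B_t$ vanish, so the intersection of $T$ has exactly $2^{E(T)}$ addresses with $E(T)=\sum_{S\supseteq T}n_S$. Choosing the $n_S$ so that the $2^m-1$ exponents $E(T)$ are all distinct and pairwise at least $2$ apart forces the inclusion--exclusion to be lossless and canonical, yielding a target count $p$ with $|\phi(p)|=2^m-1$ produced by $m$ rules. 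For $m=2$ this is already verified by $P=[45,19]$ with $W=6$: the rules $\{{*}{*}{*}{*}00\to 1,\ 0000{*}{*}\to 1,\ {*}{*}{*}{*}{*}{*}\to 2\}$ give counts $19$ and $45$, and since $|\phi(19)|=3$, $|\phi(45)|=4$, the bound evaluates to $\max(\lg 5,\lg 4+1)=3=n(P)$ (one checks directly that two rules cannot realize $[45,19]$). For $m=3$, the region sizes $n_{\{1\}}=n_{\{2\}}=n_{\{3\}}=2$, $n_{\{12\}}=6$, $n_{\{13\}}=4$, $n_{\{23\}}=2$, $n_{\{123\}}=0$ (so $W=18$) produce exponents $12,10,8,6,4,2,0$ and a rich count with $|\phi|=2^3-1=7$.

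To assemble the family I would place the rich gadget at the bottom of the priority order, using one final match-all rule for the complementary weight, and, when a larger $k$ is desired, prepend additional powers-of-two targets (each a single prefix rule, contributing $|\phi|=1$). After sorting by $|\phi|$ descending, each term $\lg(|\phi(p_i)|+1)+i-1$ is then matched by the construction's rule budget, giving $n(P)$ equal to the maximum; this is the same recipe that underlies Corollary~\ref{corollary_lower_bound_non_lpm_size}, which combines a single rich target (the $\lg$ term) with several cheap targets (the $+i$ term). Counting rules and verifying the sorted maximum equals the total against the Theorem~\ref{theorem_non_lpm_lower_bound_signed_bits} lower bound closes the argument.

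The main obstacle is purely combinatorial: guaranteeing that nonnegative region sizes $n_S$ can be chosen so that all $2^m-1$ subset-sums $E(T)=\sum_{S\supseteq T}n_S$ are distinct and at least $2$ apart. Since the map $(n_S)\mapsto(E(T))$ is the integer zeta transform over the subset lattice (hence invertible), and the spacing constraint only asks the image to land on a sufficiently separated, monotone tuple, a direct greedy choice of the $n_S$ suffices; I expect to verify this by induction on $m$, extending a valid separated assignment for $[m-1]$ to $[m]$. The reason naive designs fail---and why the Venn structure is essential---is that disjoint wildcard blocks collapse every pairwise intersection to a single address, destroying the distinctness required for losslessness; ensuring the $\ge 2$ spacing (which rules out carries and keeps the form canonical) is exactly what converts the Theorem~\ref{theorem_non_lpm_lower_bound_signed_bits} inequality into the claimed equality.
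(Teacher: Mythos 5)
Your proposal is correct, and its load-bearing content is essentially the paper's own proof: the paper likewise exhibits a single two-target example, $P=[21,11]$ with $W=5$, realized by the three general rules $\{{*}{*}{0}{0}{0}\to 2,\ {0}{0}{*}{*}{*}\to 2,\ {*}{*}{*}{*}{*}\to 1\}$ --- two overlapping rules for one target whose inclusion--exclusion count $2^2+2^3-2^0=11$ satisfies $|\phi(11)|=3$, plus a match-all rule. Your $[45,19]$, $W=6$ paragraph is exactly this pattern and is by itself a complete, self-contained proof of the (existential) statement, since the bound evaluates to $3$ and your three rules attain it.

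Two warnings about the extensions you build around that example, one of which is a genuine flaw. First, for the general-$m$ two-target gadget, getting $|\phi(p_A)|=2^m-1$ is not enough for equality: you also need $|\phi(2^W-p_A)|\ge 2^m-1$, which requires headroom between $W$ and the top exponent $E_{\max}$ (e.g.\ $W\ge E_{\max}+2$). If instead $|\phi(2^W-p_A)|=2^m-2$ (possible by Lemma~\ref{lemma_signed_bits_cancellation}, e.g.\ $p_A=23$, $W=5$, $p_B=9$), the maximum becomes $\lg(2^m-1)+1$, which is strictly between $m$ and $m+1$, hence non-integral and strictly below $n(P)$, so equality fails. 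Second, and more seriously, the stacking for $k\ge 3$ does not work: prepending $k-2$ single-rule power-of-two targets yields a TCAM with $(k-2)+m+1=k+m-1$ rules, while the sorted maximum is only $\max\bigl(m+1,\,k\bigr)$ (the cheap targets occupy positions $3,\ldots,k$ and contribute the terms $3,\ldots,k$); for $k\ge 3$ and $m\ge 2$ we have $k+m-1>\max(m+1,k)$, so the construction's rule count strictly exceeds the lower bound and equality is never established. (Relatedly, Corollary~\ref{corollary_lower_bound_non_lpm_size} uses $k$ equally rich targets, not one rich target plus cheap ones.) If you want a witness for every $k$, no gadget is needed: any partition into $k$ powers of two has bound $\max_i(\lg 2+i-1)=k$ and $n(P)=k$ by Remark~\ref{remark_size_lower_bound_k} plus a prefix-code realization.
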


\begin{proof}
\newtext{As a simple example, consider $P = [21,11]$, which has $k=2$, $W=5$, and its signed-bits representation is $[10101,10\overline{1}0\overline{1}]$. By Theorem~\ref{theorem_non_lpm_lower_bound_signed_bits}, $n(P) \ge 3$. On the other hand, $n(P) \le 3$ by the following set of rules: $ \{{*}{*}{0}{0}{0} \to 2 , {0}{0}{*}{*}{*} \to 2 , {*}{*}{*}{*}{*} \to 1 \}$.}
\end{proof}

\begin{corollary}
\label{corollary_lower_bound_non_lpm_size}
There exists a partition $P$ of $2^W$ to $k$ parts that satisfies $n(P) \ge \lg (W - \ceil{\lg k} + 3) + k-2$.
\end{corollary}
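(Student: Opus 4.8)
The plan is to exhibit one explicit partition $P$ of $2^W$ into $k$ parts for which the general lower bound of Theorem~\ref{theorem_non_lpm_lower_bound_signed_bits} is already as large as claimed. The leverage is to read that bound at the index $i=k$: with the parts sorted so that $|\phi(p_1)| \ge \cdots \ge |\phi(p_k)|$, the $i=k$ term is $\lg(|\phi(p_k)|+1)+k-1$, where $|\phi(p_k)|$ is the \emph{smallest} signed-bit count among the parts. Thus the whole task reduces to building a partition in which \emph{every} part is signed-bit dense. Quantitatively, if every part satisfies $|\phi(p_i)| \ge \tfrac{1}{2}(W-\ceil{\lg k}+1)$, then taking $i=k$ gives
\[
n(P) \;\ge\; \lg\!\Big(\tfrac{W-\ceil{\lg k}+3}{2}\Big)+k-1 \;=\; \lg\!\big(W-\ceil{\lg k}+3\big)+k-2 ,
\]
which is exactly the statement. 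So it suffices to construct a partition of $2^W$ into $k$ parts whose least dense part still carries at least $\tfrac{1}{2}(W-\ceil{\lg k}+1)$ non-zero signed-bits.

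For the construction I would make the $k$ parts nearly equal, each close to $2^{W}/k$, and choose their exact values to be alternating $0101\ldots$ patterns, which (as noted after Definition~\ref{definition_signed_bits}) are their own canonical signed-bit form and therefore carry about half as many non-zero signed-bits as their bit-length. A value near $2^{W-\floor{\lg k}}$ of this form has roughly $\tfrac{1}{2}(W-\ceil{\lg k})$ non-zero signed-bits, meeting the threshold. This is the same ``dense near $2^W/k$'' idea underlying the worst-case partitions of Theorem~\ref{theorem_example_worst_case_k_any}, except that here I must \emph{not} isolate any weight into a trivial part; instead all leftover weight is redistributed into further dense parts so that \emph{no} part has small complexity. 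Starting from $k$ equal dense values and correcting individual parts by top-level powers of two to force the total to be exactly $2^W$, the part of minimum complexity is then the binding one, and the displayed inequality applies with $i=k$.

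The main obstacle is the tightness of the density requirement combined with the exact-sum constraint. Because every part is bounded by roughly $2^{W}/k$, a number of that magnitude has at most about $\tfrac{1}{2}(W-\ceil{\lg k})$ non-zero signed-bits, so the \emph{smallest} part must be essentially \emph{maximally} dense; the parity of $W-\ceil{\lg k}$ and the rounding incurred while forcing the parts to sum to exactly $2^W$ can each cost a signed-bit, which would weaken the argument inside the logarithm. The real work is therefore the bookkeeping: pinning the parts down by explicit alternating-pattern formulas, verifying $\sum_i p_i = 2^W$ exactly (splitting into parity cases as in Theorems~\ref{theorem_example_worst_case_k_3} and~\ref{theorem_example_worst_case_k_any}), and checking from Definition~\ref{definition_signed_bits} that even the least dense part clears $\tfrac{1}{2}(W-\ceil{\lg k}+1)$, so that after the integer rounding of $n(P)$ the constant ``$+3$'' inside the logarithm is attained. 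Very small $k$ (where $m=\floor{(k-1)/3}$ degenerates) are covered directly by the explicit examples already exhibited, e.g.\ the $k=2$ partition $[683,341]$ of Remark~\ref{remark_non_prefix_remark} and Example~\ref{example_non_prefix_rules}.
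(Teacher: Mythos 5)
Your reduction is exactly the paper's: apply Theorem~\ref{theorem_non_lpm_lower_bound_signed_bits} at the index of the \emph{least} dense part, so that it suffices to exhibit a partition in which every part carries at least $\tfrac{1}{2}(W-\ceil{\lg k}+1)$ non-zero signed bits, and the arithmetic converting that threshold into $\lg(W-\ceil{\lg k}+3)+k-2$ is also the same. The gap is that you never produce the witness: for an existence statement, the ``bookkeeping'' you defer \emph{is} the proof, and your worry about it is well founded --- a part of magnitude about $2^{W-\ceil{\lg k}}$ can barely exceed the threshold, so an argument that loses even one signed bit to rounding or parity fails. The paper closes this not by careful rounding of nearly-equal alternating values, as you suggest, but by a pairing trick that loses nothing. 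Let $h = W-\ceil{\lg k}$ and $\Delta = \sum_{j=1}^{\floor{h/2}}2^{h-2j}$. Start from a partition $Q$ whose parts are all $2^h$ or $2^{h+1}$ (such $Q$ exists since $a+b=k$, $a+2b=2^{\ceil{\lg k}}$ has a non-negative integer solution), and perturb the parts by $\pm\Delta$ in cancelling pairs; for odd $k$ the leftover is handled by giving one part $+2\Delta$ and two parts $-\Delta$. The sum stays exactly $2^W$, and \emph{every} resulting part, whether $2^h\pm\Delta$, $2^{h+1}\pm\Delta$ or $2^{h+1}+2\Delta$, has exactly $\floor{h/2}+1$ non-zero signed bits (pattern $10101\ldots$ when adding $\Delta$, pattern $10\overline{1}0\overline{1}\ldots$ when subtracting), so the threshold is met with no parity loss at all. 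Theorem~\ref{theorem_non_lpm_lower_bound_signed_bits} then gives $n(P)\ge \lg\left(\floor{h/2}+2\right)+k-1\ge \lg(W-\ceil{\lg k}+3)+k-2$.

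Two smaller points. First, your fallback for small $k$ via the triplet parameter $\floor{\frac{k-1}{3}}$ of Theorem~\ref{theorem_example_worst_case_k_any} is a red herring: that machinery belongs to the worst-case $\lambda(P)$ construction and plays no role here; the $\pm\Delta$ construction above works uniformly for all $k\ge 2$. Second, your framing ``all parts near $2^W/k$'' cannot be taken literally when $k$ is not a power of two, which is precisely why the paper mixes parts of two sizes $2^h$ and $2^{h+1}$; this is the detail that makes the exact-sum constraint compatible with uniform density.
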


\begin{proof}
Let $h = W - \ceil{\lg k}$, and define an initial partition $Q$ of $2^W$ such that for every $i$, $q_i$ is either $2^h$ or $2^{h+1}$, and $\forall i: q_k \ge q_i$. Next define $\Delta = \sum_{j=1}^{\floor{h/2}}{2^{h-2j}}$ and perturb $Q$ to get $P$ as follows. If $k$ is even then $p_i = q_i + {(-1)}^i \Delta$. If $k$ is odd, then $q_k = 2^{h+1}$, we define $p_i = q_i + {(-1)}^i \Delta$ for $i \le k-2$, $p_{k-1} = q_{k-1} - \Delta$ and $p_k = q_k + 2\Delta$. One can verify that $\sum_{i=1}^{k}{p_i} = 2^W$. Observe that $|\phi(2^{h+1} \pm 2\Delta)| = |\phi(2^h \pm \Delta)| = |\phi(2^{h+1} \pm \Delta)| = \floor{\frac{h}{2}} + 1$. Thus $|\phi(p_i)| = \floor{\frac{h}{2}} + 1$ for all $i$, and by Theorem~\ref{theorem_non_lpm_lower_bound_signed_bits}: $n(P) \ge \lg (\floor{\frac{h}{2}} + 2) + k-1 \ge \lg (W - \ceil{\lg k} + 3) + k-2$.
\end{proof}

\begin{example}
Let $W = 7$ and $k = 5$. Then $h = 4$, $\Delta = 2^2 + 2^0 = 5$, and $Q = [16,16,32,32,32]$. Since $k$ is odd, we get $P = [11,21,27,27,42] = [10\overline{1}0\overline{1},10101,100\overline{1}0\overline{1},100\overline{1}0\overline{1},101010]$. $\forall i: \phi(p_i) = 3$, and $n(P) \ge 6$.
\end{example}

\section{Average-Case Analysis of LPM TCAM Size}
\label{section_average_case}
In this section we prove the following two theorems regarding the expected complexity (revisit Definition~\ref{definition_avg_length_of_partition}):

\begin{theorem}
\label{theorem_avg_case_rand_bits_rough}
$L(k) \in [\frac{1}{6},\frac{1}{5}]$ and $L(2) = \frac{1}{6}$.
\end{theorem}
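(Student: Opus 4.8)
The plan is to sandwich $\lambda(P)$ using Theorem~\ref{theorem_bounds_signed_bits}, via $\tfrac{|\phi(P)|+1}{2} \le \lambda(P)$ for the lower bound, and via the length of the sequence produced by a concrete matcher (the Random Matcher of Algorithm~\ref{alg_match_variants}) for the upper bound, since any valid matcher satisfies $\lambda(P) \le (\text{its length})$ pointwise. Both ends reduce to an expectation computed level by level. The engine for everything is the claim that for a uniformly random composition the low-order bits of the parts are asymptotically uniform and independent: the constraint $\sum_i p_i = 2^W$ only pins the bits through a single congruence on their sum, which has vanishing effect on any fixed low level as $W \to \infty$, while the top $O(\lg k)$ levels contribute $o(W)$ per part and wash out after dividing by $Wk$.

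\textbf{Lower bound and $L(2)$.} From the uniform-low-bits claim the canonical signed-bits (NAF) digit density converges to $\tfrac13$, so $\mathbb{E}[|\phi(p_i)|] = \tfrac{W}{3} + o(W)$ for each part and $\mathbb{E}[|\phi(P)|] = \tfrac{kW}{3} + o(kW)$. Plugging this into the lower bound of Theorem~\ref{theorem_bounds_signed_bits} gives $\mathbb{E}[\lambda(P)] \ge \tfrac12 \mathbb{E}[|\phi(P)|] \to \tfrac{kW}{6}$, i.e. $L(k) \ge \tfrac16$. For $k=2$ the two bounds coincide (Theorem~\ref{theorem_tight_signed_bits_bounds}): $\lambda(P) = \min(|\phi(p_1)|,|\phi(p_2)|)+1$, and with $p_1$ uniform on $\{1,\dots,2^W-1\}$ we get $\mathbb{E}[\lambda(P)] = \tfrac{W}{3}+O(1)$, hence $L(2) = \tfrac16$ exactly.

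\textbf{Upper bound.} I would analyze the Random Matcher, which performs exactly $a_d/2$ transactions at level $d$, where $a_d = |\{i : p_i[d]=1\}|$ at the moment level $d$ is processed (and $a_d$ is even by the sum constraint). Thus its length is $\tfrac12\sum_d a_d + O(\lg k)$, and it suffices to track $q_d := \mathbb{E}[a_d]/k$. Using the carry mechanics behind Lemma~\ref{lemma_zeroed_bits_strict}: the $k-a_d$ unprocessed parts (bit $d=0$) carry a fresh, uniform bit into level $d+1$, contributing $\tfrac12(k-a_d)$ on average, while each processed pair is oriented to cancel its level-$(d+1)$ bits whenever the two parts disagree there, leaving a residual $1$ only for the $\approx \tfrac14 a_d$ agreeing pairs (since bit $d+1$ is uniform). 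This yields $\mathbb{E}[a_{d+1}] \approx \tfrac{k}{2} - \tfrac14 a_d$, whose fixed point is $q = \tfrac25$; summing over levels gives $\mathbb{E}[\lambda(P)] \le \tfrac12\sum_d a_d \to \tfrac{kW}{5}$, i.e. $L(k) \le \tfrac15$. (The same computation with Bit Matcher's bit-lexicographic pairing, which forces almost all pairs to disagree, drives the fixed point down to $q=\tfrac13$, consistent with the lower bound.)

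\textbf{Main obstacle.} The delicate step is justifying the recursion rigorously rather than heuristically. This requires maintaining, through the level-by-level carry propagation, the invariant that the bit one level ahead is conditionally uniform and sufficiently independent of the random pairing: carries injected at level $d+1$ can propagate upward and correlate higher bits, and the constraint $\sum_i p_i = 2^W$ couples the parts. I would control this with a coupling/independence argument for the low-order bits together with concentration for $a_d$ around its mean, verify that replacing the approximate equalities by the inequalities we actually need only helps the upper bound, and finally handle the interchange with $\lim_{W\to\infty}$ and the negligible contribution of the top $O(\lg k)$ levels.
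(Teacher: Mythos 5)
Your proposal is correct in substance and follows the same two-tier architecture as the paper (reduce to a distribution where the low-order bits of the parts are uniform and nearly independent, then bound $\lambda(P)$ from below via an optimality argument and from above via the Random Matcher), but both halves are formalized by a genuinely different route. For the lower bound you plug the NAF-density fact $\mathbb{E}[|\phi(p_i)|] = \frac{W}{3} + o(W)$ into Theorem~\ref{theorem_bounds_signed_bits}; the paper explicitly acknowledges this shortcut (it remarks that, given the reduction to $m$-nice sets, this yields $\frac{1}{6} \le L(k) \le \frac{1}{3}$ almost trivially) but chooses to prove the lower bound via a per-part ``zeroing-bits game'' in which Bit Matcher's moves are dominated by the strategy $OPT$ (Lemma~\ref{lemma_zeroing_strategies_opt_is_opt}), whose expected gain of $3$ zeroed bits per turn (Lemma~\ref{lemma_zeroing_strategies_expectations}) is converted into turn counts by the elementary renewal theorem (Lemma~\ref{lemma_invert_expectation_by_renewal}). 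For the upper bound you run an occupancy recursion $\mathbb{E}[a_{d+1}] = \frac{k}{2} - \frac{1}{4}\mathbb{E}[a_d]$ with fixed point $\frac{2k}{5}$, i.e.\ $\frac{k}{5}$ transactions per level; the paper instead observes that RM induces the strategy $MIX$ in each of the $k$ simultaneous games, with expected gain $2.5$ per turn, and again invokes the renewal theorem — the computations are equivalent since $1/2.5 = 2/5$. What the paper's per-game view buys is that it needs only marginal uniformity of each part's bits plus pairwise independence inside a transacting pair (supplied by the $(k-1)$-wise independence of the $m$-nice set of Lemma~\ref{lemma_approx_bins_by_rand_bits}), and the uniformity invariant is established once per game (Lemma~\ref{lemma_zeroing_strategies_expectations}, part 1); your recursion instead needs the uniformity/independence to hold jointly across all parts conditionally on $a_d$ and on the pairing, which is precisely the ``main obstacle'' you flag, and the renewal machinery is reused for Theorem~\ref{theorem_bound_per_k}, which your recursion would not support as directly. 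What your route buys is elementarity — an affine contraction replaces renewal theory, and since the recursion is affine, linearity of expectation suffices, so the concentration for $a_d$ you propose is actually unnecessary — and a cleaner $k=2$ case via the exact formula $\lambda(P) = \min(|\phi(p_1)|,|\phi(p_2)|)+1$ from Theorem~\ref{theorem_tight_signed_bits_bounds}, where the paper argues instead that Bit Matcher coincides with $OPT$ in both games. One caution: your parenthetical claim that Bit Matcher's bit-lexicographic pairing drives the fixed point to $\frac{1}{3}$ is only heuristic; the paper does not determine $L(k)$ exactly for any fixed $k \ge 3$ (only the two-sided bound here and the limit $\lim_{k\to\infty} L(k) = \frac{1}{6}$ in Theorem~\ref{theorem_bound_per_k}), but nothing in your argument relies on that remark.
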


\begin{definition}[Random Walk]
\label{definition_random_walk}
We denote by $\randwalk{p}{n}$ the expected distance reached by a symmetric random walk with $n$ independent steps, each moves left with probability $p$, right with probability $p$, and does not move with probability $1-2p$.
\newtext{Explicitly, $RW(p,n) = \sum_{0 \le \ell+r \le n}{\binom{n}{\ell,r} \cdot P(\ell+r) \cdot |\ell-r|}$ where $P(c) = p^{c} \cdot (1-2p)^{n-c}$.}
\end{definition}

\begin{theorem}
\label{theorem_bound_per_k}
$L(k) \le \frac{1}{6} + c(k)$ where $c(k) \equiv \frac{1 + \randwalk{\frac{1}{6}}{k-1}}{2k}$.
Furthermore, $\lim_{k \to \infty}{\sqrt{6\pi k} \cdot c(k)} = 1$, 
$\lim_{k \to \infty}{L(k)} = \frac{1}{6}$.
\end{theorem}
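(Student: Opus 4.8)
The plan is to upper bound $\lambda(P)$ by the number of transactions produced by the Signed Matcher (SM) of Algorithm~\ref{alg_match_variants}, and then to analyze the expectation of this count. At each level $d$, SM pairs the $+1$ signed-bits against the $-1$ signed-bits and routes only the imbalance to/from target $0$, so it issues $\max(|F_d|,|G_d|)$ transactions, where $F_d,G_d$ are the indices whose signed-bit at level $d$ equals $+1$ and $-1$. A cancellation at level $d$ never produces a carry into higher levels, so the levels are processed independently, and using $\max(a,b)=\tfrac{a+b}{2}+\tfrac{|a-b|}{2}$ together with $\sum_d(|F_d|+|G_d|)=|\phi(P)|$, the total count is exactly $\tfrac12|\phi(P)|+\tfrac12\sum_{d}\big|\sum_{i=1}^k \phi(p_i)[d]\big|$. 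Since this sequence zeroes $P$, Definition~\ref{definition_length_of_partition} gives the clean bound $\lambda(P)\le \tfrac12|\phi(P)|+\tfrac12\sum_{d}\big|\sum_{i=1}^k \phi(p_i)[d]\big|$.

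Taking expectations and dividing by $Wk$ splits $L(k,W)$ into two pieces. For the first piece I would use the fact that the canonical signed-bits (NAF) of a uniformly random integer have density $\tfrac13$, so that $\mathbb{E}|\phi(p_i)|\sim \tfrac{W}{3}$ and $\tfrac{\mathbb{E}|\phi(P)|}{2Wk}=\tfrac{k\,\mathbb{E}|\phi(p_1)|}{2Wk}\to \tfrac16$; this is exactly the source of the $\tfrac16$ term. For the second piece I would fix one coordinate: writing $\sum_{i=1}^k\phi(p_i)[d]=\phi(p_1)[d]+\sum_{i=2}^k\phi(p_i)[d]$ and using $|\phi(p_1)[d]|\le 1$ yields $\big|\sum_{i=1}^k\phi(p_i)[d]\big|\le 1+\big|\sum_{i=2}^k\phi(p_i)[d]\big|$. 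This is the origin of both the constant $1$ and the index $k-1$ in \randwalk{\frac{1}{6}}{k-1}: the part $p_1=2^W-\sum_{i\ge 2}p_i$ is determined by the sum constraint, while the remaining $k-1$ parts are asymptotically free.

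The key limiting statement I would then establish is that, as $W\to\infty$, at a generic level $d$ the signed-bits $\phi(p_2)[d],\ldots,\phi(p_k)[d]$ converge to $k-1$ independent $\{-1,0,+1\}$-valued variables, each taking $+1$ and $-1$ with probability $\tfrac16$ and $0$ with probability $\tfrac23$ (the symmetric marginal NAF-digit law). Consequently $\sum_{i=2}^k\phi(p_i)[d]$ converges in distribution to a $(k-1)$-step symmetric walk with left/right probability $\tfrac16$, whose expected absolute value is \randwalk{\frac{1}{6}}{k-1}. Summing the per-level bound over the $\Theta(W)$ levels and dividing by $W$ gives $\tfrac{1}{2Wk}\,\mathbb{E}\sum_d\big|\sum_{i=1}^k\phi(p_i)[d]\big|\le \tfrac{1+\randwalk{\frac{1}{6}}{k-1}}{2k}=c(k)$ in the limit, so $L(k)\le \tfrac16+c(k)$.

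Finally, for the asymptotics I would invoke the central limit theorem: each step has variance $\tfrac16+\tfrac16=\tfrac13$, so the mean absolute value of an $n$-step walk satisfies $\randwalk{\frac{1}{6}}{n}\sim\sqrt{\tfrac{2n}{3\pi}}$ (the mean absolute value of a centered Gaussian of variance $\tfrac{n}{3}$). Substituting $n=k-1$ gives $c(k)\sim\tfrac{1}{2k}\sqrt{\tfrac{2k}{3\pi}}=\tfrac{1}{\sqrt{6\pi k}}$, hence $\sqrt{6\pi k}\,c(k)\to 1$ and $c(k)\to 0$; combined with the lower bound $L(k)\ge\tfrac16$ from Theorem~\ref{theorem_avg_case_rand_bits_rough}, this yields $\lim_{k\to\infty}L(k)=\tfrac16$. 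The main obstacle is the limiting-independence claim: the signed-bit of $p_i$ at level $d$ is not a function of boundedly many binary bits, since carries can propagate through long runs of $1$'s, and the $k$ parts are coupled by the sum constraint. I would address this by arguing that for a random integer the runs are geometrically short, so a NAF digit at a fixed level depends on its neighboring binary bits up to exponentially small tails, and that the low-order bits of $p_2,\ldots,p_k$ in a uniform composition become jointly uniform—hence asymptotically independent across parts—as $W\to\infty$; controlling both approximations uniformly enough to pass to the limit inside the sum over levels is the technical heart of the argument.
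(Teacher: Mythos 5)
Your proposal follows essentially the same route as the paper's proof: bound $\lambda(P)$ by the Signed Matcher count $\sum_{d}\max(|F_d|,|G_d|)$, split it via $\max(a,b)=\tfrac{a+b}{2}+\tfrac{|a-b|}{2}$ into a NAF-density term (yielding the $\tfrac{1}{6}$) plus a per-level imbalance term, isolate the sum-constrained part to obtain $1+\randwalk{\cdot}{k-1}$, and derive $c(k)\sim\tfrac{1}{\sqrt{6\pi k}}$ from the Gaussian behavior of the lazy walk (the paper reduces to the simple walk via a Chernoff argument in Lemma~\ref{lemma_random_walk_with_skip_convert_to_fewer_steps_no_skip}, where you invoke the CLT, a cosmetic difference). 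The ``technical heart'' you correctly flag --- joint uniformity and $(k-1)$-wise independence of the low-order bits under the sum constraint, and passing the level-dependent digit law $p_\ell\to\tfrac{1}{6}$ through the average over levels --- is exactly what the paper supplies via the $m$-nice-set construction (Lemmas~\ref{lemma_approx_bins_by_rand_bits}--\ref{lemma_nice_set_and_rand_bits}) and the digit-law/continuity lemmas (Lemmas~\ref{signed_bits_prob_strong} and~\ref{lemma_upper_bound_per_level_with_random_walk}).
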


\newtext{Note that Theorem~\ref{theorem_avg_case_rand_bits_rough} is stronger for small values of $k$, while Theorem~\ref{theorem_bound_per_k} slowly improves when $k$ grows such that in the limit ($k \to \infty$) it is tight. Also, when $k$ is small, we can compute $c(k)$ directly, without approximating it as $\frac{1}{\sqrt{6\pi k}}$.}

Our analysis of $L(k)$, which is an asymptotic measure for $\expect{\lambda(P)}$ when $W \to \infty$, consists of two main parts:

    \emph{(i)} We show that sampling $P$ uniformly is asymptotically equivalent to sampling it uniformly from a smaller set, referred to as an \emph{$m$-nice} set (Definition~\ref{definition_good_partitions}) such that the lower $m = W - o(W)$ bits of every weight $p_i$ give a uniformly random value in $[0,2^m-1]$, and that we can focus only on the contribution of transactions of these levels (Lemmas~\ref{lemma_approx_bins_by_rand_bits}-\ref{lemma_nice_set_and_rand_bits}).
    
    \emph{(ii)} We formulate the problem as a (single player) game played by each weight $p_i$ over its least significant $m$ bits, such that a transaction correspond to a turn in this game (Definitions~\ref{definition_game_zeroing_bits}-\ref{definition_game_strategies}). We bound $\lambda(P)$ by analyzing the optimum strategy for the game (lower bound) and the strategies that are induced by \emph{Random Matcher (RM)} and \emph{Signed Matcher (SM)} in Algorithm~\ref{alg_match_variants} (upper bounds). Theorem~\ref{theorem_avg_case_rand_bits_rough} analyzes RM, assisted by Definition~\ref{definition_game_zeroing_bits_random_variables} and Lemmas~\ref{lemma_zeroing_strategies_expectations}-\ref{lemma_asymptotic_rough}.
    Theorem~\ref{theorem_bound_per_k} analyzes SM, assisted by Definitions~\ref{definition_f_g_counters}-\ref{definition_signed_probabilities} and Lemmas~\ref{signed_bits_prob_strong}-\ref{lemma_random_walk_with_skip_convert_to_fewer_steps_no_skip}.

\newtext{Once we show step (i), it becomes almost trivial to prove that $\frac{1}{6} \le L(k) \le \frac{1}{3}$ given Theorem~\ref{theorem_bounds_signed_bits}. Indeed, it is a known property~\cite{SignedBitsFrequences} that a uniformly random integer $q \in [0,2^m-1]$ satisfies $\lim_{m \to \infty}{\frac{|\phi(q)|}{m}} = \frac{1}{3}$. Therefore we can substitute $\expect{|\phi(P)|} = \frac{k \cdot W}{3} + o(kW)$ in Theorem~\ref{theorem_bounds_signed_bits}. That being said, we prove the lower bound more rigorously as part of Theorem~\ref{theorem_avg_case_rand_bits_rough}.}

\begin{definition}[$\boldsymbol{m}$\textbf{-nice} partitions]
\label{definition_good_partitions}
Let $A$ be a collection of ordered-partitions. Sample $P \in A$ uniformly, and define $k$ random variables $y_i \equiv p_i \pmod{2^m}$ for $i \in [k]$. We say that $A$ is $\boldsymbol{m}$\textbf{-nice} if every $y_i$ is uniformly distributed in $[0,2^m-1]$,\footnote{Which means that in the binary representation of $y_i$, every bit is uniform.}
and the $y_i$'s are $(k-1)$-wise independent.
\end{definition}

The following notations are used in the rest of the  paper.

\begin{definition}[\textbf{Notations}]
\label{definition_notations_average_case}
(1) Denote $m = W - \floor{\sqrt{W}}$.
(2) Denote by $\lambda_m(P)$ the number of transactions that are produced by the Bit Matcher algorithm for the partition $P$ until the least significant $m$ bits of every weight are zero.
(3) Let $S(k,W)$ be the set of all ordered partitions of $2^W$ to $k$ positive parts.
(4) Let $A \subset S(k,W)$ be the $m$-nice set constructed in Lemma~\ref{lemma_approx_bins_by_rand_bits} below.
(5) $y \sim Y$ samples $y$ uniformly from the set $Y$.
\end{definition}

The following lemma constructs $m$-nice sets.

\begin{lemma}
\label{lemma_approx_bins_by_rand_bits}
There exists an $m$-nice subset of $S(k,W)$ whose size is at least $(1 - k^{2-d}) \cdot |S(k,W)|$, where $d = \frac{W - m}{\lg k}$. (The claim is meaningful only if $d > 2$.)
\end{lemma}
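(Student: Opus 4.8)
The plan is to construct $A$ explicitly by bucketing the partitions of $S(k,W)$ according to the residues of their parts modulo $2^m$, and then keeping an \emph{equal} number of representatives from every bucket; this equalization is exactly what forces the uniform, $(k-1)$-wise independent structure required by Definition~\ref{definition_good_partitions}, and the loss incurred by discarding the surplus representatives is what I will bound by $k^{2-d}$.

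First I would decompose each part as $p_i = 2^m a_i + y_i$ with $y_i \in [0,2^m-1]$ and $a_i \ge 0$. Since $2^m \mid 2^W$, the constraint $\sum_i p_i = 2^W$ forces $\sum_i y_i \equiv 0 \pmod{2^m}$. Call $y \in [0,2^m-1]^k$ with $\sum_i y_i \equiv 0 \pmod{2^m}$ a \emph{valid residue vector}; fixing $y_1,\ldots,y_{k-1}$ freely determines $y_k$ uniquely, so there are exactly $2^{m(k-1)}$ of them. I would then observe that the uniform distribution over valid residue vectors \emph{is} the $m$-nice distribution: by the permutation symmetry of the defining constraint, every subset of $k-1$ coordinates consists of free, independent, uniform variables, which gives both the marginal uniformity and the $(k-1)$-wise independence of Definition~\ref{definition_good_partitions}.

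Next I would count, for each valid residue vector $y$, how many partitions in $S(k,W)$ have these residues. Writing $\sum_i y_i = t\,2^m$ and letting $z$ be the number of zero entries of $y$, completing $y$ to a partition amounts to choosing $a_i \ge 0$ (with $a_i \ge 1$ whenever $y_i = 0$) that sum to $2^{W-m} - t$, and a stars-and-bars count gives $\binom{2^{W-m} - t - z + k - 1}{k-1}$ completions. I would show $t + z \le k$ for every valid residue vector: if $z < k$ then the $k-z$ nonzero entries are each at most $2^m - 1$, forcing $t \le k - z - 1$, while $z = k$ forces $t = 0$. Hence the count is minimized precisely at the all-zeros vector, where it equals $c := \binom{2^{W-m}-1}{k-1}$. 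Defining $A$ by retaining exactly $c$ completions from each of the $2^{m(k-1)}$ buckets yields a subset $A \subseteq S(k,W)$ that is $m$-nice (uniform sampling from $A$ chooses a bucket uniformly, then a representative uniformly) with $|A| = c\cdot 2^{m(k-1)}$.

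Finally, using $|S(k,W)| = \binom{2^W-1}{k-1}$ and writing $M = 2^{W-m} = k^d$, I would bound the ratio by a telescoping product:
\[
\frac{|A|}{|S(k,W)|}=\prod_{j=1}^{k-1}\frac{2^m M - j\,2^m}{2^m M - j}=\prod_{j=1}^{k-1}\Bigl(1-\frac{j(2^m-1)}{2^m M - j}\Bigr)\ge 1-\sum_{j=1}^{k-1}\frac{j}{M-k+1}.
\]
Since $d>2$ gives $M=k^d\ge 2k$, the sum is at most $\tfrac{k(k-1)/2}{M-k+1}\le k^{2-d}$, so $|A|\ge(1-k^{2-d})|S(k,W)|$ as claimed. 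I expect the main obstacle to be the combinatorial completion count together with verifying that the all-zeros residue vector attains the minimum (the $t+z\le k$ bound), since that is precisely what pins down $c$ and hence $|A|$; the closing ratio estimate is then a routine product-to-sum inequality.
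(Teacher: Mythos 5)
Your proposal is correct, and it arrives at exactly the same set size $|A| = 2^{m(k-1)}\binom{2^{W-m}-1}{k-1}$ as the paper, but via a genuinely different construction of the $m$-nice set. The paper builds $A$ constructively: it distributes $2^{W-m}$ bundles of $2^m$ addresses into the $k$ parts (an ordered partition $P'$ of $2^{W-m}$ with no empty part) and then, for each $i<k$, moves $n_i \in [0,2^m-1]$ addresses from part $i$ to part $k$; injectivity of the map $(P',n_1,\ldots,n_{k-1}) \mapsto P$ gives $|A|$ immediately, and the uniform independent choice of the $n_i$'s yields the uniformity and $(k-1)$-wise independence of the residues required by Definition~\ref{definition_good_partitions}. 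You instead bucket all of $S(k,W)$ by residue vector, prove the extremal claim $t+z \le k$ so that every bucket has at least $c=\binom{2^{W-m}-1}{k-1}$ completions (with the all-zeros bucket attaining exactly $c$), and prune each bucket to exactly $c$ representatives, after which $m$-niceness is automatic from the equalization. In effect, the paper's construction is one explicit way of selecting the $c$ representatives per bucket that your argument selects abstractly: the paper avoids your stars-and-bars count and the $t+z\le k$ lemma at the price of verifying injectivity and the induced residue distribution, while your route makes the uniformity of the residues transparent by design. The closing estimates coincide up to the elementary inequality used --- you apply $\prod_j(1-x_j)\ge 1-\sum_j x_j$ directly, the paper uses $(1-x)^{k-1} > 1-(k-1)x$ --- and both deliver the claimed bound $|A| \ge (1-k^{2-d})\,|S(k,W)|$.
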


\begin{proof}
We prove the lemma by constructing $A$.
We think of generating a partition by throwing $2^W$ balls into $k$ bins such that no bin is empty. Let $A$ be the subset of $S(k,W)$ of partitions that are generated in the following way: First, partition the balls to $2^{W-m}$ bundles each containing $2^m$ balls. Then partition the bundles to the $k$ bins such that no bin is empty, i.e.\ we select an ordered-partition $P'$ of $2^{W-m}$ for the bundles. Then, for each bin $i$, $1\le i\le k-1$, choose some integer $n_i \in [0,2^m-1]$, and move $n_i$ balls from bin $i$ to the last bin. Because $n_i < 2^m$ all the bins remain non-empty as we require. Each partition in $A$ has a unique representation as a tuple $(P',n_1,\ldots,n_{k-1})$, and no two tuples give the same partition (the mapping is reversible).

\newtext{One can verify that $\rho(A) \equiv \frac{|A|}{S(k,W)} > 1 - k^{2-d}$, indeed:
$$\rho(A) = \frac{2^{m\cdot(k-1)} \cdot \binom{2^{W-m}-1}{k-1}}{\binom{2^W-1}{k-1}}
= \prod_{i=0}^{k-2} {\frac{2^m \cdot (2^{W-m}-1-i)}{2^W-1-i}}$$
 $$= \prod_{i=0}^{k-2} {\frac{2^W-(i+1)\cdot 2^m}{2^W-(i+1)}}
  = \prod_{i=0}^{k-2} {\left(1 - \frac{(i+1)\cdot (2^m-1)}{2^W-(i+1)}\right)}$$
$$> {\left(1 - \frac{(k-1)\cdot (2^m-1)}{2^W} \right)}^{k-1}
> 1 - \frac{{(k-1)}^2\cdot (2^m-1)}{2^W}$$}

\newtext{The final inequality used the fact that ${(1-x)^n} > 1 - nx$ for $x \ne 0$. By substituting $m = W - d \cdot \lg k$, we get that the complement is:
$
\rho(A^c)
< \frac{{(k-1)}^2\cdot (2^{W - d \cdot \lg k}-1)}{2^W}
< {(k-1)}^2\cdot k^{-d}
< k^{2-d}
$.}

Also note that the random variables $y_i$ ($1\le i\le k$) as in Definition
\ref{definition_good_partitions} are distributed uniformly in $[0,2^m-1]$. The uniformity of $y_k$ and the $(k-1)$-wise independence of the $y_i$'s follow from the fact that 
$\sum_{i \in [k]}{p_i} \equiv 0\ (mod\ 2^m)$. It follows that $A$ is $m$-nice.
\end{proof}

Next, we prove that sampling a partition from $A$, or sampling it from $S(k,W)$, is ``effectively equivalent''.
\begin{lemma}
\label{lemma_nice_set_and_rand_bits}
Consider any of the algorithms $BM$, $RM$ or $SM$ (see Algorithm~\ref{alg_match_variants}). Denote by $N(P)$ the number of transactions that the particular algorithm generates for input $P$, and by $N_m(P)$ the number of transactions that the algorithm generates in levels $\ell < m$. Then:
$$\lim_{W \to \infty}{\frac{\expectExplicit{N(P)}{P \sim S(k,W)}}{W}}
= \lim_{W \to \infty}{\frac{\expectExplicit{N_m(P)}{P \sim A}}{W}}$$
\end{lemma}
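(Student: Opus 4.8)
The plan is to bound the difference between the two quantities by showing that the contribution of ``high'' levels $\ell \ge m$ is asymptotically negligible, and that the distribution of transactions in the ``low'' levels $\ell < m$ is essentially the same whether we draw $P$ from $S(k,W)$ or from the $m$-nice set $A$. First I would split $N(P) = N_m(P) + (N(P) - N_m(P))$, where the second term counts transactions at levels $\ell \ge m$. Since each of the $k$ weights contributes at most $W - m + 1$ bits in levels $\ge m$, and by Lemma~\ref{lemma_zeroed_bits_strict} every transaction zeroes at least one bit uniquely, the number of high-level transactions is at most $O(k (W-m))$. With $m = W - \floor{\sqrt{W}}$ from Definition~\ref{definition_notations_average_case}, this is $O(k\sqrt{W}) = o(W)$ for fixed $k$, so $\expect{N(P) - N_m(P)}/W \to 0$ regardless of which distribution $P$ is drawn from. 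Hence both limits reduce to limits of $\expect{N_m(P)}/W$.

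Next I would argue that $\expectExplicit{N_m(P)}{P \sim S(k,W)}$ and $\expectExplicit{N_m(P)}{P \sim A}$ agree in the limit. The key observation is that all three algorithms, when run at levels $\ell < m$, make decisions based only on the low $m$ bits of the weights (and, for $BM$ and $RM$, on the bit at level $m$ to break ties via bit-lexicographic order or the look-ahead at level $d+1$). Therefore $N_m(P)$ is a function of the residues $y_i \equiv p_i \pmod{2^{m+1}}$ (or $\pmod{2^m}$ for $SM$). By Lemma~\ref{lemma_approx_bins_by_rand_bits}, the set $A$ has measure $\rho(A) > 1 - k^{2-d}$ inside $S(k,W)$, and with $W - m = \floor{\sqrt{W}}$ we have $d = \floor{\sqrt{W}}/\lg k \to \infty$, so $\rho(A^c) \to 0$. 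Writing
\begin{equation}
\expectExplicit{N_m(P)}{P \sim S(k,W)} = \rho(A)\,\expectExplicit{N_m(P)}{P \sim A} + \rho(A^c)\,\expectExplicit{N_m(P)}{P \sim A^c},
\end{equation}
I would use the uniform bound $N_m(P) = O(kW)$ (again from the unique-bit-charging argument) to control the second term: its contribution is at most $\rho(A^c) \cdot O(kW) = O(k^{3-d} W)$, which is $o(W)$. Dividing by $W$ and taking $W \to \infty$, the $S(k,W)$ average and the $A$ average of $N_m(P)/W$ converge to the same limit.

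Combining the two reductions gives the claimed identity of limits. The main obstacle I anticipate is the bookkeeping around tie-breaking: the bit-lexicographic comparison of $BM$ and the level-$(d+1)$ look-ahead of $RM$ mean that a transaction at level $\ell < m$ can in principle depend on bits slightly above level $m$, so I must verify that the relevant ``window'' of bits is confined to levels $\le m$ (or at most $m + O(1)$), and that the $m$-niceness of $A$ — uniformity and $(k-1)$-wise independence of the low-$m$-bit residues — transfers to this slightly larger window without loss. I expect this to go through because the discrepancy in a single extra bit affects only an $O(k)$ number of transactions per level and contributes $o(W)$ overall, but making this airtight for all three algorithms simultaneously is the delicate part; the negligibility estimates themselves are routine once the window of dependence is pinned down.
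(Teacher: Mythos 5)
Your proof is correct and follows essentially the same route as the paper's: the paper performs the same two reductions, only in the opposite order (first passing from $S(k,W)$ to $A$ for $N(P)$ via the law of total expectation, the density bound of Lemma~\ref{lemma_approx_bins_by_rand_bits}, and the uniform bound $N(P) \le kW$; then discarding the levels $\ell \ge m$ inside $A$ via $0 \le N(P)-N_m(P) \le k(W-m) = o(W)$). Two remarks on your write-up. First, your ``key observation'' that $N_m(P)$ depends only on the residues $p_i \bmod 2^{m+1}$ is never actually used: the displayed total-expectation identity together with the uniform bound $N_m(P) \le kW$ already gives $\big|\expectExplicit{N_m(P)}{P \sim S(k,W)} - \expectExplicit{N_m(P)}{P \sim A}\big| \le \Pr[P \in A^c]\cdot kW = o(W)$, with no assumption whatsoever on which bits $N_m$ depends on. Hence the ``delicate part'' you anticipate --- pinning down the window of dependence created by $BM$'s tie-breaking and $RM$'s look-ahead --- is a non-issue and can simply be deleted; this is fortunate, because for $BM$ the bit-lexicographic pairing at a level $\ell < m$ can genuinely be decided by bits far above level $m$ (one would need an exchangeability argument to show that the \emph{count} of low-level transactions is nevertheless unaffected), so making that observation airtight would be real work for no benefit. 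Second, your per-level transaction bound is attributed to Lemma~\ref{lemma_zeroed_bits_strict}, which is stated and proved only for Bit Matcher; for all three algorithms of Algorithm~\ref{alg_match_variants} the correct (and simpler) justification, which is what the paper uses, is that each target participates in at most one transaction per level, so there are at most $k$ transactions per level, yielding both $N(P) \le kW$ and $N(P) - N_m(P) \le k(W-m)$.
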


\begin{proof}
\newtext{The high-level idea is that because $\frac{|A|}{|S(k,W)|} \to 1$, and because $\frac{m}{W} \to 1$, partitions which are not in $A$ or transactions that do not affect the $m$ least significant levels are asymptotically negligible. The rest of the proof is technical.}

\newtext{First, note that because no $i \in [k]$ participates in more than one transaction per level, there can be at most $k$ transactions per level, which is why for every partition $P$: $N(P) \le Wk$ and $0 \le N(P) - N_m(P) \le (W-m)k$. This inequality remains true in expectation when $P \sim A$ so:}
\begin{equation}
\label{equation_lambda_m_bounds}
\expect{N_m(P)} \le \expect{N(P)} \le \expect{N_m(P)} + (W-m)k
\end{equation}

\newtext{Denote $A^c = S(k,W) \setminus A$. By the law of total expectation:
$
\expectExplicit{N(P)}{P \sim S(k,W)} =
\expectExplicit{N(P)}{P \sim A} \cdot \Pr[P \in A] +
\expectExplicit{N(P)}{P \sim A^c} \cdot \Pr[P \in A^c] $}

\newtext{By Lemma~\ref{lemma_approx_bins_by_rand_bits} the subset $A$ satisfies $\Pr[P \in A^c] \le k^{2-(\sqrt{W}/ \lg k)}$. Overall, the contribution of this part to the expectation is non-negative, and bounded by $Wk \cdot k^{2-(\sqrt{W}/ \lg k)} = o(\frac{1}{W})$, Therefore by the sandwich theorem we can conclude that:}
\begin{equation}
\label{equation_expectation_like_nice_set}
\newtext{
\lim_{W \to \infty}{\frac{\expectExplicit{N(P)}{P \sim S(k,W)}}{W}}
= \lim_{W \to \infty}{\frac{\expectExplicit{N(P)}{P \sim A}}{W}}
}
\end{equation}

\newtext{Since $k$ is fixed and $W-m = o(W) = o(m)$, we get by another application of the sandwich theorem on Equation~(\ref{equation_lambda_m_bounds}), that:}
\begin{equation}
\label{equation_expectation_sandwiched}
\newtext{\lim_{W \to \infty}{\frac{\expectExplicit{N(P)}{P \sim A}}{W}}
= \lim_{W \to \infty}{\frac{\expectExplicit{N_m(P)}{P \sim A}}{W}}}
\end{equation}

The claim follows by combining Equations~(\ref{equation_expectation_like_nice_set}) and (\ref{equation_expectation_sandwiched}).
\end{proof}

When BM is considered in Lemma~\ref{lemma_nice_set_and_rand_bits}, $N(P) = \lambda(P)$ and $N_m(P) = \lambda_m(P)$. Next, we rephrase this problem as a game.

\begin{definition}[Zeroing Bits Game]
\label{definition_game_zeroing_bits}
Let $v$ be an infinite string of bits, which we think of as a generalization of a number such that the bit in location $\ell$ represents a value of $2^\ell$. We denote $v \pmod {2^m}$ by $\hat{v}$. In turn $t$ of the game, find the lowest bit that is $1$, denote its location by $d_t$. The player should either add $2^{d_t}$ to $v$ or subtracts $2^{d_t}$ from $v$. Repeat until $\hat{v} = 0$. (The number of turns can be thought of as the player's utility.)
\end{definition}

By adding or subtracting $2^{d_t}$ we are guaranteed that bit $d_t$ is zeroed. If we added $2^{d_t}$, we earn more zeroed bits if there were consecutive $1$s at levels following $d_t$ that became $0$s due to carry, and if we subtracted $2^{d_t}$ then we ``earn'' more zeroed bits if at the levels following $d_t$ there are consecutive $0$s.

\begin{remark}
\label{remark_k_games}
A running of $BM$, $RM$ or $SM$ on a partition $P$ induces a strategy for $k$ simultaneous games as follows:
\begin{enumerate}
    \item The string $v$ of the $i$th game satisfies $v \equiv p_i (mod\ 2^m)$. Higher bits in $v$ are picked uniformly at random.
    
    \item The choices the algorithm makes are adding $2^d$ when $i$ participates in a transaction
    $\TRANSACT{j_1}{2^d}{i}$ and subtracting $2^d$ when $i$ participates in a transaction
    $\TRANSACT{i}{2^d}{j_2}$.
\end{enumerate}
\end{remark}

\begin{definition}[Strategies $OPT$, $RND$, $MIX$]
\label{definition_game_strategies}
We define the following strategies for the game in Definition~\ref{definition_game_zeroing_bits}, for the action taken in turn $t$, at level $d_t$ (by definition: $v[d_t] = 1$):
\begin{enumerate}[leftmargin=*,label=(\arabic*),topsep=0pt]
    \item $OPT$: if $v[d_t+1] = 1$ add $2^{d_t}$, otherwise subtract $2^{d_t}$. (Equivalently, in signed-bits representation: add if $\phi(v)[d_t] = -1$, subtract if $\phi(v)[d_t] = 1$.)
    \item $RND$: Choose to add/subtract $2^{d_t}$ with probability half.
    \item $MIX$: At any turn, choose between $OPT$ and $RND$ with probability half, \textbf{independent} of the bits of $v$. 
\end{enumerate}
\end{definition}

\begin{definition}[Game Random Variables]
\label{definition_game_zeroing_bits_random_variables}
Revisit Definition~\ref{definition_game_zeroing_bits} of the game. We will consider settings in which $v$ is chosen randomly from some distribution. In such a setting we define the following random variables: $T$ is the number of turns played in the game (if $\hat{v} = 0$ initially, then $T=0$), and $D_1,D_2,\ldots,D_T$ are such that $D_t = d_{t+1} - d_t$. Note that for notation purposes, we consider $d_{T+1}$ to be the lowest bit that is $1$ when the game ends, i.e. the level of turn $T+1$, if there would have been another turn. Also, for $D_T$ to be well defined, we assume that $v$ contains infinitely many $0$s and infinitely many $1$s.\footnote{This assumption is stronger than what is needed for $D_T$ to be well defined, but our settings allow it.}
We refer to $D_t$ as the number of zeroed bits in turn $t$, and to $T$ as the number of turns.
\end{definition}

\begin{remark}
\label{remark_finite_bits_horizon}
\newtext{The game is defined on an infinite $v$ to make all the $D_t$ random variables \newtext{identically distributed} with respect to playing a game according to the strategies in Definition~\ref{definition_game_strategies}, to simplify Lemma~\ref{lemma_zeroing_strategies_expectations} by preventing truncation effects. Although $v$ is infinite, observe that each of these strategies requires at most $m+1$ bits, and moreover $T$ is completely determines by the first $m$ bits. We rely on this observation in Lemma~\ref{lemma_invert_expectation_by_renewal}.}
\end{remark}

The next lemma analyzes the gist of the above strategies.

\begin{lemma}
\label{lemma_zeroing_strategies_expectations}
Let the game of Definition~\ref{definition_game_zeroing_bits} be played on a value $v$ sampled such that each of its bits is uniform and independent of the others. Recall the strategies in Definition~\ref{definition_game_strategies}, then:
\begin{enumerate}
    \item Whenever a turn is played at bit $d$, any bit $v[\ell]$ for $\ell \ge d+1$ is uniformly random.
    \item For a fixed strategy $OPT$ or $RND$ or $MIX$, the random variables $D_t$ are identically distributed and independent.
    \item $\expect{D_t^{OPT}} = 3$, $\expect{D_t^{RND}} = 2$ and $\expect{D_t^{MIX}} = 2.5$.
\end{enumerate}
\end{lemma}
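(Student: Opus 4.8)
The plan is to prove all three parts from a single structural invariant, established by induction on the turn number $t$. The invariant I would maintain is exactly item (1) in strengthened form: at the start of each turn $t$, conditioned on the entire history of play (which fixes the active level $d_t$, the fact that all bits below $d_t$ are zero, and that $v[d_t]=1$), the bits $v[d_t+1], v[d_t+2], \ldots$ are mutually independent and uniform. This ``fresh randomness above the active bit'' invariant is the engine for items (2) and (3).

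For the base case ($t=1$) I would note that all bits of $v$ start independent and uniform, so conditioning on the location $d_1$ of the lowest $1$-bit only fixes the bits below $d_1$ and leaves the bits strictly above $d_1$ independent and uniform. For the inductive step I would analyze, for each strategy, how one turn transforms the bits and propagates carry or borrow. Writing $b_\ell = v[d_t+\ell]$ for the (by hypothesis i.i.d.\ uniform) bits above the active level, adding $2^{d_t}$ clears the maximal run of $1$s starting at $d_t$: if $b_1=\cdots=b_{j-1}=1$ and $b_j=0$ then the new lowest $1$-bit sits at level $d_t+j$, giving $D_t=j$, while the untouched bits $b_{j+1}, b_{j+2}, \ldots$ remain i.i.d.\ uniform and independent of $b_1,\ldots,b_j$. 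Subtracting $2^{d_t}$ is dual: it clears only bit $d_t$, the new lowest $1$-bit is the first $1$ above $d_t$, and the bits strictly above it are untouched and fresh. For $OPT$ the add/subtract choice reads $b_1$, but in both branches the post-turn bits above the new active level are a suffix of the original i.i.d.\ bits, so the invariant is restored; $RND$ and $MIX$ use independent external coins, and the same case analysis applies.

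Item (2) then follows from this regenerative structure: each turn consumes only the prefix $b_1,\ldots,b_{D_t}$ of the fresh bits (together with an independent coin for $RND$ and $MIX$), and the invariant guarantees that the bits handed to turn $t+1$ are again i.i.d.\ uniform and independent of everything consumed so far. Hence the $D_t$ are identically distributed (each turn faces the same fresh distribution) and independent (disjoint, independent blocks of bits plus independent coins drive distinct turns); the strong-Markov / fresh-start property for an i.i.d.\ bit sequence sampled up to a stopping rule makes the ``disjoint blocks'' argument rigorous.

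Item (3) is then a short geometric computation under the fresh-bits invariant. For $OPT$: with probability $\tfrac12$ we have $b_1=1$ and add, so $D_t=2+G$ where $G$ counts the further consecutive $1$s (geometric with $\expect{G}=1$); with probability $\tfrac12$ we have $b_1=0$ and subtract, so $D_t=2+H$ with $H$ the further consecutive $0$s, again $\expect{H}=1$; either way $\expect{D_t^{OPT}}=3$. For $RND$ the coin is independent of $b_1$, so whether we add or subtract we get $D_t=1+(\text{geometric of mean }1)$, giving $\expect{D_t^{RND}}=2$. Finally $\expect{D_t^{MIX}}=\tfrac12\cdot 3+\tfrac12\cdot 2=2.5$ by linearity. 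The main obstacle is the inductive step of the invariant: one must argue carefully that carry/borrow propagation, and in the $OPT$ case the reading of $b_1$, only ever expose a prefix of the bits and leave a genuinely fresh i.i.d.\ suffix above the new active level. Once this regeneration is pinned down, items (2) and (3) are essentially immediate.
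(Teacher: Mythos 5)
Your proposal is correct and follows essentially the same route as the paper's proof: an induction establishing that the bits above the active level stay i.i.d.\ uniform (with the same carry/borrow case analysis), the same geometric computations giving $3$, $2$, and $2.5$, and the same disjoint-blocks argument for independence of the $D_t$. The only cosmetic difference is that you compute $\expect{D_t^{RND}}$ by conditioning on the action (each branch geometric with mean $2$), whereas the paper conditions on whether $RND$ agrees with $OPT$ (giving $\frac{1}{2}(3+1)$); both are valid and equivalent.
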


\begin{proof}
\newtext{We start from the first part, using a simple induction: Initially all the bits of $v$ are uniform, and the first turn begins at the least significant bit that is $1$, so every bit above level $d$ is uniform. Next, observe that if the action played is subtraction, then the bits above it remain unchanged. If the action is addition, the bits above it may change due to carry. However, the bits that change are only those who get zeroed, and the last bit which is $0$ and becomes $1$ that ``stops'' the carry. This is the bit that will be played at the next turn, and every bit above it is unchanged, and therefore still uniform.}

\newtext{Now that we have the uniformity of bits in our hand, the claim regarding expectations is simple. Any strategy zeroes at least one bit, $v[d]$. Other than that, by definition, $OPT$ zeroes $v[d+1]$ as well. $OPT$ earns more zeros as long as a streak of bits continues such that $v[d+i] = v[d+1]$ for $i \ge 2$, so the total number of zeroed bits for $OPT$ is $2+G$ where $G$ is a geometric random variable. The expectation that we get is $\expect{D_t^{OPT}} = 2 + \expect{G} = 3$. On the other hand, the opposite choice of $OPT$, by definition, yields only a single zeroed bit. Therefore, $\expect{D_t^{RND}} = \frac{1}{2}(3+1) = 2$ and also $\expect{D_t^{MIX}} = \frac{1}{2}(3+2) = 2.5$. The expectation for $MIX$ subtly relies on the fact that it chooses between $OPT$ and $RND$ in a way independent from $v$.\footnote{\newtext{To see that, consider a scheme in which if $v[d+1]=v[d+2]$ we play $OPT$, and otherwise $RND$. It is still probability half to choose either, but if $v[d+1]=v[d+2]$ then $OPT$ is chosen, $3$ bits are guaranteed to be zeroed with one more in expectation, and if $v[d+1] \ne v[d+2]$ then $RND$ will zero either $1$ or $2$ bits. The expectation is $\frac{1}{2}(4 + \frac{3}{2}) = \frac{11}{4}$, larger than $2.5$.}}}

\newtext{The discussion above also implies that the random variables $D_t$ are independent since they count disjoint sub-sequences of bits.}
\end{proof}

$OPT$ is not only better in expectation, in fact it is optimal.

\begin{lemma}
\label{lemma_zeroing_strategies_opt_is_opt}
Let the game of Definition~\ref{definition_game_zeroing_bits} be played on a value $v$. The strategy $OPT$ minimizes the number of turns.
\end{lemma}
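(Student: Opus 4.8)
The plan is to characterize the number of turns of \emph{any} strategy combinatorially, derive a matching lower bound from the minimality of the signed-bits representation, and then verify that $OPT$ attains it. Throughout I work with $\hat v = v \bmod 2^m$, since the stopping condition and the played levels depend only on $\hat v$.

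First I would record the decisive structural fact: in any play the levels at which turns occur are strictly increasing. Indeed, when a turn is played at the lowest set bit $d_t$ and we add or subtract $2^{d_t}$, bit $d_t$ becomes $0$ while all bits below it were already $0$; thus after the turn all bits in levels $\le d_t$ are zero, so the next turn occurs at a strictly higher level. Consequently a play with $T$ turns, levels $d_1 < d_2 < \cdots < d_T < m$ and signs $\epsilon_t \in \{+1,-1\}$, satisfies $\hat v + \sum_{t=1}^{T} \epsilon_t 2^{d_t} \equiv 0 \pmod{2^m}$. Writing $R = -\sum_t \epsilon_t 2^{d_t}$, this is a signed representation of $R$ with $T$ nonzero digits at \emph{distinct} levels, and $R \equiv \hat v \pmod{2^m}$. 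Since the levels are distinct and bounded by $m-1$, we have $|R| \le 2^m - 1$, so $R \in \{\hat v,\ \hat v - 2^m\}$. By the minimality of the canonical signed-bits form (Property~\ref{property_signed_bits_most_sparse}), any signed representation of an integer has at least $|\phi(R)|$ nonzero digits, hence $T \ge |\phi(R)| \ge \min\big(|\phi(\hat v)|,\ |\phi(\hat v - 2^m)|\big)$. This lower bound holds for \emph{every} strategy.

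Next I would show that $OPT$ attains this bound. By Definition~\ref{definition_game_strategies}, at each turn $OPT$ adds $2^{d_t}$ exactly when $\phi(v)[d_t] = -1$ and subtracts when $\phi(v)[d_t] = 1$; that is, each turn cancels the lowest nonzero signed-bit of the current value. Using the recursions $\phi(2n) = \phi(n)\circ 0$ and $\phi(4n \pm 1) = \phi(n)\circ 0(\pm 1)$ from Definition~\ref{definition_signed_bits}, an induction on the number of turns shows that, as long as the working level stays below $m-1$, $OPT$ consumes the digits of $\phi(\hat v)$ one at a time and its running count matches the signed-bit weight accrued so far. At the top level $m-1$ the carry produced by an addition escapes above the window $\hat v$ and is therefore ``free''; this is exactly the mechanism that lets $OPT$ realize whichever of the two representatives $\hat v$ or $\hat v - 2^m$ is cheaper. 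Carrying the induction through this boundary yields $T_{OPT} = \min\big(|\phi(\hat v)|, |\phi(\hat v - 2^m)|\big)$, which meets the lower bound, so $OPT$ minimizes the number of turns.

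The main obstacle is the boundary handling at level $m-1$: one must argue cleanly that the escaping carry converts the ``expensive'' representative into the ``cheap'' one without charging an extra turn, and that $OPT$'s decision there (which formally inspects $v[m]$, a bit outside $\hat v$) does not affect the final count, consistent with Remark~\ref{remark_finite_bits_horizon}, where $T$ is determined by the first $m$ bits. An equivalent route, should the signed-bits bookkeeping become awkward, is an exchange argument: at the first turn where a strategy $S$ deviates from $OPT$ one checks that $OPT$'s state $v_O$ and $S$'s state $v_S$ differ by exactly $2^{d+1}$ at the bit just above the played level $d$, and then invokes a monotonicity lemma in the spirit of Lemma~\ref{lemma_signed_bits_cancellation} (adding $2^{e}$ at a $0$-bit cannot decrease the optimal remaining count, while adding it at a $1$-bit cannot increase it) to conclude that the optimal remaining count from $v_O$ is at most that from $v_S$; proving that monotonicity lemma is then the real work.
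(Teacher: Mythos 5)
Your proposal is correct in substance and takes a genuinely different route from the paper's. The paper never evaluates $OPT$'s turn count explicitly: it takes an arbitrary strategy $S$, modifies it into a strategy $S'$ with $n^{S'}\le n^S$ whose \emph{last} action agrees (add vs.\ subtract) with $OPT$'s, deduces that both plays then realize the \emph{same} integer $r=b-a$ exactly (not just modulo $2^m$), and applies Property~\ref{property_signed_bits_most_sparse} once to get $n^{OPT}=|\phi(r)|\le n^{S'}\le n^S$. You instead prove a universal lower bound $T\ge\min\bigl(|\phi(\hat v)|,\,|\phi(\hat v-2^m)|\bigr)$ --- any play is a signed representation, over distinct levels below $m$, of one of the two representatives of $\hat v$ modulo $2^m$ --- and then claim $OPT$ attains it. Your lower-bound half is complete and arguably cleaner than the paper's synchronization surgery, which it replaces by simply minimizing over the two representatives; your upper-bound half is exactly the computation the paper's trick avoids, and you leave it as a sketch. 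It is completable, and you flag the right obstacle, but the mechanism is stated slightly off: what decides which representative $OPT$ realizes is not a carry at a turn at level $m-1$ (whose action, as you note via Remark~\ref{remark_finite_bits_horizon}, never affects the count) but whether the leading digit of $\phi(\hat v)$ sits at position $m$. Concretely, every turn of $OPT$ at a level below $m-1$ cancels the lowest nonzero digit of the canonical form of the current residue (by uniqueness of that form), so $OPT$ consumes the digits of $\phi(\hat v)$ bottom-up; if the leading digit is at position $m$ it is free, since the remaining value is $2^m\equiv 0$ --- and note the escaping carry can fire at a turn at \emph{any} level, e.g.\ at level $0$ when $\hat v=2^m-1$ --- giving $T_{OPT}=|\phi(\hat v)|-1=|\phi(\hat v-2^m)|$; if instead the leading digit is at position $\le m-1$, then $T_{OPT}=|\phi(\hat v)|$, and a short check (leading position $\le m-2$ versus $=m-1$) shows $|\phi(\hat v-2^m)|\ge|\phi(\hat v)|$, so the minimum is attained in every case. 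What the paper's route buys is avoiding this case analysis entirely; what yours buys is an exact formula for $OPT$'s cost and a lower bound valid for every strategy, which is more informative than the bare optimality statement. (Your fallback exchange-argument route is indeed the harder path; the monotonicity lemma it needs is not in the paper either, so I would not pursue it.)
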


\begin{proof}
Let $v$ be the string of bits. Let $S$ be some strategy, and denote the number of turns it plays on $v$ by $n^S$.
\newtext{As a first step, we present a strategy $S'$ such that $n^{S'} \le n^{S}$ and the last action of $S'$ is ``in sync'' with $OPT$: either both are addition, or both are subtraction. If $S$ is in sync with $OPT$, $S' = S$. If the last turn of $OPT$ is at level $m-1$, this is the last turn of $OPT$ anyway, whether it adds or subtracts, so we flip its action and abuse notation to keep calling it $OPT$ (instead of adding a notation for $OPT'$). Otherwise, the last turn of $OPT$ it played at level $\ell$ for some $\ell < m-1$.
\begin{enumerate}
    \item If $OPT$ applied $+2^{\ell}$: it means that $\hat{v}[\ell'] = 1$ for all $\ell' > \ell$ when the game starts. $\hat{v}[\ell]$ may be $1$ originally, or change to $1$ during the game. Anyway, since the last action of $S$ is subtraction, it must play a turn in each of the levels $\ell+1,\ldots,m-1$. We define $S'$ to play like $S$ until the turn on level $\ell+1$. This turn exists because $\ell+1 \le m-1$. Then $S'$ plays addition and the game ends. $n^{S'} \le n^S$.
    \item If $OPT$ applied $-2^{\ell}$: it means that $\hat{v}[\ell'] = 0$ for all $\ell' > \ell$ when the game starts. $\hat{v}[\ell]$ may be $1$ originally, or change to $1$ during the game. Anyway, since the last action of $S$ is addition, it can at best induce some carry up to level at most $\ell+1$, and must play the rest of the levels one by one (since its last action is addition). We define $S'$ to play like $S$ until the turn on level $\ell+1$. This turn exists because $\ell+1 \le m-1$. Then $S'$ plays subtraction and the game ends. $n^{S'} \le n^S$.
\end{enumerate}
}

Next, let $d_{i_1},\ldots,d_{i_t}$ be the levels where $S'$ adds a power of $2$ and $d_{j_1},\ldots,d_{j_s}$ the levels where $S'$ subtracts a power of $2$. Define $a^{S'} = \sum_{i=i_1}^{i_t}{2^{d_i}}$ and $b^{S'} = \sum_{j=j_1}^{j_s}{2^{d_j}}$. By definition $v \equiv b^{S'} - a^{S'} \pmod {2^m}$.

\newtext{The same can be defined for $OPT$, so $v \equiv b^{OPT} - a^{OPT} \pmod {2^m}$. Since we made sure that $S'$ is in sync with $OPT$, and because $0 \le a^{S'},b^{S'},a^{OPT},b^{OPT} < 2^m$, we get that $b^{S'} - a^{S'} = b^{OPT} - a^{OPT}$. This equality is actual (not modulo), denote this number by $r$ (possibly $r<0$). So we have two signed-bits representation of the same number. Recall that by Definition~\ref{definition_game_strategies} $OPT$ can be interpreted as playing by the canonical signed-bits representation. Therefore by Property~\ref{property_signed_bits_most_sparse}, we get that: $n^{OPT} = \norm{a^{OPT}} + \norm{b^{OPT}} = |\phi(r)| \le \norm{a^{S'}} + \norm{b^{S'}} = n^{S'} \le n^{S}$.}
\end{proof}

The following lemma converts us from talking about zeroed-bits in the game to the number of turns in a game.

\begin{lemma}
\label{lemma_invert_expectation_by_renewal}
Let $S$ be one of the strategies in Definition~\ref{definition_game_strategies}. Let $\hat{v}$ be an $m$-bits number, and augment it with additional uniform and independent bits to get $v$. Let $\mu$ be the expected number of bits that are zeroed by the first turn of $S$ on $v$. Then $\lim_{m \to \infty}{\frac{\expectExplicit{T}{\hat{v} \sim [0,2^m-1]}}{m}} = \frac{1}{\mu}$.
\end{lemma}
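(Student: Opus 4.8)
The plan is to recognize this as an instance of the \emph{elementary renewal theorem} and to prove it directly via Wald's identity together with a bound on the overshoot. Let $d_1 < d_2 < \cdots$ be the successive levels at which $S$ plays (so $d_1$ is the lowest set bit of $v$), and recall from Definition~\ref{definition_game_zeroing_bits_random_variables} that $D_t = d_{t+1} - d_t$ is the number of bits zeroed in turn $t$. By Lemma~\ref{lemma_zeroing_strategies_expectations}, and using that the bits of $v$ above the current level stay fresh-uniform after each turn, the sequence $D_1, D_2, \ldots$ is i.i.d.\ with common mean $\mu$ (the expected number of bits zeroed in a single turn). The game stops exactly when no set bit remains below level $m$, i.e.\ $T = \min\{n : d_{n+1} \ge m\}$, which is the first-passage time of the renewal sum $S_n = \sum_{t=1}^n D_t = d_{n+1} - d_1$ to the (random but small) target $m - d_1$.

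First I would record three elementary facts. (a) Since each turn advances the level by at least one and all of $d_1,\ldots,d_T$ are distinct integers below $m$, we have $T \le m$ deterministically, so in particular $\expect{T} < \infty$. (b) The starting level $d_1$ satisfies $\Pr[d_1 = j] = 2^{-(j+1)}$, so $\expect{d_1} = O(1)$, and the degenerate case $\hat{v} = 0$ (equivalently $d_1 \ge m$, forcing $T = 0$) occurs with probability $2^{-m}$ and contributes a negligible $o(1)$ to every expectation below. (c) In all three strategies $D_t$ has exponential tails (for $OPT$ it is $2$ plus a geometric variable, and $RND$, $MIX$ are mixtures of this with the constant $1$), hence finite variance.

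Next, conditioning on $d_1 = j$ with $j < m$, the stopping time $T$ becomes the first passage of the i.i.d.\ sequence $(D_t)$ to level $m - j$, so Wald's identity gives $\expect{S_T \mid d_1 = j} = \mu \cdot \expect{T \mid d_1 = j}$. The telescoping identity $S_T = d_{T+1} - d_1$ together with the boundary relations $d_T \le m-1 < m \le d_{T+1} = d_T + D_T$ yields the sandwich $m - d_1 \le S_T \le m - 1 - d_1 + D_T$. Taking expectations and unconditioning on $d_1$ turns this into $m - O(1) \le \mu\,\expect{T} \le m - 1 + \expect{D_T} + O(1)$.

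The only genuinely delicate step will be bounding the overshoot term $\expect{D_T}$: because $D_T$ is the renewal interval that straddles level $m$, it is size-biased (the inspection paradox) and need not equal $\mu$. However, finiteness of the second moment of $D_t$ from fact (c) guarantees, by standard renewal theory, that $\expect{D_T}$ is bounded by a constant independent of $m$; a self-contained estimate such as $\expect{D_T} \le \mu + \expect{D_t^2}/\mu$ suffices. With $\expect{D_T} = O(1)$ in hand, the sandwich reads $\mu\,\expect{T} = m + O(1)$, so dividing by $m$ and letting $m \to \infty$ gives $\lim_{m \to \infty}\frac{\expectExplicit{T}{\hat{v} \sim [0,2^m-1]}}{m} = \frac{1}{\mu}$, as claimed.
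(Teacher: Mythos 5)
Your proof is correct, and it is in essence a self-contained derivation of the fact that the paper simply cites. The paper's own argument is a two-line reduction to the \emph{elementary renewal theorem} for a general (delayed) renewal process: it sets $D_0$ to be the number of initially-zero low-order bits of $\hat{v}$ (playing the role of your $d_1$), observes that the game runs until $\sum_{t=0}^{T}{D_t} \ge m$, invokes the i.i.d.\ property of $D_1,D_2,\ldots$ from Lemma~\ref{lemma_zeroing_strategies_expectations}, and concludes by citation. You use the identical renewal decomposition --- it is forced by Definition~\ref{definition_game_zeroing_bits_random_variables} --- but instead of the black-box citation you reprove the needed statement via Wald's identity plus an overshoot estimate, which is the standard proof of that theorem specialized to this setting. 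Your preliminary facts are all sound: $d_1 < d_2 < \cdots < d_T \le m-1$ forces $T \le m$, so Wald's identity applies unconditionally; the delay $d_1$ has $O(1)$ mean and the degenerate event $\hat{v}=0$ has probability $2^{-m}$, mirroring the paper's treatment of $D_0$; and in all three strategies $D_t$ is stochastically dominated by $2$ plus a $\mathrm{Geometric}(1/2)$ variable (the carry-run length), so all moments are finite. The one step that deserves a full justification is your straddling-interval bound $\expect{D_T} \le \mu + \expect{D_t^2}/\mu$; it is a Lorden-type inequality and true here, but note you can bypass $D_T$ entirely by writing $\mu\,\expect{T \mid d_1 = j} = \expect{S_T \mid d_1 = j} = (m-j) + \expect{\text{overshoot}}$ and applying Lorden's bound $\expect{\text{overshoot}} \le \expect{D_t^2}/\mu$ directly, or, specific to this game, by observing that the overshoot past level $m$ is at most the length of the carry run above level $m$, which is geometric and hence has $O(1)$ mean. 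What your route buys over the paper's: it is self-contained (no appeal to renewal theory as a citation) and quantitative, yielding $\mu\,\expect{T} = m + O(1)$, i.e.\ an $O(1/m)$ convergence rate that the bare limit statement of the elementary renewal theorem does not give. What it costs is precisely the inspection-paradox step you flagged, which the paper's citation absorbs silently.
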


\begin{proof}
This claim follows directly from the \emph{elementary renewal theorem} \cite[Section 1.2]{RenewalTheorem} of a \emph{general renewal process}: Denote by $D_0$ the number of least significant bits that are initially zero in $\hat{v}$. The game on $v$ is played until $\sum_{t=0}^{t=T}{D_t} \ge m$. Since all the random variables $D_t$ for $t \ge 1$ are identically distributed by Lemma~\ref{lemma_zeroing_strategies_expectations}: $\lim_{m \to \infty}{\frac{\expectExplicit{T}{\hat{v} \sim [0,2^m-1]}}{m}} = \frac{1}{\mu}$.
\end{proof}

\begin{lemma}
\label{lemma_asymptotic_rough}
$\lim_{m \to \infty}{\frac{\expectExplicit{\lambda_m(P)}{P \sim A}}{m}} \in [\frac{k}{6},\frac{k}{5}]$. In particular, for $k=2$, the limit equals to $\frac{k}{6}$.
\end{lemma}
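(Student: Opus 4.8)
The plan is to decompose $\lambda_m(P)$ into per-weight contributions and then sandwich it between the optimal single-weight strategy $OPT$ (for the lower bound) and the strategy induced by Random Matcher, which turns out to be $MIX$ (for the upper bound). First observe that for every level $d < m \le W$ the number of weights with $p_i[d]=1$ is even, since $\sum_i p_i = 2^W$ has bit $d$ equal to $0$; hence Bit Matcher pairs all of them and no weight ever transacts with target $0$ in these levels. Consequently every transaction counted by $\lambda_m(P)$ involves exactly two weights, so if $T_i$ denotes the number of such transactions touching weight $i$ then $\sum_{i=1}^{k} T_i = 2\lambda_m(P)$, i.e.\ $\lambda_m(P) = \tfrac12\sum_{i=1}^k T_i$. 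By Remark~\ref{remark_k_games} the operations Bit Matcher performs on weight $i$ form a valid play of the game of Definition~\ref{definition_game_zeroing_bits} on a value $v$ with $\hat v \equiv p_i \pmod{2^m}$, and this play zeroes $\hat v$; since $P \sim A$ is $m$-nice, $\hat v$ is uniform on $[0,2^m-1]$ and its higher bits may be taken uniform and independent.

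For the lower bound, Lemma~\ref{lemma_zeroing_strategies_opt_is_opt} gives $T_i \ge T_i^{OPT}$ for every realization, and Lemma~\ref{lemma_invert_expectation_by_renewal} with $\mu = \expect{D_t^{OPT}} = 3$ (Lemma~\ref{lemma_zeroing_strategies_expectations}) yields $\lim_{m\to\infty}\expect{T_i^{OPT}}/m = 1/3$. Summing over $i$ and using linearity of expectation gives $\lim_{m\to\infty}\expect{\lambda_m(P)}/m \ge \tfrac12 \cdot k \cdot \tfrac13 = \tfrac{k}{6}$.

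For the upper bound I would switch to Random Matcher $RM$ (whose level-by-level processing likewise pairs all weights at levels $<m$, so $N_m^{RM}(P)=\tfrac12\sum_i T_i^{RM}$) and exploit that Bit Matcher is optimal. From the viewpoint of a single weight $i$, at a turn on level $d$ its partner $j$ is chosen uniformly among the weights with bit $d$ set; $RM$ makes $i$ play $OPT$ whenever $p_i[d+1] \ne p_j[d+1]$ and a uniformly random direction when $p_i[d+1] = p_j[d+1]$. Since the higher bits of every weight stay uniform and independent (Lemma~\ref{lemma_zeroing_strategies_expectations}(1) together with $m$-niceness), the partner's bit $p_j[d+1]$ is a fair coin independent of $v$, so the tie event has probability $\tfrac12$ \emph{independent of the bits of $v$}. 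Thus the marginal play of $RM$ on weight $i$ is exactly the $MIX$ strategy of Definition~\ref{definition_game_strategies}, and Lemma~\ref{lemma_invert_expectation_by_renewal} with $\mu = \expect{D_t^{MIX}} = 2.5$ gives $\lim_{m\to\infty}\expect{T_i^{RM}}/m = 2/5$, whence $\lim_{m\to\infty}\expect{N_m^{RM}(P)}/m = \tfrac12 \cdot k \cdot \tfrac25 = \tfrac{k}{5}$ for $P\sim A$. Because Bit Matcher is optimal, $\lambda(P)\le N^{RM}(P)$ pointwise; applying Lemma~\ref{lemma_nice_set_and_rand_bits} separately to $BM$ and to $RM$ (and using $m/W\to1$) transfers this inequality to the truncated, $m$-nice quantities, giving $\lim_{m\to\infty}\expect{\lambda_m(P)}/m \le \tfrac{k}{5}$.

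Finally, for $k=2$ every transaction pairs the two weights, so $T_1=T_2=\lambda_m(P)$. A transaction occurs at level $d$ only when both weights have bit $d$ equal to $1$, and since $p_1+p_2=2^W$ the carry out of bit $d$ forces $p_1[d+1]\ne p_2[d+1]$ (using $d+1<W$, which holds as $d<m$). Hence the bit-lexicographic choice of Bit Matcher makes the weight with bit $d+1$ equal to $0$ subtract and the other add—exactly $OPT$ for both—so $T_i=T_i^{OPT}$ and the limit equals $\tfrac13=\tfrac{k}{6}$. The main obstacle is the upper bound: correctly identifying the marginal of $RM$ as $MIX$ hinges on the tie event being independent of a weight's own bits—precisely the delicate point flagged in the footnote of Lemma~\ref{lemma_zeroing_strategies_expectations}—and transferring the pointwise optimality $\lambda\le N^{RM}$ to the truncated $m$-nice setting must carefully combine Lemma~\ref{lemma_nice_set_and_rand_bits} with the vanishing contribution of the top $W-m=o(W)$ levels.
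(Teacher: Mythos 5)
Your proposal is correct and follows essentially the same route as the paper's proof: the per-weight decomposition $\sum_i T_i = 2\lambda_m(P)$, the lower bound via Lemma~\ref{lemma_zeroing_strategies_opt_is_opt} and Lemma~\ref{lemma_invert_expectation_by_renewal} with $\mu=3$, the upper bound by identifying the marginal play of Random Matcher as $MIX$ with $\mu=2.5$, and the exact $k=2$ case via the forced inequality $p_1[d+1]\ne p_2[d+1]$. The only difference worth noting is that the paper restricts the $MIX$ identification explicitly to $k\ge 3$ (your ``fair coin independent of $v$'' claim needs pairwise independence of the partners' bits, which $(k-1)$-wise independence supplies only when $k\ge 3$), but since you settle $k=2$ exactly and separately, this does not affect correctness.
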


\begin{proof}
Let $P$ be a partition. According to Remark~\ref{remark_k_games}, both $BM$ and $RM$ induce a strategy for $k$ simultaneous games with input $v_i$ such that $\hat{v}_i \equiv p_i\ (mod\ 2^m)$ for the $i$th game. When $P \sim A$, we get $\hat{v}_i \sim [0,2^m-1]$ (and each of the bits of $\hat{v}$ is uniform). We emphasize that while these values are not totally independent, by Lemma~\ref{lemma_approx_bins_by_rand_bits} each is uniform, and they are $(k-1)$-wise independent such that the constraint on all of them is $\sum_{i \in [k]}{v_i} \equiv 0\ (mod\ 2^m)$.

Denote by $n^{BM}_i$ the number of turns in game $i$ when played by the strategy induced by $BM$, denote by $n^{RM}_i$ the number of turns in game $i$ when played by the strategy induced by $RM$, and by $n^{OPT}_i$ the minimum number of turns of a strategy for game $i$. By Lemma~\ref{lemma_zeroing_strategies_opt_is_opt}, $n^{OPT}_i \le n^{BM}_i$. Note that 
since every transaction corresponds to a turn in two games we have that $\sum_{i \in [k]}{n^{BM}_i} = 2 \lambda_m(P)$. Together with the linearity of expectation we get:
$$
\lim_{m \to \infty}{\frac{\expectExplicit{\lambda_m(P)}{P \sim A}}{m}}
\ge \frac{k}{2} \lim_{m \to \infty}{\frac{\expectExplicit{n^{OPT}_i}{\hat{v}_i \sim [0,2^m-1]}}{m}}
= \frac{k}{6}
$$
where the last equality follows from Lemma~\ref{lemma_invert_expectation_by_renewal}, substituting $\mu = 3$ for $OPT$ by Lemma~\ref{lemma_zeroing_strategies_expectations}.

When $k=2$, $n^{OPT}_i = n^{BM}_i$ for $i=1,2$. The reason is that if $p_1[d] = p_2[d] = 1$, since $p_1 + p_2 = 2^W$, we must have that $p_1[d+1] \ne p_2[d+1]$, therefore $OPT_1$ and $OPT_2$ choose opposite actions in the different games, which correspond exactly to the transaction computed by $BM$. For this reason, for $k=2$ the limit equals $\frac{k}{6}$.

For the upper bound, we proceed with $k \ge 3$. Note that since $BM$ computes a \emph{shortest} sequence for $P$ and $RM$ computes \emph{some} sequence, then $\lambda_m(P) \le \lambda(P) \le |RM(P)| \le |RM_{m}(P)| + \frac{(W-m)k}{2}$ where $|RM(P)|$ is the number of transactions of $RM$ and $|RM_m(P)|$ is the number of transactions of $RM$ up to level $m$.
We get: $\sum_{i \in [k]}{n^{BM}_i} \le \sum_{i \in [k]}{n^{RM}_i} + \frac{(W-m)k}{2}$.

Observe that per Remark~\ref{remark_k_games}, $RM$ induces the strategy $MIX$ defined in Definition~\ref{definition_game_strategies}, in each game. To clarify, a pair $(i,j)$
that participate in a transaction of $RM$, play both $OPT$ in the induced strategy if $b \equiv p_i[\ell+1] \oplus p_j[\ell+1] = 1$, and  play both $RND$ (anti-synchronized) if $b=0$.
$b$ is uniform by Lemma~\ref{lemma_zeroing_strategies_expectations} (note that $k \ge 3$ thus $p_i[\ell+1]$ and $p_j[\ell+1]$ are independent).
Since $\mu = 2.5$ for $MIX$ by Lemma~\ref{lemma_zeroing_strategies_expectations}, we get by Lemma~\ref{lemma_invert_expectation_by_renewal} and the linearity of expectation that
$
\lim_{m \to \infty}{\frac{\expectExplicit{\lambda_m(P)}{P \sim A}}{m}}
\le \frac{k}{2} \lim_{m \to \infty}{\frac{\expectExplicit{n^{RM}_i}{\hat{v}_i \sim [0,2^m-1]}}{m}}
= \frac{k}{5}
$
\end{proof}

\begin{proof}[Proof of Theorem~\ref{theorem_avg_case_rand_bits_rough}]
By Lemma~\ref{lemma_nice_set_and_rand_bits} and Lemma~\ref{lemma_asymptotic_rough}:
$L(k) = 
\lim_{W \to \infty}{\frac{\expectExplicit{\lambda(P)}{P \sim S(k,W)}}{kW}}
=
\lim_{W \to \infty}{\frac{\expectExplicit{\lambda_m(P)}{P \sim A}}{kW}}
= \lim_{m \to \infty}{\frac{m}{W} \cdot \frac{\expectExplicit{\lambda_m(P)}{P \sim A}}{km}} \in [\frac{1}{6},\frac{1}{5}]
$. 
Thus $\frac{1}{6} \le L(k) \le \frac{1}{5}$. When $k=2$, the limit equals $\frac{1}{6}$.
\end{proof}

Now that we are done with Theorem~\ref{theorem_avg_case_rand_bits_rough}, we proceed towards proving the alternative bound in Theorem~\ref{theorem_bound_per_k}.

\begin{definition}
\label{definition_f_g_counters}
Let $d \ge 1$. Let $x_1,\ldots,x_k$ be $k$ numbers such that $x_i \sim [0,2^d - 1]$ for $i < k$ and $\sum_{i=1}^{k}{x_i} \equiv 0\ ({\rm mod}\ 2^d)$.
We define the random variables $F_\ell \equiv \{ i \mid \phi(x_i)[\ell] = 1 \}$ and $G_\ell \equiv \{ i \mid \phi(x_i)[\ell] = -1 \}$. In words: $F_\ell$ is the set of indices with numbers that have $1$ in their $\ell$th signed-bit, and $G_\ell$ is similar for $-1$. We define $f_\ell \equiv |F_\ell|$ and $g_\ell \equiv |G_\ell|$.
\end{definition}

\begin{definition}
\label{definition_signed_probabilities}
Let $d \ge 1$ and let $n \sim [0,2^d-1]$.
We define $p_{\ell}^{a} \equiv \Pr\left[\phi(n)[\ell] = a \mid n \sim [0,2^d-1] \right]$ and $p_\ell \equiv \frac{p_{\ell}^{+1} + p_{\ell}^{-1}}{2}$.
\end{definition}

The next two lemmas analyze properties related to $f_\ell,g_\ell$.

\begin{lemma}
\label{signed_bits_prob_strong}
Let $d \ge 1$ and let $n \sim [0,2^d-1]$. Then:
\begin{enumerate}
    \item \label{claim_plus_minus_same} For every $\ell \le d-1$: $p_{\ell}^{+1} = p_{\ell}^{-1}$. For $\ell = d-1$, $p_{\ell}^{-1} = 0$.
    
    \item \label{converge_to_1over6} $p_0 = \frac{1}{4}$, and $(p_{\ell+1} - \frac{1}{6}) = \frac{1}{2}(\frac{1}{6} - p_{\ell})$.
\end{enumerate}
\end{lemma}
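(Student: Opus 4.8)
The plan is to reduce the signed-digit computation to a small finite-state process driven by the binary bits of $n$, and then read off both parts from that process. Since $n \sim [0,2^d-1]$, its binary digits $b_0,\ldots,b_{d-1}$ are i.i.d.\ uniform in $\{0,1\}$, while $b_\ell = 0$ for $\ell \ge d$. From the defining recurrences $\phi(2n)=\phi(n)\circ 0$, $\phi(4n+1)=\phi(n)\circ 01$, $\phi(4n-1)=\phi(n)\circ 0\overline{1}$ (equivalently, from the automaton of Fig.~\ref{figure_signed_bits_automaton}) I would first establish a carry description: processing levels from least to most significant while carrying $c_\ell\in\{0,1\}$ with $c_0=0$, the digit satisfies $\phi(n)[\ell]=0$ exactly when $b_\ell=c_\ell$, and when $b_\ell\neq c_\ell$ it equals $+1$ if $b_{\ell+1}=0$ and $-1$ if $b_{\ell+1}=1$; the carry updates by $c_{\ell+1}=c_\ell$ when $b_\ell=c_\ell$ and by $c_{\ell+1}=b_{\ell+1}$ when $b_\ell\neq c_\ell$. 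Verifying this rule is a short case check on $(c_\ell,b_\ell,b_{\ell+1})$ against the recurrences.

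For part~(1), the key observation is that $c_\ell$ is a deterministic function of $b_0,\ldots,b_\ell$ only, hence independent of the look-ahead bit $b_{\ell+1}$. Therefore the event $\{\phi(n)[\ell]\neq 0\}=\{b_\ell\neq c_\ell\}$ is independent of $b_{\ell+1}$. For the interior levels $\ell \le d-2$ the bit $b_{\ell+1}$ is a genuine uniform bit, so conditioned on the digit being nonzero its sign is $+1$ or $-1$ with equal probability, giving $p_\ell^{+1}=p_\ell^{-1}$. At the top level $\ell=d-1$ the look-ahead bit is $b_d=0$ deterministically, so every nonzero digit is forced to be $+1$; this yields $p_{d-1}^{-1}=0$.

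For part~(2), $p_0=\tfrac14$ follows by a direct count, since $\phi(n)[0]=+1$ iff $n\equiv 1 \pmod 4$ and $=-1$ iff $n\equiv 3\pmod 4$, each of probability $\tfrac14$. For the recurrence I would exploit the non-adjacency encoded in the rule: a nonzero digit at level $\ell$ sets $c_{\ell+1}=b_{\ell+1}$, so at level $\ell+1$ we have $b_{\ell+1}=c_{\ell+1}$ and the next digit is forced to $0$. I therefore classify each level as \emph{free} (the carry is a function of the bits strictly below $b_{\ell+1}$, hence independent of $b_{\ell+1}$) or \emph{locked} ($c_\ell=b_\ell$, forcing a zero digit). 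A free level produces a nonzero digit with probability $\tfrac12$ and then moves to a locked level, a zero digit with probability $\tfrac12$ and stays free; a locked level deterministically returns to free. Writing $\lambda_\ell$ for the probability of being locked at level $\ell$, the transitions give $\lambda_{\ell+1}=\tfrac12(1-\lambda_\ell)$ with $\lambda_0=0$ (the initial carry $c_0=0$ is a free level). Since the digit is nonzero with probability $\tfrac12(1-\lambda_\ell)$ and $p_\ell$ is by definition half that probability, we get $p_\ell=\tfrac14(1-\lambda_\ell)$; eliminating $\lambda_\ell$ yields $p_{\ell+1}=\tfrac14-\tfrac12 p_\ell$, i.e.\ $p_{\ell+1}-\tfrac16=\tfrac12(\tfrac16-p_\ell)$, consistent with $p_0=\tfrac14$.

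The main obstacle is handling the dependence between the carry $c_\ell$ and the current bit $b_\ell$ correctly. A naive memoryless model treating the digits as independently generated by the bits would misstate the probabilities, because the non-adjacency constraint is exactly a one-step memory effect; it is the free/locked bookkeeping, together with the special transient condition $c_0=0$, that makes the clean first-order recurrence emerge. A secondary technical point is the boundary level $\ell=d-1$: the missing look-ahead bit $b_d=0$ breaks the sign symmetry of part~(1), yet it leaves the nonzero-probability untouched (it is decided by $b_{d-1}\neq c_{d-1}$, using only in-range bits), so the averaged $p_{d-1}$ and the recurrence still hold. This must be argued separately rather than subsumed into the generic step.
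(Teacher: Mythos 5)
Your proof is correct, but it takes a partially different route from the paper's. The carry-process description you set up (digit zero iff $b_\ell = c_\ell$, sign of a nonzero digit decided by the look-ahead bit $b_{\ell+1}$, carry update $c_{\ell+1}=c_\ell$ resp.\ $c_{\ell+1}=b_{\ell+1}$) is a faithful reformulation of the paper's binary-to-signed-bits automaton, and the short case check does verify it. For part~(1) the paper argues combinatorially: it builds an explicit bijection $x \mapsto x - 2^{\ell+1} \pmod{2^d}$ between the numbers in $[0,2^d-1]$ whose $\ell$-th signed digit is $+1$ and those whose digit is $-1$, and gets $p_{d-1}^{-1}=0$ from the fact that a positive number's leading signed digit is $+1$. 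You instead obtain the symmetry probabilistically: the event $\{\phi(n)[\ell]\neq 0\}=\{b_\ell \neq c_\ell\}$ is determined by $b_0,\ldots,b_\ell$, hence independent of the uniform look-ahead bit $b_{\ell+1}$ that fixes the sign. This is a genuinely different argument; it buys you a single formalism that covers both parts of the lemma, while the paper's bijection buys exact counting with no appeal to independence of bits. For part~(2) your free/locked two-state chain is essentially the paper's argument --- its states ``start'' and ``borrow'' play the role of your free levels (carry $0$ and $1$), and ``look-ahead'' the role of your locked level --- except that your version makes explicit the justification the paper leaves implicit, namely that in a free level the carry depends only on strictly lower bits, so the fresh bit $b_\ell$ produces a nonzero digit with probability exactly $\tfrac12$. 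Two cosmetic points: in your definition of \emph{free} the subscripts slipped (for level $\ell$ you want $c_\ell$ to be a function of the bits strictly below $b_\ell$, hence independent of both $b_\ell$ and $b_{\ell+1}$); and note that your proof, exactly like the paper's, establishes $p_\ell^{+1}=p_\ell^{-1}$ only for $\ell \le d-2$ --- as it must, since at $\ell=d-1$ that equality would contradict $p_{d-1}^{-1}=0$ together with $p_{d-1}^{+1}>0$, so the range stated in the lemma should be read accordingly.
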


\begin{proof}
\newtext{We prove part~\ref{claim_plus_minus_same} first. Ignore $n=0$, so $1 \le n \le 2^d-1$. Since $n$ is positive, its leading non-zero signed-bit is $1$. If $\phi(n)[d-1] = -1$, then it must be that $\phi(n)[d] = 0$ and therefore $n > 2^d$ when we take into account the leading signed-bit that is $1$. Therefore it must be that $p_{d-1}^{-1} = 0$.}

\newtext{To prove that $p_{\ell}^{+1} = p_{\ell}^{-1}$ for $\ell < d-1$, we define the following one-to-one mapping between numbers $x \in [0,2^d-1]$ with $\phi(x)[\ell] = 1$ and $x' \in [0,2^d-1]$ with $\phi(x')[\ell] = -1$.
We map $x \in [0,2^d-1]$ with $\phi(x)[\ell] = 1$ to $x' \equiv x - 2^{\ell + 1} \pmod{2^d}$. Since $\ell < d-1$, $2^{\ell+1} < 2^d$, therefore $x$ does not pair with itself. Moreover, our mapping is a cyclic-shift, so it is injective. We claim that $\phi(x')[\ell]=-1$: If $x - 2^{\ell+1} \ge 0$ then the modulo did nothing, and $\phi(x')[\ell]=-1$ is trivial. Otherwise, $x - 2^{\ell+1} < 0$ and  its signed-bits representation  has a leading $-1$ in location $\ell$, and also $\phi(x')[d-1] = 0$. The modulo adds $2^d$ to $x-2^{\ell+1}$. Since level $d$ is separated by a $0$ at level $d-1$ from the leading non-zero signed-bit of $x-2^{\ell+1}$, this addition of $2^d$ is reflected in the signed-representation only by changing level $d$ to $1$. So we conclude that $x'[\ell] = -1$. To argue that the mapping is surjective too, note that
this mapping is invertible. The inverse maps in the same way any $x' \in [0,2^d-1]$ such that $\phi(x')[\ell] = -1$ to a unique $x \in [0,2^d-1]$ with $\phi(x)[\ell] = 1$.}

\newtext{We proceed to part~\ref{converge_to_1over6}. $p_0 = \frac{1}{4}$ since exactly half of the numbers are odd ($p_0^{+1} + p_0^{-1} = \frac{1}{2}$). Next, we compute $p_{\ell+1}$ in terms of $p_\ell$ (assuming that $\ell+1 \le d-1$). With probability $2p_\ell$ we have $\phi(n)[\ell] \ne 0$ which enforces $\phi(n)[\ell+1] = 0$. With probability $1 - 2p_\ell$ we have $\phi(n)[\ell] = 0$. In this case, we claim that $\Pr[\phi(n)[\ell+1] \ne 0 \mid \phi(n)[\ell] = 0] = \frac{1}{2}$. To be convinced, take a look at the automaton that converts binary form to signed-bits form in Figure~\ref{figure_signed_bits_automaton}. Consider our state right when we computed that $\phi(n)[\ell] = 0$, then it is either ``start'' or ``borrow''. Either has a probability of $\frac{1}{2}$ to generate another $0$, and probability $\frac{1}{2}$ to move to ``look-ahead'' which generates a non-zero signed-bit in level $\ell+1$. Then $2p_{\ell+1} = (1 - 2p_\ell) \cdot \frac{1}{2} \Rightarrow  p_{\ell+1} - \frac{1}{6} = \frac{3-6p_\ell - 2}{12} = \frac{1}{2}(\frac{1}{6} - p_\ell)$.}
\end{proof}

\begin{corollary}
\label{corollary_probabilities_in_levels}
\newtext{$p_\ell \to \frac{1}{6}$ for large $\ell$. The first few values are: $p_0 = \frac{1}{4}, p_1 = \frac{1}{8}, p_2 = \frac{3}{16}, p_4 = \frac{5}{32}, p_5 = \frac{11}{64}$. In particular, notice the oscillations around (over-and-under) $\frac{1}{6}$.}
\end{corollary}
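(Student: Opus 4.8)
The plan is to solve, in closed form, the linear recurrence already handed to us by Lemma~\ref{signed_bits_prob_strong}. Part~\ref{converge_to_1over6} of that lemma supplies both the initial value $p_0 = \frac{1}{4}$ and the recurrence $p_{\ell+1} - \frac{1}{6} = \frac{1}{2}\left(\frac{1}{6} - p_\ell\right)$. The key observation is that $\frac{1}{6}$ is the unique fixed point of this affine map, so I would change variables to the deviation $e_\ell \equiv p_\ell - \frac{1}{6}$. Substituting, the recurrence collapses to the purely multiplicative form $e_{\ell+1} = -\frac{1}{2} e_\ell$, which is trivial to iterate.

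First I would record the initial deviation $e_0 = \frac{1}{4} - \frac{1}{6} = \frac{1}{12}$. Iterating $e_{\ell+1} = -\frac{1}{2} e_\ell$ then yields the closed form $e_\ell = \left(-\frac{1}{2}\right)^\ell \cdot \frac{1}{12}$, equivalently $p_\ell = \frac{1}{6} + \frac{1}{12}\left(-\frac{1}{2}\right)^\ell$. From this single expression both assertions of the corollary drop out: because $\left|\frac{1}{2}\right|^\ell \to 0$, we immediately get $p_\ell \to \frac{1}{6}$; and because the factor $\left(-\frac{1}{2}\right)^\ell$ alternates in sign, the deviation $e_\ell$ is positive for even $\ell$ and negative for odd $\ell$, so $p_\ell$ sits alternately above and below $\frac{1}{6}$ — precisely the claimed over-and-under oscillation. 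The listed numerical values follow by direct substitution of $\ell = 0, 1, 2, \dots$ into the closed form (one should note that the deviation is damped by exactly a factor of two per level, which is what drives the shrinking oscillations).

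There is no genuine obstacle here: the corollary is essentially a restatement of the consequences of Lemma~\ref{signed_bits_prob_strong}, and all the probabilistic content was already discharged in proving that lemma's recurrence. The only points worth stating carefully are that the convergence is geometric with ratio $\frac{1}{2}$ in absolute value, and that the recurrence is invoked for levels $\ell$ well below the top level $d$ (as established in the lemma's proof), so the limiting statement is understood as $\ell$ growing while $d$ is taken large enough for the recurrence to govern the relevant range.
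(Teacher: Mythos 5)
Your proposal is correct and is exactly the intended derivation: the corollary is meant to follow by iterating the recurrence of Lemma~\ref{signed_bits_prob_strong}, and your closed form $p_\ell = \frac{1}{6} + \frac{1}{12}\left(-\frac{1}{2}\right)^\ell$ cleanly packages the limit, the geometric rate $\frac{1}{2}$, and the sign alternation in one expression. One caveat: you claim the listed numerical values follow by direct substitution, but your closed form actually gives $p_3 = \frac{5}{32}$ and $p_4 = \frac{11}{64}$, whereas the corollary labels these values $p_4$ and $p_5$; this is an off-by-one typo in the paper's statement (also incompatible with the claimed over-and-under pattern, since an even index must lie \emph{above} $\frac{1}{6}$ and $\frac{5}{32} < \frac{1}{6}$), and the substitution check you invoke would have exposed it rather than confirmed it.
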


\begin{lemma}
\label{upper_bound_per_level_with_random_walk}
Recall Definitions~\ref{definition_random_walk},\ref{definition_f_g_counters},\ref{definition_signed_probabilities}. If $\ell < d-1$ then $\expect{|f_\ell - g_\ell|} < 1 + \randwalk{p_\ell}{k-1}$, and if $\ell = d-1$ then $\expect{|f_\ell - g_\ell|} = \expect{f_\ell} = 2{p_\ell} \cdot k$.
\end{lemma}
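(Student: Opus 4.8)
The plan is to reduce $f_\ell - g_\ell$ to a single signed sum and to recognize the independent part of that sum as the random walk of Definition~\ref{definition_random_walk}. For each $i \in [k]$ I would set $Z_i \equiv \phi(x_i)[\ell] \in \{-1,0,+1\}$. Since an index $i$ contributes to $F_\ell$ exactly when $\phi(x_i)[\ell]=1$ and to $G_\ell$ exactly when $\phi(x_i)[\ell]=-1$, this yields the identity $f_\ell - g_\ell = \sum_{i=1}^{k} Z_i$. By Lemma~\ref{signed_bits_prob_strong}, each $Z_i$ has, for $\ell < d-1$, the symmetric marginal $\Pr[Z_i=+1]=\Pr[Z_i=-1]=p_\ell$ and $\Pr[Z_i=0]=1-2p_\ell$, while for $\ell = d-1$ the value $-1$ never occurs, so $Z_i \in \{0,+1\}$ with $\Pr[Z_i=+1]=p_{d-1}^{+1}=2p_\ell$. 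The variables $Z_1,\ldots,Z_{k-1}$ are independent since $x_1,\ldots,x_{k-1}$ are independent and uniform; the final $x_k$ is fixed only modulo $2^d$ by $\sum_i x_i \equiv 0$, yet it remains marginally uniform on $[0,2^d-1]$, so $Z_k$ shares the same marginal law.

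For $\ell < d-1$, I would first observe that $\sum_{i=1}^{k-1} Z_i$ is precisely the position after $k-1$ independent steps of the symmetric walk of Definition~\ref{definition_random_walk} with parameter $p=p_\ell$, each step being $+1$ or $-1$ with probability $p_\ell$ and $0$ otherwise; hence $\expect{\bigl|\sum_{i=1}^{k-1} Z_i\bigr|} = \randwalk{p_\ell}{k-1}$. Next I would peel off the dependent coordinate $Z_k$ via the triangle inequality, $|f_\ell - g_\ell| \le \bigl|\sum_{i=1}^{k-1} Z_i\bigr| + |Z_k|$, take expectations, and use $\expect{|Z_k|}=\Pr[Z_k\neq 0]=2p_\ell$ to obtain $\expect{|f_\ell-g_\ell|} \le \randwalk{p_\ell}{k-1} + 2p_\ell$. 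Finally, since $2p_\ell = \Pr[Z_i \neq 0] < 1$ (the zero signed-bit occurs with positive probability), the bound sharpens to the strict inequality $\expect{|f_\ell-g_\ell|} < 1 + \randwalk{p_\ell}{k-1}$.

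For $\ell = d-1$ the plan is simpler: Lemma~\ref{signed_bits_prob_strong} gives $p_{d-1}^{-1}=0$, so $G_\ell=\emptyset$ and $g_\ell=0$ deterministically, whence $|f_\ell-g_\ell|=f_\ell$. Writing $f_\ell=\sum_{i=1}^{k}\mathbf{1}[Z_i=+1]$ and applying linearity of expectation (which needs no independence) would give $\expect{f_\ell}=\sum_{i=1}^{k}\Pr[Z_i=+1]=k\cdot p_{d-1}^{+1}=2p_\ell k$, the claimed equality.

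The delicate point is the dependence of $Z_k$ on $Z_1,\ldots,Z_{k-1}$ through the modular constraint, so that $f_\ell-g_\ell$ is \emph{not} literally a $k$-step walk. I expect the main obstacle to be isolating the genuinely independent $(k-1)$-step walk while controlling the leftover coordinate; the resolution is that $Z_k$ can shift the sum by at most one unit and that $\expect{|Z_k|}=2p_\ell$ is strictly below $1$, which is exactly what supplies the additive ``$+1$'' with slack and hence the strict inequality.
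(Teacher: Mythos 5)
Your proposal is correct and takes essentially the same route as the paper: the paper likewise handles $\ell = d-1$ via $p_{d-1}^{-1} = 0$ (so $g_\ell = 0$ and $\expect{f_\ell} = 2p_\ell k$), and for $\ell < d-1$ views $|f_\ell - g_\ell|$ as a symmetric walk with movement probability $p_\ell$ whose $k$th step is not independent of the rest, "absorbing an extra $1$" for that dependent step. Your explicit decomposition $Z_k$ plus triangle inequality, together with the observation $\expect{|Z_k|} = 2p_\ell < 1$, merely spells out in detail what the paper states in one line, and in fact justifies the strict inequality more carefully than the paper does.
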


\begin{proof}
For $\ell = d-1$, by Lemma~\ref{signed_bits_prob_strong} we know that $p_{\ell}^{-} = 0$ therefore $g_\ell = 0$. Also $p_{\ell}^{+} = 2p_\ell$, thus $\expect{|f_\ell - g_\ell|} = \expect{f_\ell} = 2{p_\ell} \cdot k$. For $\ell < d-1$, the expression $|f_\ell - g_\ell|$ is the absolute distance of a random walk with movement probabilities $p_{\ell}^{+} = p_{\ell}^{-} = p_{\ell}$ (by Lemma~\ref{signed_bits_prob_strong}), except that the $k$th step is not independent of the first $k-1$ steps, so we absorb an extra $1$ to our bound.
\end{proof}

The next two lemmas give useful limits for the analysis.

\begin{lemma}
\label{lemma_upper_bound_per_level_with_random_walk}
Recall Definitions~\ref{definition_random_walk},\ref{definition_signed_probabilities}. Then:\\
$\lim_{d \to \infty}{\frac{\sum_{\ell=0}^{d-1}{\randwalk{p_\ell}{n}}}{d}}
= \lim_{\ell \to \infty}{\randwalk{p_\ell}{n}}
= \randwalk{\frac{1}{6}}{n}$.
\end{lemma}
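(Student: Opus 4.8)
The plan is to read the displayed chain of equalities from right to left and establish each equality from a standard analytic fact, holding the number of steps $n$ fixed throughout. For the rightmost equality, $\lim_{\ell \to \infty}{\randwalk{p_\ell}{n}} = \randwalk{\frac{1}{6}}{n}$, the crucial observation is that for fixed $n$ the explicit formula in Definition~\ref{definition_random_walk} writes $\randwalk{p}{n}$ as a \emph{finite} sum of terms $\binom{n}{\ell,r}\, p^{\ell+r}(1-2p)^{n-\ell-r}\,|\ell-r|$, which is a polynomial in $p$ and hence a continuous function of $p$ on $[0,\tfrac12]$. Since each $p_\ell$ lies in this interval, continuity reduces the claim to the convergence $p_\ell \to \frac16$.

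That convergence is supplied by Corollary~\ref{corollary_probabilities_in_levels} (equivalently, Lemma~\ref{signed_bits_prob_strong}, part~\ref{converge_to_1over6}). Concretely, I would solve the recurrence $p_{\ell+1} - \tfrac16 = -\tfrac12\bigl(p_\ell - \tfrac16\bigr)$ with initial value $p_0 = \tfrac14$, obtaining $p_\ell - \tfrac16 = \tfrac{1}{12}\bigl(-\tfrac12\bigr)^{\ell}$, so that $p_\ell \to \frac16$ geometrically. Feeding this into the continuous map $p \mapsto \randwalk{p}{n}$ yields $\randwalk{p_\ell}{n} \to \randwalk{\frac16}{n}$, establishing the rightmost equality.

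For the leftmost equality I would invoke the Ces\`aro mean theorem: if a real sequence $a_\ell$ converges to a limit $L$, then its running arithmetic averages $\tfrac1d\sum_{\ell=0}^{d-1} a_\ell$ converge to the same $L$. Applying this with $a_\ell = \randwalk{p_\ell}{n}$ and $L = \randwalk{\frac16}{n}$ (the value just identified) gives $\lim_{d\to\infty}\tfrac1d\sum_{\ell=0}^{d-1}\randwalk{p_\ell}{n} = \randwalk{\frac16}{n}$, which closes the chain.

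There is essentially no substantive obstacle here: both steps are routine analysis, namely continuity of a polynomial in $p$ and Ces\`aro convergence of the averages of a convergent sequence. The only point requiring a word of care is that $n$ must be held fixed so that $\randwalk{\cdot}{n}$ is a single finitely-supported continuous function of its first argument; all of the limiting behavior then comes through the convergence of the probabilities $p_\ell$, not through any variation of the step count.
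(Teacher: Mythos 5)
Your proof is correct, and it shares the paper's overall architecture: first establish the pointwise limit $\lim_{\ell\to\infty}\randwalk{p_\ell}{n} = \randwalk{\frac{1}{6}}{n}$, then obtain the leftmost equality as the Ces\`aro average of a convergent sequence (the paper dispatches that step in one sentence, exactly as you do). The genuine difference is in the analytic core. You observe that, for fixed $n$, the explicit formula in Definition~\ref{definition_random_walk} makes $\randwalk{p}{n}$ a polynomial in $p$, hence continuous, so the limit follows from $p_\ell \to \frac{1}{6}$, which you get by solving the recurrence of Lemma~\ref{signed_bits_prob_strong} in closed form, $p_\ell = \frac{1}{6} + \frac{1}{12}\left(-\frac{1}{2}\right)^{\ell}$. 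The paper instead proves a quantitative statement: it rewrites $\randwalk{p}{n} = \sum_{i=0}^{n}\binom{n}{i}(2p)^{i}(1-2p)^{n-i}\randwalk{\frac{1}{2}}{i}$, bounds the ratio $\randwalk{\frac{1}{6}\pm\epsilon}{n}/\randwalk{\frac{1}{6}}{n}$ between $(1-6\epsilon)^{n}$ and $(1+6\epsilon)^{n}$, and feeds in $|p_\ell - \frac{1}{6}| \le 2^{-\ell}$. Your route is shorter and more elementary; the paper's machinery buys an explicit convergence rate, but those ratio bounds are never reused elsewhere, so nothing is lost by your simplification. One caveat applies equally to both proofs: by Definition~\ref{definition_signed_probabilities} the $p_\ell$ depend on $d$, and the recurrence of Lemma~\ref{signed_bits_prob_strong} is only valid strictly below the top level (indeed $p_{d-1}^{-1}=0$, so $p_{d-1}$ does not follow the closed form). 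Hence the sum $\sum_{\ell=0}^{d-1}$ contains one aberrant term; since $\randwalk{p}{n}\le n$ for every $p$, that term contributes at most $n/d$ to the average and does not affect the limit. This is worth a remark, but its omission is shared with the paper's proof (which only notes ``for $\ell < d$'') and is not a gap specific to your argument.
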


\begin{proof}
\newtext{We prove below the second equality. The first then follows as the average of a convergent sequence. We start by bounding the ratio $\frac{\randwalk{p \pm \epsilon}{n}}{\randwalk{p}{n}}$ as follows. First, note that we write an explicit expression for $\randwalk{p}{n}$ as:
$$
\randwalk{p}{n}
= \sum_{i=0}^{n}{
        \binom{n}{i} \cdot {(2p)}^i \cdot {(1-2p)}^{n-i} \cdot \randwalk{\frac{1}{2}}{i}
    }
$$
(This decides when we make a move. $RW(\frac{1}{2},i)$ then decides on the direction of each move.)
Denote $q = 1-2p$. We can bound our ratio by the maximum ratio over each summand separately, which cancels out some of the terms. We get:
$$\frac{\randwalk{p \pm \epsilon}{n}}{\randwalk{p}{n}} \le \max_{0 \le i \le n}{{\left(\frac{2(p \pm \epsilon)}{2p}\right)}^i \cdot {\left(\frac{1 - 2(p \pm \epsilon)}{1-2p}\right)}^{n-i}}
$$
$$
\le \max_{0 \le i \le n}{{\left(1 + \frac{\epsilon}{p}\right)}^i \cdot {\left(1 + \frac{2\epsilon}{q}\right)}^{n-i}}
\le {\left(1 + \epsilon \cdot \max\left(\frac{1}{p},\frac{2}{q}\right)\right)}^n
$$
For $p=\frac{1}{6}$ we get:
\begin{equation}
\label{eq_upper_ratio_bound}
\frac{\randwalk{\frac{1}{6} \pm \epsilon}{n}}{\randwalk{\frac{1}{6}}{n}} \le {(1 + 6\epsilon)}^n
\end{equation}
Similarly, we can find a lower bound by considering the minimum ratio (replace $\epsilon \to -\epsilon$ and $\max$ by $\min$):
\begin{equation}
\label{eq_lower_ratio_bound}
{\left(1 - 6\epsilon\right)}^n \le \frac{\randwalk{\frac{1}{6} \pm \epsilon}{n}}{\randwalk{\frac{1}{6}}{n}}
\end{equation}
}

\newtext{
We can write $p_\ell = \frac{1}{6} + \epsilon_\ell$, and by Lemma~\ref{signed_bits_prob_strong}, $|\epsilon_\ell| = |p_\ell - \frac{1}{6}| \le \frac{1}{2^{\ell}}$. Then by Equations~(\ref{eq_upper_ratio_bound})-(\ref{eq_lower_ratio_bound}), $\frac{\randwalk{p_\ell}{n}}{\randwalk{\frac{1}{6}}{n}} \in ((1-\frac{6}{2^\ell})^n,(1+\frac{6}{2^\ell})^n)$. Therefore $\lim_{\ell \to \infty}{\randwalk{p_\ell}{n}} = \randwalk{\frac{1}{6}}{n}$ (for $\ell < d$), and the sequence converges for $d \to \infty$.
}
\end{proof}

\begin{lemma}
\label{lemma_random_walk_with_skip_convert_to_fewer_steps_no_skip}
Recall Definition~\ref{definition_random_walk}: $\lim_{n \to \infty}{\frac{\randwalk{p}{n}}{\sqrt{4pn/\pi}}} = 1$.
\end{lemma}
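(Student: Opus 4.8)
The plan is to recognize $\randwalk{p}{n}$ as the expected absolute displacement of a sum of i.i.d.\ steps and then invoke the central limit theorem. Writing $S_n = \sum_{t=1}^{n} X_t$ with the $X_t$ independent and $\Pr[X_t = +1] = \Pr[X_t = -1] = p$, $\Pr[X_t = 0] = 1 - 2p$, the explicit formula in Definition~\ref{definition_random_walk} is exactly $\randwalk{p}{n} = \expect{|S_n|}$: indexing by the number $r$ of right steps and $\ell$ of left steps, the weight $\binom{n}{\ell,r}\,p^{\ell+r}(1-2p)^{n-\ell-r}$ is the probability of that outcome and $|\ell-r|$ is its displacement. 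Each step has $\expect{X_t} = 0$ and $\expect{X_t^2} = 2p$, hence $\expect{S_n} = 0$ and $\mathrm{Var}(S_n) = 2pn$.

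First I would standardize by setting $Z_n = S_n / \sqrt{2pn}$. By the classical CLT for i.i.d.\ summands of finite variance, $Z_n$ converges in distribution to a standard normal $Z$. Since $\expect{Z_n^2} = 1$ for every $n$, the family $\{Z_n\}$ is bounded in $L^2$ and therefore uniformly integrable, so $\{|Z_n|\}$ is uniformly integrable as well. Convergence in distribution together with uniform integrability upgrades to convergence of the first absolute moments, so $\expect{|Z_n|} \to \expect{|Z|} = \sqrt{2/\pi}$, the mean absolute value of a standard normal. Unwinding the normalization gives $\expect{|S_n|} = \sqrt{2pn}\,\expect{|Z_n|} \sim \sqrt{2pn}\cdot\sqrt{2/\pi} = \sqrt{4pn/\pi}$, which is the claimed asymptotic.

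The main obstacle is precisely the passage from convergence in distribution to convergence of $\expect{|Z_n|}$: the absolute-value functional is unbounded, and this step is false without an integrability control, but the uniform $L^2$ bound above supplies exactly the uniform integrability required. An alternative, more self-contained route avoids the general uniform-integrability machinery by conditioning on the number $C$ of nonzero steps: $C \sim \mathrm{Binomial}(n, 2p)$, and, as already used in the proof of Lemma~\ref{lemma_upper_bound_per_level_with_random_walk}, $\randwalk{p}{n} = \expect{\randwalk{1/2}{C}}$. Combining the classical estimate $\randwalk{1/2}{i} \sim \sqrt{2i/\pi}$ for the expected absolute value of an $i$-step simple $\pm 1$ walk (immediate from Stirling applied to its exact binomial-coefficient expression) with the concentration of $C$ about its mean $2pn$ (so that $\expect{\sqrt{C}} \sim \sqrt{2pn}$) yields $\randwalk{p}{n} \sim \sqrt{2\cdot 2pn/\pi} = \sqrt{4pn/\pi}$ once more, now without invoking the CLT at all.
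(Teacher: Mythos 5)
Your main argument is correct and takes a genuinely different route from the paper. The paper conditions on the number $n'$ of nonzero steps: it treats $n'$ as a Binomial$(n,2p)$ variable, applies a Chernoff bound with $\epsilon = (pn)^{-1/3}$ to show $n'$ concentrates around $2pn$, and then sandwiches $\randwalk{p}{n}$ between values of $\randwalk{\frac{1}{2}}{\cdot}$ (using its monotonicity), invoking the classical fact $\randwalk{\frac{1}{2}}{n} \sim \sqrt{2n/\pi}$ as a black box. Your primary route instead writes $\randwalk{p}{n} = \expect{|S_n|}$ for a single sum of i.i.d.\ steps of variance $2p$, applies the CLT to $Z_n = S_n/\sqrt{2pn}$, and upgrades convergence in distribution to convergence of $\expect{|Z_n|}$ via the uniform integrability supplied by the constant $L^2$ bound $\expect{Z_n^2}=1$ — you correctly identified that this upgrade is the step that fails without integrability control, and you supplied exactly the right fix. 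This one-shot argument absorbs the laziness of the walk into the variance and needs neither Chernoff nor the monotonicity trick; it is arguably more self-contained, since the paper's cited fact about $\randwalk{\frac{1}{2}}{n}$ is itself usually proved by Stirling or by this same CLT-plus-UI device (here you only need the elementary computation $\expect{|Z|}=\sqrt{2/\pi}$). What the paper's approach buys in exchange is quantitative control: the Chernoff conditioning yields explicit error terms rather than a bare limit. Your closing "alternative route" (conditioning on $C\sim\mathrm{Binomial}(n,2p)$, using $\randwalk{p}{n} = \expect{\randwalk{\frac{1}{2}}{C}}$ and concentration of $C$) is essentially the paper's own proof, so you have in effect reproduced both the published argument and a cleaner independent one; the only implicit hypothesis worth stating in either version is $p>0$, since the claim degenerates at $p=0$.
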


\begin{proof}
\newtext{The fact that $\lim_{n \to \infty}{\frac{\randwalk{\frac{1}{2}}{n}}{\sqrt{2n/\pi}}} = 1$ is well-known~\cite{randomWalk}. Denote by $n' \le n$ the number of steps in which the random walk moves. $\expect{n'} = 2pn$ (recall that the probability to move is $2p$). Since each step is independent, by Chernoff's bound: $\Pr[|n' - 2pn| > \epsilon \cdot 2pn] \le 2 exp(-\frac{2}{3}pn \epsilon^2)$. Choose $\epsilon = (pn)^{-1/3}$, and denote for short $\alpha \equiv \Pr[|n' - 2pn| > 2(pn)^{2/3}] \le 2 exp(-\frac{2}{3}{(pn)}^{1/3})$ and $N \equiv 2pn + 2(pn)^{2/3}$. By the monotonicity of $\randwalk{\frac{1}{2}}{i}$ as a function of $i$ we have that $\randwalk{\frac{1}{2}}{n'} \le \randwalk{\frac{1}{2}}{N}$, and therefore by conditioning on the event whose probability is $\alpha$:
$$\randwalk{p}{n} \le \alpha \cdot n + (1-\alpha) \cdot \randwalk{\frac{1}{2}}{N}$$
Note that $\lim_{n \to \infty}{\alpha} = 0$ and that $\lim_{n \to \infty}{\frac{N}{2pn}} = 1$, thus:
$$
\lim_{n \to \infty}{\frac{\randwalk{p}{n}}{\sqrt{4pn/\pi}}}
\le 
\lim_{n \to \infty}{\frac{\randwalk{\frac{1}{2}}{N}}{\sqrt{4pn/\pi}}}
$$
$$
=
\lim_{n \to \infty}{\frac{\randwalk{\frac{1}{2}}{N}}{\sqrt{2N/\pi}} \cdot \sqrt{\frac{N}{2pn}}}
= 1
$$
Similarly, $\randwalk{p}{n} \ge (1-\alpha) \cdot \randwalk{\frac{1}{2}}{2pn - 2(pn)^{2/3}}$, and we get a lower bound of $1$ as well. Therefore by the sandwich theorem, $\lim_{n \to \infty}{\frac{\randwalk{p}{n}}{\sqrt{4pn/\pi}}} = 1$.}
\end{proof}

\begin{proof}[Proof of Theorem~\ref{theorem_bound_per_k}]
The equality $\lim_{k \to \infty}{L(k)} = \frac{1}{6}$ is due to the sandwich theorem, following the main claim of Theorem~\ref{theorem_bound_per_k}, together with the lower bound of $\frac{1}{6}$ from Theorem~\ref{theorem_avg_case_rand_bits_rough}.

To prove the main claim, we analyze the length of a sequence generated by $SM$ (see Algorithm~\ref{alg_match_variants}), which by definition satisfies $\lambda(P) \le |SM(P)|$. We derive $c(k)$ from the expected length of this sequence. We denote by $n^{SM}(P) \equiv |SM(P)|$, and similarly by $n^{SM}_m(P)$ the number of transactions at levels $\ell < m$.

In terms of our game, $SM$ induces a strategy for $k$ simultaneous games, such that it plays $OPT$ in every game. However, the number of transactions in each level is not half of the number of turns played at that level, it is $\max(|F_\ell|,|G_\ell|)$ because $SM$ saves turns compared to transactions only due to pairings. Also, by Lemma~\ref{lemma_nice_set_and_rand_bits}:
$\lim_{W \to \infty}{\frac{\expectExplicit{n^{SM}(P)}{P \sim S(k,W)}}{W}} = \lim_{W \to \infty}{\frac{\expectExplicit{n^{SM}_m(P)}{P \sim A}}{W}}$.

Denote by $n^{OPT}_i$ the number of turns to zero the least $m$ bits in game $i$ using $OPT$. Since $n^{SM}_m(P) = \sum_{\ell=0}^{m-1}{\max(f_\ell,g_\ell)}$, we get:
$\sum_{i \in [k]}{n^{OPT}_i} = \sum_{\ell=0}^{m-1}{f_\ell + g_\ell}
= \sum_{\ell=0}^{m-1}{2 \max(f_\ell,g_\ell) - |f_\ell - g_\ell|}
\Rightarrow n^{SM}_m(P) = \frac{1}{2} \sum_{i \in [k]}{n^{OPT}_i} + \frac{1}{2} \sum_{\ell=0}^{m-1}{|f_\ell - g_\ell|}
\Rightarrow
\lim_{W \to \infty}{\frac{\expectExplicit{n^{SM}_m(P)}{P \sim A}}{mk}} =
\frac{1}{6} + 
\lim_{m \to \infty}{\frac{ \sum_{\ell=0}^{m-1}{\expect{|f_\ell - g_\ell|}}}{2mk}}
$, where the $\frac{1}{6}$ limit has been argued in the proof of Lemma~\ref{lemma_asymptotic_rough}. Then it follows by Lemma~\ref{upper_bound_per_level_with_random_walk} (the inequality), and by Lemma~\ref{lemma_upper_bound_per_level_with_random_walk} (the equality):
$\lim_{m \to \infty}{\frac{\sum_{\ell=0}^{m-1}{\expect{|f_\ell - g_\ell|}}}{2km}}
\le \lim_{m \to \infty}{\frac{m + k + \sum_{\ell=0}^{m-1}{\randwalk{p_\ell}{k-1}}}{2km}}
= \frac{1 + \randwalk{\frac{1}{6}}{k-1})}{2k}
$.
We conclude that $L(k) \le \frac{1}{6} + c(k)$ for $c(k) \equiv \frac{1 + \randwalk{\frac{1}{6}}{k-1}}{2k}$. Finally by Lemma~\ref{lemma_random_walk_with_skip_convert_to_fewer_steps_no_skip}, $\lim_{k \to \infty}{\sqrt{6\pi k} \cdot c(k)} = 1$, and in particular, $\lim_{k \to \infty}{L(k)} = \frac{1}{6}$.
\end{proof}

\begin{remark}
\label{remark_value_of_bound_per_k}
\newtext{For large values of $k$ we have the approximation $c(k) \approx \frac{1}{\sqrt{6 \pi k}}$. For small values of $k$ we can compute directly $RW(\frac{1}{6},k-1)$. In fact, it may be more accurate to bound $L(k)$ from above with the expression $c'(k) = \frac{\randwalk{\frac{1}{6}}{k}}{2k}$ instead of $c(k) = \frac{1 + \randwalk{\frac{1}{6}}{k-1}}{2k}$. In other words, not to regard the $k$th step in any special way, although the sum of all $k$ numbers is $0$ modulo $2^m$. The justification to somewhat ignore this dependency is because $SM$ does not maintain a sum that is a power of $2$: in every level of signed-bits we just zero the signed-bits. This results in reduced correlation the higher we go in the levels. As a concrete toy-example, notice that in level $\ell = 0$ it must be that $f_0 + g_0$ (Definition~\ref{definition_f_g_counters}) is even. However, if $f_0 = g_0 = 1$ then $f_1 + g_1$ is even, and if $f_0 = 2, g_0 = 0$ then $f_1 + g_1$ is odd.}
\end{remark}

\subsection{\newtext{A note regarding Unordered Random Partition}}
\newtext{All of our average-case analysis assumed that we sample ordered-partitions. In mathematics partitions are more commonly studied unordered, whereas ordered-partitions usually appear in combinatorics, in problems of ``throwing balls into bins''. Either way, the choice of either sampling space is a bit arbitrary.}

\newtext{One slight advantage of working with ordered partitions is that they are easier to sample (for Section~\ref{section_experiments}), compared to sampling uniformly (unordered) partitions with a fixed number of parts~\cite{partitionsMathFast}. See~\cite[Section 6.1]{SadehThesis} for more details and sampling-references. However, the main issue is not sampling, but the reduced symmetry. Ordered partitions have much more symmetry because many partitions are equal up-to pemuting their parts. This symmetry let us state and prove Lemma~\ref{lemma_approx_bins_by_rand_bits} in relative ease.}

\newtext{If we try to prove an analogue lemma for (unordered) partitions, we can use a simple reduction to reuse most of the arguments. Noting that the probability of sampling a partition $P$ with $k$ parts is at most $k!$ larger among unordered partitions compared to its probability among ordered partitions (partitions with many equal parts are under-represented in the universe of ordered partitions), we can repeat the arguments to get the exact same expression except that the factor $1 - k^{2-d}$ is replaced by $1 - (k^{2-d})\cdot k!$.}

\newtext{If we still wish the density of the nice set to be $\Theta(1)$, i.e.\ to have $1 - (k^{2-d})\cdot k! \approx \Theta(1)$, we need $d -2 \approx \log_k{(k!)}$. This roughly implies that $d \approx k - \frac{k}{\ln k}$, and since we have $W - d \cdot \lg k$, then $k \cdot \lg k \le W$ is required. Compare this to $\lg k \le W$ in the scenario of ordered partitions.}

\newtext{In order to get a more meaningful result, one may analyze directly a more favorable construction of a nice set $A$. Alternatively, perhaps the existing construction is not as bad as losing a factor of $k!$, because both $A$ and its complement are affected, so perhaps a more careful analysis will yield a better density with respect to the universe of (unordered) partitions.}

\section{Experimental Results}
\label{section_experiments}

\newtext{In this section we list several experiments which we used to complement our theoretical analysis. We divide them into subsections. In Subsection~\ref{subsection_experiments_signed_bits_bounds} we evaluate how good, on average, are the lower and upper bounds of Theorem~\ref{theorem_bounds_signed_bits}. In Subsection~\ref{subsection_experiments_average_case} we evaluate $\expect{\lambda(P)}$ when the partition is sampled uniformly over the set of ordered-partition with fixed parameters $k$ and $W$. In Subsection~\ref{section_experiments_real_data} we use real data to derive partitions, and evaluate $\lambda(P)$ for ``typical partitions'' rather than uniformly random ordered-partitions.}

\subsection{\newtext{Evaluating the Signed-bits LPM Bounds}}
\label{subsection_experiments_signed_bits_bounds}
In this section we evaluate experimentally the upper and lower bounds of Theorem~\ref{theorem_bounds_signed_bits}.
While these signed-bits bounds are partition-specific,
the natural parameters with which we are working are $k$ (number of parts) and $W$ (sum of $2^W$). For all partitions with the same $k$ and $W$, we estimated
the average  ratio of our signed-bits  bounds and the true size of the smallest LPM TCAM.

\looseness=-1
We sample many ordered-partitions for fixed $k$ and $W$, and average over all of them. An ordered partition is a partition where the order of the parts matters, e.g. $[1,3] \ne [3,1]$. We sample uniformly ordered-partitions $P$ as follows: Choose uniformly a subset of $k-1$ values $B \subset \{1,\ldots,2^W-1\}$, denote  the $i^{th}$ smallest value in $B$ by $b_i$ and define $b_0 = 0$ and $b_k = 2^W$. The $i^{th}$ part of the partition is $b_{i} - b_{i-1} > 0$.

For each partition, we estimate how good the bounds are by computing the ratios $\frac{\ceil{(S(P)+1)/2}}{\lambda(P)}$ and $\frac{S(P) + 1 - M(P)}{\lambda(P)}$. Then, we compute the (numeric) expectation of these values by averaging over $10,\!000$ sampled partitions per pair of the parameters $(k,W)$. 

Fig.~\ref{figure_bounds} plots the results for $W \in [10,100]$ and $k=3,4,8,16,100$. Overall, there are 10 graphs in the figure, upper and lower bounds for each of the values of $k$. The upper bounds are separated clearly, while the lower bounds are relatively closer together, and approximately satisfy $\expect{\frac{\ceil{(S(P)+1)/2}}{\lambda(P)}} \approx 0.9$ when $W \ge 20$.

From this experiment, it is clear that the lower bound can be used as a good estimator to the actual value $\lambda(P)$ if $W$ is not too small.
Note that it is much easier to compute this lower bound rather than to compute $\lambda(P)$ itself since it only requires counting bits in the signed-bits representation of $P$'s parts.\footnote{Counting is simpler to implement than Bit Matcher or Niagara, though in practice an implementation would be available in order to compute the TCAM rules. Counting bits is also technically quicker (negligible in practice).} The upper bound, on the other hand, gets looser when $k$ grows larger, which is expected by Remark~\ref{remark_improved_phi_upper_bound}, yet the ratio between these bounds will never exceed $2$ \newtext{(as noted in Remark~\ref{remark_approximation}).}

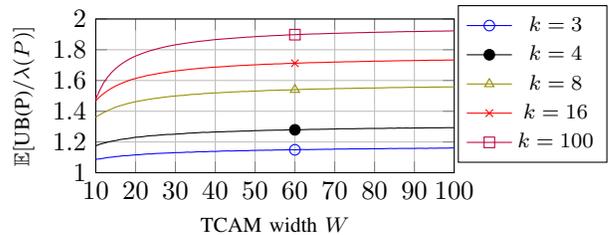
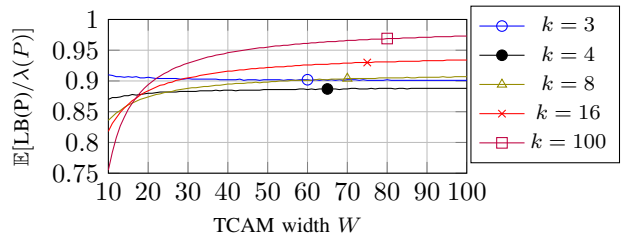
\begin{figure}[t!]
    \centering
    \subfigure[Upper Bound, $UB(P) \equiv S(P)+1 - M(P)$]{
	\begin{tikzpicture}
            \begin{axis}[
                width=0.35\textwidth, height=0.2\textwidth, 
        		xmin=10, xmax=100, xlabel=TCAM width $W$,
                xtick={10,20,30,40,50,60,70,80,90,100},
                ytick = {1.0,1.2,1.4,1.6,1.8,2.0}, ymin = 1.0, ymax = 2.0,
        		ylabel= {$\expect{\text{UB(P)}/\lambda(P)}$},
        		legend style={at={(1.01,1.09)}, anchor=north west,font=\footnotesize, },
        		label style={font=\footnotesize},
        		grid=both
            ]
            
            \addplot [blue,mark=o] coordinates { (60, 1.149) };
            \addlegendentry{$k=3$}
            \addplot [black,mark=*] coordinates { (60, 1.279) }; \addlegendentry{$k=4$}
            \addplot [olive,mark=triangle] coordinates { (60, 1.540) }; \addlegendentry{$k=8$}
            \addplot [red,mark=x] coordinates { (60, 1.712) };
        	\addlegendentry{$k=16$}
        	\addplot [purple,mark=square] coordinates { (60, 1.898) };
            \addlegendentry{$k=100$}

            \addplot [blue] coordinates {
            (10, 1.087) (11, 1.090) (12, 1.094) (13, 1.099) (14, 1.101) (15, 1.105) (16, 1.107) (17, 1.111) (18, 1.111) (19, 1.114) (20, 1.116) (21, 1.117) (22, 1.120) (23, 1.122) (24, 1.122) (25, 1.125) (26, 1.126) (27, 1.127) (28, 1.128) (29, 1.129) (30, 1.131) (31, 1.131) (32, 1.132) (33, 1.134) (34, 1.133) (35, 1.135) (36, 1.137) (37, 1.136) (38, 1.137) (39, 1.139) (40, 1.138) (41, 1.140) (42, 1.141) (43, 1.142) (44, 1.141) (45, 1.142) (46, 1.143) (47, 1.143) (48, 1.143) (49, 1.145) (50, 1.145) (51, 1.146) (52, 1.146) (53, 1.147) (54, 1.146) (55, 1.148) (56, 1.148) (57, 1.148) (58, 1.149) (59, 1.150) (60, 1.149) (61, 1.150) (62, 1.151) (63, 1.150) (64, 1.151) (65, 1.151) (66, 1.152) (67, 1.152) (68, 1.153) (69, 1.152) (70, 1.154) (71, 1.154) (72, 1.154) (73, 1.154) (74, 1.155) (75, 1.155) (76, 1.155) (77, 1.155) (78, 1.155) (79, 1.156) (80, 1.156) (81, 1.156) (82, 1.156) (83, 1.157) (84, 1.158) (85, 1.157) (86, 1.158) (87, 1.158) (88, 1.158) (89, 1.158) (90, 1.158) (91, 1.159) (92, 1.159) (93, 1.158) (94, 1.160) (95, 1.159) (96, 1.160) (97, 1.160) (98, 1.160) (99, 1.161) (100, 1.161)
            }; 
            
            \addplot [black] coordinates {
            (10, 1.175) (11, 1.185) (12, 1.192) (13, 1.199) (14, 1.205) (15, 1.209) (16, 1.215) (17, 1.219) (18, 1.222) (19, 1.226) (20, 1.231) (21, 1.232) (22, 1.235) (23, 1.239) (24, 1.241) (25, 1.242) (26, 1.245) (27, 1.247) (28, 1.249) (29, 1.249) (30, 1.251) (31, 1.253) (32, 1.254) (33, 1.257) (34, 1.258) (35, 1.259) (36, 1.261) (37, 1.263) (38, 1.262) (39, 1.264) (40, 1.265) (41, 1.266) (42, 1.266) (43, 1.268) (44, 1.268) (45, 1.269) (46, 1.270) (47, 1.271) (48, 1.271) (49, 1.272) (50, 1.274) (51, 1.274) (52, 1.275) (53, 1.276) (54, 1.276) (55, 1.277) (56, 1.277) (57, 1.277) (58, 1.279) (59, 1.279) (60, 1.279) (61, 1.280) (62, 1.280) (63, 1.281) (64, 1.282) (65, 1.282) (66, 1.282) (67, 1.283) (68, 1.283) (69, 1.284) (70, 1.284) (71, 1.286) (72, 1.286) (73, 1.286) (74, 1.286) (75, 1.286) (76, 1.287) (77, 1.287) (78, 1.288) (79, 1.288) (80, 1.288) (81, 1.288) (82, 1.288) (83, 1.289) (84, 1.290) (85, 1.289) (86, 1.291) (87, 1.290) (88, 1.291) (89, 1.290) (90, 1.291) (91, 1.291) (92, 1.291) (93, 1.292) (94, 1.292) (95, 1.292) (96, 1.292) (97, 1.292) (98, 1.293) (99, 1.293) (100, 1.293)
            }; 
            
            \addplot [olive] coordinates {
            (10, 1.361) (11, 1.377) (12, 1.392) (13, 1.404) (14, 1.415) (15, 1.427) (16, 1.435) (17, 1.442) (18, 1.450) (19, 1.456) (20, 1.462) (21, 1.467) (22, 1.471) (23, 1.476) (24, 1.480) (25, 1.484) (26, 1.487) (27, 1.491) (28, 1.494) (29, 1.497) (30, 1.499) (31, 1.502) (32, 1.503) (33, 1.506) (34, 1.509) (35, 1.510) (36, 1.511) (37, 1.515) (38, 1.516) (39, 1.518) (40, 1.519) (41, 1.520) (42, 1.522) (43, 1.523) (44, 1.525) (45, 1.526) (46, 1.527) (47, 1.529) (48, 1.528) (49, 1.531) (50, 1.531) (51, 1.532) (52, 1.533) (53, 1.535) (54, 1.536) (55, 1.536) (56, 1.537) (57, 1.538) (58, 1.538) (59, 1.539) (60, 1.540) (61, 1.541) (62, 1.542) (63, 1.543) (64, 1.544) (65, 1.544) (66, 1.545) (67, 1.545) (68, 1.545) (69, 1.546) (70, 1.547) (71, 1.548) (72, 1.547) (73, 1.549) (74, 1.548) (75, 1.550) (76, 1.550) (77, 1.550) (78, 1.551) (79, 1.551) (80, 1.551) (81, 1.552) (82, 1.552) (83, 1.553) (84, 1.554) (85, 1.554) (86, 1.554) (87, 1.554) (88, 1.555) (89, 1.556) (90, 1.555) (91, 1.556) (92, 1.556) (93, 1.556) (94, 1.557) (95, 1.557) (96, 1.557) (97, 1.558) (98, 1.558) (99, 1.558) (100, 1.559)
            }; 
            
            \addplot [red] coordinates {
            (10, 1.467) (11, 1.494) (12, 1.515) (13, 1.534) (14, 1.550) (15, 1.564) (16, 1.576) (17, 1.587) (18, 1.597) (19, 1.606) (20, 1.613) (21, 1.620) (22, 1.626) (23, 1.633) (24, 1.637) (25, 1.642) (26, 1.647) (27, 1.651) (28, 1.655) (29, 1.659) (30, 1.662) (31, 1.665) (32, 1.668) (33, 1.671) (34, 1.674) (35, 1.676) (36, 1.679) (37, 1.681) (38, 1.683) (39, 1.685) (40, 1.687) (41, 1.688) (42, 1.690) (43, 1.692) (44, 1.694) (45, 1.695) (46, 1.697) (47, 1.698) (48, 1.699) (49, 1.701) (50, 1.702) (51, 1.703) (52, 1.705) (53, 1.706) (54, 1.706) (55, 1.708) (56, 1.708) (57, 1.709) (58, 1.711) (59, 1.711) (60, 1.712) (61, 1.713) (62, 1.714) (63, 1.715) (64, 1.716) (65, 1.716) (66, 1.717) (67, 1.717) (68, 1.718) (69, 1.719) (70, 1.720) (71, 1.720) (72, 1.721) (73, 1.722) (74, 1.722) (75, 1.722) (76, 1.723) (77, 1.724) (78, 1.724) (79, 1.725) (80, 1.726) (81, 1.726) (82, 1.726) (83, 1.727) (84, 1.727) (85, 1.728) (86, 1.728) (87, 1.729) (88, 1.729) (89, 1.729) (90, 1.730) (91, 1.730) (92, 1.731) (93, 1.731) (94, 1.732) (95, 1.732) (96, 1.732) (97, 1.732) (98, 1.733) (99, 1.733) (100, 1.733)
            }; 
            
            \addplot [purple] coordinates {
            (10, 1.479) (11, 1.536) (12, 1.586) (13, 1.622) (14, 1.653) (15, 1.677) (16, 1.699) (17, 1.717) (18, 1.733) (19, 1.747) (20, 1.759) (21, 1.769) (22, 1.779) (23, 1.788) (24, 1.796) (25, 1.803) (26, 1.810) (27, 1.816) (28, 1.822) (29, 1.827) (30, 1.832) (31, 1.836) (32, 1.840) (33, 1.844) (34, 1.848) (35, 1.851) (36, 1.855) (37, 1.857) (38, 1.860) (39, 1.863) (40, 1.866) (41, 1.868) (42, 1.870) (43, 1.872) (44, 1.874) (45, 1.876) (46, 1.878) (47, 1.880) (48, 1.882) (49, 1.883) (50, 1.885) (51, 1.887) (52, 1.888) (53, 1.889) (54, 1.891) (55, 1.892) (56, 1.893) (57, 1.894) (58, 1.896) (59, 1.897) (60, 1.898) (61, 1.899) (62, 1.900) (63, 1.901) (64, 1.902) (65, 1.903) (66, 1.903) (67, 1.904) (68, 1.905) (69, 1.906) (70, 1.907) (71, 1.907) (72, 1.908) (73, 1.909) (74, 1.909) (75, 1.910) (76, 1.911) (77, 1.911) (78, 1.912) (79, 1.913) (80, 1.913) (81, 1.914) (82, 1.914) (83, 1.915) (84, 1.915) (85, 1.916) (86, 1.917) (87, 1.917) (88, 1.918) (89, 1.918) (90, 1.918) (91, 1.919) (92, 1.919) (93, 1.920) (94, 1.920) (95, 1.921) (96, 1.921) (97, 1.921) (98, 1.922) (99, 1.922) (100, 1.923)
            }; 
        \end{axis}
    \end{tikzpicture}
    
    \label{figure_bounds_upper_bound}
    }
    
    \subfigure[Lower Bound, $LB(P) \equiv \ceil{(S(P)+1)/2}$]{
	\begin{tikzpicture}
            \begin{axis}[
                width=0.35\textwidth, height=0.2\textwidth, 
        		xmin=10, xmax=100, xlabel=TCAM width $W$,
                xtick={10,20,30,40,50,60,70,80,90,100},
                ytick = {0.7,0.75,0.8,0.85,0.9,0.95,1.0}, ymin = 0.75, ymax = 1.0,
        		ylabel= {$\expect{\text{LB(P)}/\lambda(P)}$},
        		legend style={at={(1.01,1.09)}, anchor=north west,font=\footnotesize, },
        		label style={font=\footnotesize},
        		grid=both
            ]
            
            \addplot [blue,mark=o] coordinates { (60, 0.902) }; \addlegendentry{$k=3$}
        	\addplot [black,mark=*] coordinates { (65, 0.887) }; \addlegendentry{$k=4$}
        	\addplot [olive,mark=triangle] coordinates { (70, 0.904) }; \addlegendentry{$k=8$}
        	\addplot [red,mark=x] coordinates { (75, 0.930) }; \addlegendentry{$k=16$}
        	\addplot [purple,mark=square] coordinates { (80, 0.969) }; \addlegendentry{$k=100$}

            \addplot [blue] coordinates {
            (10, 0.910) (11, 0.909) (12, 0.907) (13, 0.907) (14, 0.906) (15, 0.907) (16, 0.906) (17, 0.907) (18, 0.905) (19, 0.905) (20, 0.906) (21, 0.905) (22, 0.904) (23, 0.904) (24, 0.904) (25, 0.904) (26, 0.904) (27, 0.904) (28, 0.903) (29, 0.903) (30, 0.904) (31, 0.903) (32, 0.903) (33, 0.903) (34, 0.903) (35, 0.903) (36, 0.903) (37, 0.903) (38, 0.902) (39, 0.902) (40, 0.902) (41, 0.902) (42, 0.902) (43, 0.903) (44, 0.901) (45, 0.902) (46, 0.902) (47, 0.902) (48, 0.901) (49, 0.902) (50, 0.902) (51, 0.902) (52, 0.902) (53, 0.902) (54, 0.901) (55, 0.902) (56, 0.902) (57, 0.901) (58, 0.902) (59, 0.902) (60, 0.902) (61, 0.901) (62, 0.902) (63, 0.901) (64, 0.902) (65, 0.901) (66, 0.902) (67, 0.901) (68, 0.902) (69, 0.901) (70, 0.902) (71, 0.902) (72, 0.902) (73, 0.901) (74, 0.902) (75, 0.901) (76, 0.901) (77, 0.901) (78, 0.901) (79, 0.901) (80, 0.902) (81, 0.901) (82, 0.901) (83, 0.901) (84, 0.901) (85, 0.901) (86, 0.901) (87, 0.901) (88, 0.901) (89, 0.901) (90, 0.901) (91, 0.901) (92, 0.901) (93, 0.901) (94, 0.901) (95, 0.901) (96, 0.901) (97, 0.901) (98, 0.901) (99, 0.901) (100, 0.901)
            }; 
            
        	\addplot [black] coordinates {
        	(10, 0.870) (11, 0.873) (12, 0.873) (13, 0.874) (14, 0.876) (15, 0.876) (16, 0.878) (17, 0.877) (18, 0.878) (19, 0.879) (20, 0.880) (21, 0.880) (22, 0.881) (23, 0.882) (24, 0.882) (25, 0.882) (26, 0.882) (27, 0.882) (28, 0.883) (29, 0.883) (30, 0.883) (31, 0.883) (32, 0.883) (33, 0.884) (34, 0.884) (35, 0.884) (36, 0.884) (37, 0.885) (38, 0.884) (39, 0.885) (40, 0.885) (41, 0.885) (42, 0.884) (43, 0.885) (44, 0.885) (45, 0.885) (46, 0.885) (47, 0.885) (48, 0.885) (49, 0.886) (50, 0.886) (51, 0.886) (52, 0.886) (53, 0.886) (54, 0.886) (55, 0.886) (56, 0.886) (57, 0.886) (58, 0.887) (59, 0.887) (60, 0.886) (61, 0.887) (62, 0.886) (63, 0.887) (64, 0.887) (65, 0.887) (66, 0.887) (67, 0.887) (68, 0.886) (69, 0.887) (70, 0.887) (71, 0.888) (72, 0.888) (73, 0.887) (74, 0.887) (75, 0.887) (76, 0.888) (77, 0.887) (78, 0.888) (79, 0.888) (80, 0.888) (81, 0.888) (82, 0.887) (83, 0.888) (84, 0.888) (85, 0.887) (86, 0.888) (87, 0.888) (88, 0.888) (89, 0.888) (90, 0.888) (91, 0.888) (92, 0.888) (93, 0.888) (94, 0.888) (95, 0.888) (96, 0.888) (97, 0.888) (98, 0.888) (99, 0.888) (100, 0.888)
            }; 
            
            \addplot [olive] coordinates {
            (10, 0.836) (11, 0.842) (12, 0.848) (13, 0.852) (14, 0.857) (15, 0.861) (16, 0.864) (17, 0.867) (18, 0.870) (19, 0.872) (20, 0.874) (21, 0.876) (22, 0.878) (23, 0.879) (24, 0.881) (25, 0.882) (26, 0.883) (27, 0.885) (28, 0.886) (29, 0.887) (30, 0.888) (31, 0.889) (32, 0.889) (33, 0.890) (34, 0.891) (35, 0.891) (36, 0.892) (37, 0.893) (38, 0.893) (39, 0.894) (40, 0.895) (41, 0.895) (42, 0.896) (43, 0.896) (44, 0.896) (45, 0.897) (46, 0.897) (47, 0.898) (48, 0.897) (49, 0.898) (50, 0.898) (51, 0.899) (52, 0.899) (53, 0.900) (54, 0.900) (55, 0.900) (56, 0.901) (57, 0.901) (58, 0.901) (59, 0.901) (60, 0.901) (61, 0.901) (62, 0.902) (63, 0.903) (64, 0.902) (65, 0.902) (66, 0.903) (67, 0.903) (68, 0.903) (69, 0.903) (70, 0.904) (71, 0.904) (72, 0.904) (73, 0.904) (74, 0.904) (75, 0.904) (76, 0.904) (77, 0.904) (78, 0.905) (79, 0.905) (80, 0.905) (81, 0.905) (82, 0.905) (83, 0.905) (84, 0.905) (85, 0.906) (86, 0.905) (87, 0.905) (88, 0.906) (89, 0.906) (90, 0.906) (91, 0.906) (92, 0.906) (93, 0.906) (94, 0.907) (95, 0.906) (96, 0.906) (97, 0.907) (98, 0.907) (99, 0.907) (100, 0.907)
            }; 
            
            \addplot [red] coordinates {
            (10, 0.818) (11, 0.831) (12, 0.840) (13, 0.849) (14, 0.856) (15, 0.862) (16, 0.868) (17, 0.872) (18, 0.877) (19, 0.881) (20, 0.884) (21, 0.887) (22, 0.890) (23, 0.893) (24, 0.894) (25, 0.897) (26, 0.898) (27, 0.901) (28, 0.902) (29, 0.904) (30, 0.905) (31, 0.906) (32, 0.907) (33, 0.909) (34, 0.910) (35, 0.911) (36, 0.912) (37, 0.913) (38, 0.914) (39, 0.915) (40, 0.916) (41, 0.916) (42, 0.917) (43, 0.918) (44, 0.918) (45, 0.919) (46, 0.920) (47, 0.920) (48, 0.921) (49, 0.921) (50, 0.922) (51, 0.922) (52, 0.923) (53, 0.923) (54, 0.923) (55, 0.924) (56, 0.924) (57, 0.925) (58, 0.925) (59, 0.926) (60, 0.926) (61, 0.926) (62, 0.927) (63, 0.927) (64, 0.927) (65, 0.927) (66, 0.928) (67, 0.928) (68, 0.928) (69, 0.929) (70, 0.929) (71, 0.929) (72, 0.930) (73, 0.930) (74, 0.930) (75, 0.930) (76, 0.930) (77, 0.931) (78, 0.931) (79, 0.931) (80, 0.931) (81, 0.931) (82, 0.931) (83, 0.932) (84, 0.932) (85, 0.932) (86, 0.932) (87, 0.932) (88, 0.932) (89, 0.933) (90, 0.933) (91, 0.933) (92, 0.933) (93, 0.933) (94, 0.933) (95, 0.934) (96, 0.934) (97, 0.934) (98, 0.934) (99, 0.934) (100, 0.934)
            }; 
            
            \addplot [purple] coordinates {
            (10, 0.755) (11, 0.784) (12, 0.808) (13, 0.827) (14, 0.841) (15, 0.854) (16, 0.864) (17, 0.873) (18, 0.881) (19, 0.888) (20, 0.894) (21, 0.899) (22, 0.904) (23, 0.908) (24, 0.912) (25, 0.915) (26, 0.919) (27, 0.922) (28, 0.924) (29, 0.927) (30, 0.929) (31, 0.931) (32, 0.933) (33, 0.935) (34, 0.937) (35, 0.939) (36, 0.940) (37, 0.942) (38, 0.943) (39, 0.944) (40, 0.946) (41, 0.947) (42, 0.948) (43, 0.949) (44, 0.950) (45, 0.951) (46, 0.952) (47, 0.953) (48, 0.954) (49, 0.954) (50, 0.955) (51, 0.956) (52, 0.957) (53, 0.957) (54, 0.958) (55, 0.958) (56, 0.959) (57, 0.960) (58, 0.960) (59, 0.961) (60, 0.961) (61, 0.962) (62, 0.962) (63, 0.963) (64, 0.963) (65, 0.964) (66, 0.964) (67, 0.964) (68, 0.965) (69, 0.965) (70, 0.966) (71, 0.966) (72, 0.966) (73, 0.967) (74, 0.967) (75, 0.967) (76, 0.968) (77, 0.968) (78, 0.968) (79, 0.968) (80, 0.969) (81, 0.969) (82, 0.969) (83, 0.969) (84, 0.970) (85, 0.970) (86, 0.970) (87, 0.970) (88, 0.971) (89, 0.971) (90, 0.971) (91, 0.971) (92, 0.972) (93, 0.972) (94, 0.972) (95, 0.972) (96, 0.972) (97, 0.973) (98, 0.973) (99, 0.973) (100, 0.973)
            }; 
        \end{axis}
    \end{tikzpicture}
    
    \label{figure_bounds_lower_bound}
    }

    \caption{Estimating the bounds of Theorem~\ref{theorem_bounds_signed_bits}.}
    \label{figure_bounds}
\end{figure}

\subsection{\newtext{Average-case Evaluation}}
\label{subsection_experiments_average_case}

\newtext{In this subsection we evaluate numerically $\expect{\lambda(P)}$ for partitions that are sampled with a fixed pair of parameters $(k,W)$. Remark~\ref{remark_size_lower_bound_k} is a very weak lower bound, and Theorem~\ref{theorem_size_upper_bound_third} is a worst-case upper bound so we can hope to find that on average the size of the TCAM is smaller. In particular, it would also be interesting to see how this expected value behaves compared to the asymptotic bounds that we proved in Theorem~\ref{theorem_avg_case_rand_bits_rough}.}

\newtext{We generated the data by sampling $10,000$ partitions for each $(k,W)$ combination. Fig.~\ref{figure_complexity_vs_W_python_1overkW} shows $\frac{\expect{\lambda(P)}}{kW}$ as a function of the TCAM width $W \in [10,100]$ for partitions of $k=3,4,8,16,100$ targets. One  can think of  $\frac{\expect{\lambda(P)}}{kW}$  as "average rule per bit" since there are $kW$ bits in the binary representation of a partition $P$ with $k$ parts, each a $W$-bit word.}

\newtext{We see that for fixed $k$, $\frac{\expect{\lambda(P)}}{kW}$ converge as $W$ increases. When $W$ is small, there are two opposite phenomena that are not "smoothed out":}
\begin{enumerate}[label=(\arabic*),nosep,topsep=0pt]
    \item \label{effect_1} \newtext{In the lower levels there are more transactions than expected. The reason is that less $1$-bits get cancelled due to carry from lower levels, so Bit Matcher executes relatively more transactions.}
    
    \item \label{effect_2} \newtext{Recall that the top $\lg k$ bit-levels are sparse (mostly zero, as argued in the proof of Theorem~\ref{theorem_size_upper_bound_third}). When $W$ is small the relative part of these levels is larger.}
\end{enumerate}
\newtext{In Fig.~\ref{figure_complexity_vs_W_python_1overkW}, effect~\ref{effect_1} is dominant for $k=3,4$, and effect~\ref{effect_2} is dominant for $k=100$. For $k=8,16$ the effects mostly cancel out.}

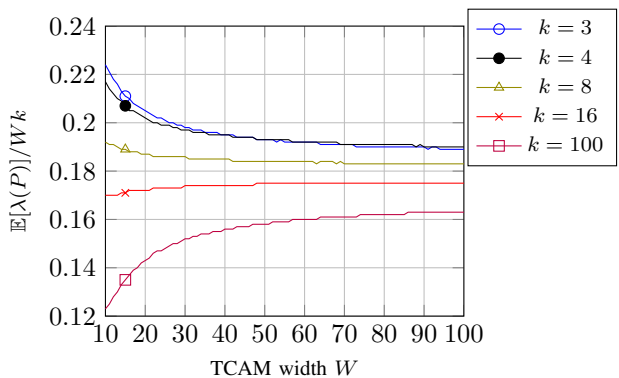
\begin{figure}[!t]
	\centering
	\begin{tikzpicture}
            \begin{axis}[
                width=0.35\textwidth, height=0.3\textwidth, 
        		xmin=10, xmax=100, xlabel=TCAM width $W$,
                xtick={10,20,30,40,50,60,70,80,90,100},
                ytick = {0.12,0.14,0.16,0.18,0.2,0.22,0.24}, ymin = 0.12, ymax = 0.24,
        		ylabel= {$\expect{\lambda(P)}/Wk$},
        		legend style={at={(1.01,1.06)}, anchor=north west,font=\footnotesize, },
        		label style={font=\footnotesize},
        		grid=both
            ]
            
            \addplot [blue,mark=o] coordinates { (15, 0.211) }; \addlegendentry{$k=3$}
        	\addplot [black,mark=*] coordinates { (15, 0.207) }; \addlegendentry{$k=4$}
        	\addplot [olive,mark=triangle] coordinates { (15, 0.189) }; \addlegendentry{$k=8$}
        	\addplot [red,mark=x] coordinates { (15, 0.171) }; \addlegendentry{$k=16$}
            \addplot [purple,mark=square] coordinates { (15, 0.135) }; \addlegendentry{$k=100$}
            
            \addplot [blue] coordinates {
            (10, 0.224) (11, 0.221) (12, 0.218) (13, 0.216) (14, 0.213) (15, 0.211) (16, 0.210) (17, 0.208) (18, 0.207) (19, 0.206) (20, 0.205) (21, 0.204) (22, 0.203) (23, 0.202) (24, 0.202) (25, 0.201) (26, 0.200) (27, 0.200) (28, 0.199) (29, 0.199) (30, 0.198) (31, 0.198) (32, 0.197) (33, 0.197) (34, 0.197) (35, 0.196) (36, 0.196) (37, 0.196) (38, 0.196) (39, 0.195) (40, 0.195) (41, 0.195) (42, 0.195) (43, 0.194) (44, 0.194) (45, 0.194) (46, 0.194) (47, 0.194) (48, 0.193) (49, 0.193) (50, 0.193) (51, 0.193) (52, 0.193) (53, 0.192) (54, 0.193) (55, 0.192) (56, 0.192) (57, 0.192) (58, 0.192) (59, 0.192) (60, 0.192) (61, 0.192) (62, 0.192) (63, 0.192) (64, 0.191) (65, 0.191) (66, 0.191) (67, 0.191) (68, 0.191) (69, 0.191) (70, 0.191) (71, 0.191) (72, 0.191) (73, 0.190) (74, 0.190) (75, 0.190) (76, 0.190) (77, 0.190) (78, 0.190) (79, 0.190) (80, 0.190) (81, 0.190) (82, 0.190) (83, 0.190) (84, 0.190) (85, 0.190) (86, 0.190) (87, 0.190) (88, 0.190) (89, 0.190) (90, 0.190) (91, 0.189) (92, 0.190) (93, 0.190) (94, 0.189) (95, 0.189) (96, 0.189) (97, 0.189) (98, 0.189) (99, 0.189) (100, 0.189)
            }; 
        	\addplot [black] coordinates {
        	(10, 0.217) (11, 0.214) (12, 0.212) (13, 0.210) (14, 0.209) (15, 0.207) (16, 0.205) (17, 0.205) (18, 0.204) (19, 0.203) (20, 0.202) (21, 0.201) (22, 0.200) (23, 0.200) (24, 0.199) (25, 0.199) (26, 0.199) (27, 0.198) (28, 0.198) (29, 0.197) (30, 0.197) (31, 0.197) (32, 0.196) (33, 0.196) (34, 0.196) (35, 0.196) (36, 0.195) (37, 0.195) (38, 0.195) (39, 0.195) (40, 0.195) (41, 0.195) (42, 0.194) (43, 0.194) (44, 0.194) (45, 0.194) (46, 0.194) (47, 0.194) (48, 0.193) (49, 0.193) (50, 0.193) (51, 0.193) (52, 0.193) (53, 0.193) (54, 0.193) (55, 0.193) (56, 0.193) (57, 0.192) (58, 0.192) (59, 0.192) (60, 0.192) (61, 0.192) (62, 0.192) (63, 0.192) (64, 0.192) (65, 0.192) (66, 0.192) (67, 0.192) (68, 0.192) (69, 0.191) (70, 0.191) (71, 0.191) (72, 0.191) (73, 0.191) (74, 0.191) (75, 0.191) (76, 0.191) (77, 0.191) (78, 0.191) (79, 0.191) (80, 0.191) (81, 0.191) (82, 0.191) (83, 0.191) (84, 0.191) (85, 0.191) (86, 0.191) (87, 0.191) (88, 0.190) (89, 0.191) (90, 0.190) (91, 0.190) (92, 0.190) (93, 0.190) (94, 0.190) (95, 0.190) (96, 0.190) (97, 0.190) (98, 0.190) (99, 0.190) (100, 0.190)
            }; 
        	\addplot [olive] coordinates {
        	(10, 0.192) (11, 0.191) (12, 0.191) (13, 0.190) (14, 0.189) (15, 0.189) (16, 0.188) (17, 0.188) (18, 0.188) (19, 0.187) (20, 0.187) (21, 0.187) (22, 0.186) (23, 0.186) (24, 0.186) (25, 0.186) (26, 0.186) (27, 0.186) (28, 0.186) (29, 0.186) (30, 0.186) (31, 0.185) (32, 0.185) (33, 0.185) (34, 0.185) (35, 0.185) (36, 0.185) (37, 0.185) (38, 0.185) (39, 0.185) (40, 0.185) (41, 0.185) (42, 0.184) (43, 0.184) (44, 0.184) (45, 0.184) (46, 0.184) (47, 0.184) (48, 0.184) (49, 0.184) (50, 0.184) (51, 0.184) (52, 0.184) (53, 0.184) (54, 0.184) (55, 0.184) (56, 0.184) (57, 0.184) (58, 0.184) (59, 0.184) (60, 0.184) (61, 0.184) (62, 0.184) (63, 0.183) (64, 0.184) (65, 0.184) (66, 0.184) (67, 0.184) (68, 0.183) (69, 0.184) (70, 0.183) (71, 0.183) (72, 0.183) (73, 0.183) (74, 0.183) (75, 0.183) (76, 0.183) (77, 0.183) (78, 0.183) (79, 0.183) (80, 0.183) (81, 0.183) (82, 0.183) (83, 0.183) (84, 0.183) (85, 0.183) (86, 0.183) (87, 0.183) (88, 0.183) (89, 0.183) (90, 0.183) (91, 0.183) (92, 0.183) (93, 0.183) (94, 0.183) (95, 0.183) (96, 0.183) (97, 0.183) (98, 0.183) (99, 0.183) (100, 0.183)
            }; 
        	\addplot [red] coordinates {
        	(10, 0.170) (11, 0.170) (12, 0.170) (13, 0.170) (14, 0.171) (15, 0.171) (16, 0.172) (17, 0.172) (18, 0.172) (19, 0.172) (20, 0.172) (21, 0.172) (22, 0.173) (23, 0.173) (24, 0.173) (25, 0.173) (26, 0.173) (27, 0.173) (28, 0.173) (29, 0.173) (30, 0.174) (31, 0.174) (32, 0.174) (33, 0.174) (34, 0.174) (35, 0.174) (36, 0.174) (37, 0.174) (38, 0.174) (39, 0.174) (40, 0.174) (41, 0.174) (42, 0.174) (43, 0.174) (44, 0.174) (45, 0.174) (46, 0.174) (47, 0.174) (48, 0.175) (49, 0.175) (50, 0.175) (51, 0.175) (52, 0.175) (53, 0.175) (54, 0.175) (55, 0.175) (56, 0.175) (57, 0.175) (58, 0.175) (59, 0.175) (60, 0.175) (61, 0.175) (62, 0.175) (63, 0.175) (64, 0.175) (65, 0.175) (66, 0.175) (67, 0.175) (68, 0.175) (69, 0.175) (70, 0.175) (71, 0.175) (72, 0.175) (73, 0.175) (74, 0.175) (75, 0.175) (76, 0.175) (77, 0.175) (78, 0.175) (79, 0.175) (80, 0.175) (81, 0.175) (82, 0.175) (83, 0.175) (84, 0.175) (85, 0.175) (86, 0.175) (87, 0.175) (88, 0.175) (89, 0.175) (90, 0.175) (91, 0.175) (92, 0.175) (93, 0.175) (94, 0.175) (95, 0.175) (96, 0.175) (97, 0.175) (98, 0.175) (99, 0.175) (100, 0.175)
            }; 
            \addplot [purple] coordinates {
            (10, 0.123) (11, 0.125) (12, 0.128) (13, 0.130) (14, 0.133) (15, 0.135) (16, 0.137) (17, 0.139) (18, 0.140) (19, 0.142) (20, 0.143) (21, 0.144) (22, 0.146) (23, 0.147) (24, 0.147) (25, 0.148) (26, 0.149) (27, 0.150) (28, 0.150) (29, 0.151) (30, 0.152) (31, 0.152) (32, 0.153) (33, 0.153) (34, 0.154) (35, 0.154) (36, 0.154) (37, 0.155) (38, 0.155) (39, 0.155) (40, 0.156) (41, 0.156) (42, 0.156) (43, 0.157) (44, 0.157) (45, 0.157) (46, 0.157) (47, 0.158) (48, 0.158) (49, 0.158) (50, 0.158) (51, 0.158) (52, 0.159) (53, 0.159) (54, 0.159) (55, 0.159) (56, 0.159) (57, 0.160) (58, 0.160) (59, 0.160) (60, 0.160) (61, 0.160) (62, 0.160) (63, 0.160) (64, 0.161) (65, 0.161) (66, 0.161) (67, 0.161) (68, 0.161) (69, 0.161) (70, 0.161) (71, 0.161) (72, 0.161) (73, 0.161) (74, 0.162) (75, 0.162) (76, 0.162) (77, 0.162) (78, 0.162) (79, 0.162) (80, 0.162) (81, 0.162) (82, 0.162) (83, 0.162) (84, 0.162) (85, 0.162) (86, 0.163) (87, 0.163) (88, 0.163) (89, 0.163) (90, 0.163) (91, 0.163) (92, 0.163) (93, 0.163) (94, 0.163) (95, 0.163) (96, 0.163) (97, 0.163) (98, 0.163) (99, 0.163) (100, 0.163)
            }; 
        \end{axis}
    \end{tikzpicture}
    
    \caption{\newtext{Statistics of $\frac{\expect{\lambda(P)}}{Wk}$ for various values of $W$ and fixed $k$. For fixed $k$, this expectation stabilizes when $W$ grows.}}
    \label{figure_complexity_vs_W_python_1overkW}
\end{figure}

\subsection{\newtext{Complexity of ``Real-Data Partitions''}}
\label{section_experiments_real_data}

\newtext{In this section we analyze the number of rules required for ``real data partitions''. As we do not have a data-center of our own with actual details of the relative power of each target (CPU power, memory, etc.), we resorted to consider public captures, from which we derived partitions according to some arbitrary assumptions. We use the data captured and anonymized in \cite{ftp_packet_captures},\footnote{\newtext{Available for download at: https://ee.lbl.gov/anonymized-traces.html}} which is a 10 day traffic of FTP from January 2003, containing 3.2 million packets in 22 thousand connections between 5832 distinct clients to 320 distinct servers.}

\newtext{While it is impossible to know based on the traffic itself whether it is an intended load-balancing or simply clients connect to different servers according to their preferences, or in the case of FTP possibly some files are located on specific servers and not others, for the sake of ``deriving partitions'' we assume that the traffic represents the desired partition of load in few scenarios we describe below. We stress that while this is an arbitrary decision, we can't deduce much from the data without this assumption or a similar one.}

\newtext{Based on this assumption, we sliced the data to windows of one hour each, starting at the time of the first packet, to get a total of 240 time frames. We extracted from each frame three partitions, according to three types of loads on the servers that communicated in that time: (1) the number of unique clients per server (``load balancing sessions''); (2) the number of incoming packets (``load balancing requests''); (3) the number of outgoing bytes (``load balancing data-processing''). Overall, we get 720 partitions with sums that range in $[18,112] \cup \{298\}$ (connections), $[2143,16306]$ (packets) and $[196637,1618691]$ (sent bytes). The number of parts in the partitions varies among $k \in \{4{-}18, 20, 21, 23, 28, 58, 64, 67, 260\}$.}

\newtext{Most of the partitions do not sum to a power of $2$, as should be expected. Therefore we normalize them to a width that is a multiple of $8$, i.e. each partition is normalized to a sum that is a power of $2^8$, and round the values to integers such that the rounded partition is closest in $L_1$-distance to the initial non-integer partition. Then, for each partition $P$ we compute $\lambda(P)$. Fig.~\ref{figure_real_data_all} shows the results, where the $x$-axis is the number of targets $k$ used to approximate a partition and the $y$-axis is $\lambda(P)$. Note that the scatter is roughly clustered in three stripes, corresponding to $W=8,16,24$ (marked with different colors in the scatter-plot). In fact, because of the typical sizes of the system, every partition of type(1) has $W=8$ except for one anomaly with $W=16$, every partition of type(2) has $W=16$, and every partition of type (3) has $W=24$.}

\begin{figure}[t!]
    \centering
    
	\begin{tikzpicture}
            \begin{axis}[
                width=0.5\textwidth, height=0.3\textwidth,
                xlabel=Number of targets $k$,
        		xmin=0, xmax=30,
                xtick={0,5,10,15,20,25,30},
                ymin = 0, ymax =90,
                ytick = {0,10,20,30,40,50,60,70,80,90},
        		ylabel= {$\lambda(P)$},
        		legend style={at={(0.01,0.99)}, anchor=north west,font=\footnotesize, },
        		label style={font=\footnotesize},
        		grid=both
            ]
            
            \addplot [purple,mark=*,only marks,mark size=1pt] coordinates {
            (10,36) (7,26) (13,49) (8,31) (7,30) (6,26) (8,33) (8,34) (8,37) (18,66) (15,59) (13,49) (8,30) (6,23) (9,37) (11,43) (5,23) (10,38) (8,33) (8,33) (5,25) (6,23) (5,20) (5,24) (6,28) (8,34) (6,26) (8,31) (5,21) (11,39) (5,20) (6,27) (12,43) (7,32) (8,35) (6,27) (13,48) (12,44) (5,22) (5,22) (6,29) (9,31) (6,25) (11,39) (5,20) (7,31) (6,26) (5,20) (5,22) (6,27) (14,47) (5,23) (12,39) (5,23) (12,43) (5,26) (4,19) (8,32) (7,31) (9,34) (6,27) (6,24) (9,37) (10,37) (7,29) (7,28) (7,32) (4,18) (8,33) (7,30) (20,64) (9,34) (7,28) (7,29) (5,22) (7,27) (7,27) (7,29) (7,31) (9,37) (12,47) (6,27) (4,20) (6,27) (4,20) (5,22) (8,32) (12,42) (9,39) (15,49) (7,30) (6,24) (6,23) (5,22) (9,40) (7,28) (21,79) (6,23) (8,33) (10,40) (6,27) (6,23) (14,47) (6,21) (5,23) (7,29) (10,39) (11,43) (7,28) (8,30) (11,42) (5,26) (7,29) (6,29) (5,23) (6,29) (4,18) (23,78) (4,20) (6,27) (11,37) (4,20) (5,25) (5,24) (10,43) (6,26) (4,19) (6,28) (8,30) (15,58) (9,34) (8,31) (7,29) (9,36) (6,27) (8,31) (12,46) (9,35) (14,48) (9,35) (7,27) (8,33) (11,42) (14,55) (11,45) (13,46) (11,38) (16,56) (10,39) (10,36) (18,65) (10,38) (11,40) (7,27) (7,30) (9,37) (8,33) (8,32) (11,41) (7,28) (7,29) (10,39) (11,44) (9,39) (6,27) (14,51) (8,33) (18,62) (9,34) (8,31) (10,41) (9,33) (20,70) (15,58) (11,41) (14,53) (9,36) (9,37) (7,30) (6,28) (9,34) (7,28) (9,35) (10,41) (7,29) (9,34) (7,29) (10,41) (8,29) (11,41) (8,31) (9,34) (8,35) (13,50) (11,45) (11,43) (14,54) (10,38) (8,36) (8,34) (9,37) (8,29) (15,51) (10,35) (6,25) (8,30) (11,39) (9,34) (6,24) (8,33) (9,38) (8,33) (8,33) (11,40) (15,51) (13,47)  (16,60) (28,89) (6,25) (14,50) (9,34) (12,47) (13,51) (11,43) (8,35) (12,46) (7,28) (5,22) (6,22) (10,39) (9,33) (9,36) (7,29) (8,30) (14,51) (7,24)
            }; \addlegendentry{bytes partitions.}
            
            \addplot [blue,mark=o,only marks,mark size=1pt] coordinates {
            (10,23) (7,19) (13,33) (8,19) (8,22) (6,18) (8,21) (8,21) (8,21) (18,35) (15,36) (13,24) (8,18) (6,19) (9,22) (11,28) (5,16) (10,20) (8,21) (8,20) (5,15) (6,15) (5,12) (5,15) (6,16) (8,21) (6,18) (7,21) (5,15) (11,27) (5,17) (6,16) (11,24) (10,26) (8,23) (6,17) (13,29) (13,27) (5,15) (5,15) (6,17)  (9,18) (6,15) (12,24) (5,14) (7,19) (6,17) (5,16) (5,17) (6,15) (14,33) (5,12) (12,29) (5,16) (12,28) (5,14) (4,11) (8,20) (7,21) (9,19) (6,17) (6,15) (10,25) (10,27) (7,19) (7,17) (7,19) (4,11) (8,20) (7,18) (20,41) (9,21) (7,18) (7,18) (5,17) (7,16) (7,17) (7,17) (7,21) (9,23) (12,26) (6,19) (4,13) (6,14) (4,11) (5,18) (9,23) (12,27) (9,24) (15,32) (7,16) (6,16) (6,14) (5,16) (9,26) (7,17) (21,45) (6,17) (8,21) (10,22) (7,19) (6,15) (14,32) (6,13) (5,11) (7,16) (10,26) (11,24) (6,15) (8,21) (11,24) (5,15) (7,15) (6,18) (5,14) (6,17) (6,17) (23,51) (4,13) (6,14) (11,26) (4,13) (5,16) (5,14) (10,30) (6,18) (5,11) (6,19) (8,21) (15,34) (8,21) (8,21) (7,19) (9,25) (6,18) (8,24) (12,27) (9,23) (14,34) (10,23) (7,17) (8,22) (11,29) (14,33) (11,26) (13,29) (10,26) (16,35) (10,26) (9,24) (18,43) (10,28) (11,23) (7,15) (7,20) (9,24) (8,19) (8,18) (11,25) (7,22) (7,17) (10,25) (11,27) (9,20) (6,18) (14,34) (8,19) (18,36) (9,21) (8,19) (11,27) (9,20) (20,43) (15,37) (11,28) (14,32) (9,25) (9,25) (7,18) (7,15) (9,21) (7,18) (9,23) (10,25) (7,19) (9,28) (7,19) (10,23) (8,21) (11,27) (8,19) (9,21) (8,25) (13,34) (11,27) (11,25) (14,36) (10,23) (8,21) (8,23) (9,22) (9,19) (15,34) (10,24) (7,16) (8,17) (11,28) (9,24) (6,17) (8,20) (9,19) (9,23) (8,22) (11,30) (15,33) (14,29) (17,39) (28,55) (6,16) (15,31) (10,25) (11,26) (13,34) (10,27) (8,25) (12,24) (7,20) (5,15) (6,16) (10,27) (9,23) (10,24) (7,19) (8,18) (14,30) (7,16) 

            }; \addlegendentry{packets partitions.}
            
            \addplot [black,mark=square,only marks,mark size=1pt] coordinates {
            (10,10) (7,11) (13,16) (8,13) (8,13) (6,10) (8,10) (8,13) (8,14) (18,21) (15,20) (14,20) (8,13) (6,10) (9,14) (11,13) (5,8) (10,16) (8,13) (8,11) (5,8) (6,10) (5,8) (5,9) (6,7) (8,12) (6,11) (8,12) (5,10) (11,17) (5,8) (6,9) (12,19) (11,11) (8,12) (6,8) (13,20) (15,23) (5,8) (5,7) (6,10) (9,14) (6,9) (12,18) (5,9) (7,13) (6,11) (5,7) (5,8) (6,10) (14,23) (5,5) (12,15) (5,8) (12,19) (5,9) (4,7) (8,14) (7,11) (9,15) (6,9) (6,10) (10,15) (10,16) (7,11) (7,11) (7,10) (4,7) (8,11) (7,12) (20,31) (9,14) (7,11) (7,12) (5,9) (7,11) (7,12) (7,11) (7,10) (9,11) (12,12) (6,9) (4,8) (6,10) (4,6) (5,8) (9,14) (12,17) (9,15) (15,23) (7,11) (6,11) (6,10) (5,8) (9,15) (7,10) (21,35) (6,10) (8,12) (10,14) (7,10) (6,12) (14,23) (6,10) (5,8) (7,10) (10,16) (11,16) (7,10) (8,12) (11,15) (5,7) (7,11) (6,11) (5,8) (6,6) (6,8) (23,39) (4,7) (6,11) (11,14) (4,7) (5,7) (5,9) (10,16) (6,9) (5,8) (6,10) (8,13) (15,18) (9,14) (8,10) (7,12) (9,11) (6,10) (8,12) (12,17) (9,15) (14,19) (10,12) (7,11) (8,9) (11,15) (14,16) (11,16) (13,16) (11,14) (16,18) (10,15) (10,15) (18,28) (10,16) (11,19) (7,9) (7,12) (9,13) (8,11) (8,12) (11,14) (7,11) (7,9) (10,15) (11,13) (9,12) (6,9) (14,20) (8,11) (18,22) (9,12) (8,12) (11,14) (9,11) (20,24) (15,17) (11,17) (14,21) (9,13) (9,12) (7,11) (7,11) (9,13) (7,10) (9,15) (10,17) (7,10) (9,12) (7,11) (10,17) (8,13) (11,12) (8,10) (9,11) (8,12) (13,14) (11,12) (11,16) (14,15) (10,14) (8,12) (8,13) (9,14) (9,13) (15,24) (10,16) (7,10) (8,13) (11,18) (9,14) (6,10) (8,12) (9,13) (9,13) (8,12) (11,17) (15,17) (14,17) (17,21) (28,41) (6,9) (15,17) (10,10) (12,17) (13,16) (11,18) (8,11) (12,14) (7,10) (5,8) (6,9) (10,15) (9,15) (10,16) (7,9) (8,12) (14,22) (7,11) 
            }; \addlegendentry{sessions partitions.}
            
        \end{axis}
    \end{tikzpicture}

    \caption{\newtext{The complexity of partitions derived from real-data. As described in Section~\ref{section_experiments_real_data}, we extracted three types of partitions based on the number of bytes sent by the servers (``load''), the number of packets received (``requests''), and the number of unique clients (``sessions''). Each partition was normalized to a sum $2^W$ for $W$ closest from above and a multiple of $8$, and then its complexity $\lambda(P)$ was computed. In this figure, all the bytes-partitions all have $W=24$, all the packets-partitions have $W=16$ and all the sessions-partitions have $W=8$, and their clustering over different trend-lines is noticeable. We omitted from this scatter 4 partitions per class, the partitions with $k=58,64,67,260$, in order to improve the visual resolution, but they fit the trends as well. The linear trends are approximately $y=3.09x+7.56$ for $W=24$ ($R^2 = 0.996$), $y=1.94x+5.05$ ($R^2 = 0.993$) for $W=16$ and $y=1.25x+2.08$ for $W=8$ ($R^2 = 0.953$).}}
    \label{figure_real_data_all}
\end{figure}
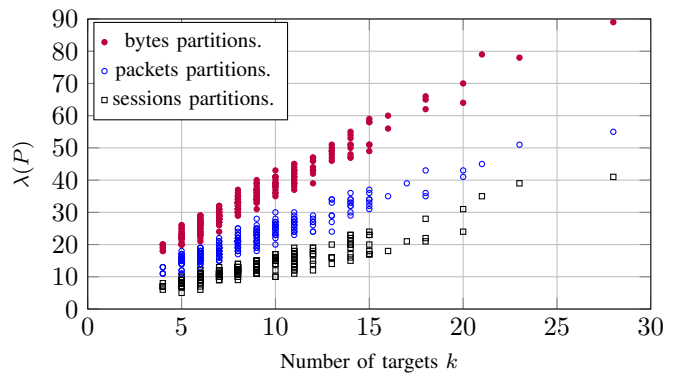

\section{Related Work}
\label{section_related_work}

\newtext{\textbf{Matching-based Implementations:}
The work of \cite{AccurateExp} showed the private case of Theorem~\ref{theorem_tight_signed_bits_bounds} for $k=2$ targets. An earlier work \cite{WangBR11} considered only restricted TCAM encodings in which rules are \emph{disjoint}. For instance, the partition $P=[4,3,1]$ for $W=3$ is implemented with the four rules $(\textsc{0**} \to 1, \textsc{10*} \to 2, \textsc{110} \to 2, \textsc{111} \to 3)$. Since TCAMs allow overlapping rules and resolve overlaps by ordering the rules, this early approach does not take full advantage of them. For example $P$ can also be implemented by prioritizing longer prefix rules as $(\textsc{0**} \to 1, \textsc{111} \to 3, \textsc{1**} \to 2)$.}

\newtext{\textbf{Hashing-based Implementations:}
Hash-based solutions for load-balancing use an array,
each of its cells contains a target. The fraction of the cells containing a particular target determines the fraction of the addresses that this target gets. 
This solution is also known as WCMP~\cite{WCMP, Zegura} or as 
ECMP~\cite{RFC2992} when traffic is split equally. \cite{WuTang} studies the relation between the size of the array and how good it approximates a desired distribution.
While the above works studied a fixed output distribution, in a dynamic scenario mapping has to be updated following a change in the required  distribution. ~\cite{Chao08, Chim, KandulaKSB07} considered such updates for load balancing over multiple paths. They suggested update schemes that reduce transient negative impact of packet reordering.
A recent approach~\cite{DASH_alg} refrains from memory blowup by comparing the hash to range-boundaries. Since the hash is tested sequentially against each range, it restricts the total number of load-balancing targets.}

\newtext{\textbf{Partitions vs. Functions:} This paper studies efficient representations of partitions. A \emph{partition} specifies the number of addresses that have to be mapped to each possible target but leaves the freedom to choose these addresses. In contrast a \emph{function} specifies exactly the target of each address. Note that there may still be multiple ways to implement a function with a TCAM. Finding the smallest list of prefix rules that realizes a given function can be done in polynomial time with dynamic programming~\cite{DravesKSZ99, suri03}.  When we are not restricted to prefix rules the problem is  NP-hard~\cite{McGeerY09}. The particular family of ``range functions'' where the preimage of each target is an interval was carefully studied due to its popularity in packet classifiers for access control~\cite{Gray, ORange}. Going back to implementing partitions, ~\cite{AccurateExp} proved that any partition to two targets has an optimal realization as a range function.}

\newtext{\textbf{Signed-digits Arithmetics:} The paper \cite{SignedBitsFastArithmetics} uses the generalized definition of signed-digits representation to speed-up arithmetic operations. An alternative view of signed-digits representation is representing an integer as the difference of two non-negative integers. \cite{SignedBits1960} uses this alternative framing to investigate binary arithmetic. Overall, signed-bits are of interest since they come up in minimization/optimization scenarios, see \cite[Section 6]{SignedBitsLooplessGrayCode} for a survey. In the context of TCAMs analysis, \cite{AccurateExp} used this representation to give an exact expression for $\lambda(P)$ when $k=2$, prior to our generalization for $k \ge 3$.}

\section{Conclusions} 
\label{section_conclusions}
In this paper we thoroughly studied the size of a minimal LPM TCAM table that implements a specific partition $P$ of the address-space. We proved that a partition requires no more than $\frac{kW}{3}$ rules, and also that a ``typical'' partition should require roughly half of this number of rules, about $\frac{1}{6}kW$ rules, which was also supported by our simulations. While the analysis was done asymptotically for large values of $W$, our simulations show that the results still hold even when $W$ is not too large.

While there exist partitions that have a very compact representation (with only $k$ rules), since the expected LPM TCAM for a partition is $\frac{1}{6}kW$ rules, an effective way to reduce the size of the encoding of an arbitrary partition is by rounding the binary representation of its parts to reduce the effective width from $W$ to $W'$. This may reduce the TCAM size (on average) by a factor of approximately $W' / W$ (from $\frac{1}{6}kW$ to $\frac{1}{6}kW'$). We note that the problem of finding a close partition that requires less rules has been stated and solved in \cite{TCAM_Linf_TON} and \cite{TCAM_L1_INFOCOM}. The results of our analysis provide a rule-of-thumb to estimate how far the compact partition is expected to be from the desired partition.

We also analyzed bounds that depend on the signed-bits representation of a partition, and found in our simulations that the signed-bits lower bound in terms of this representation provides a good estimation to $\lambda(P)$, in expectation. The signed-bits representation of a number has less non-zero coefficients if we round it to a multiple of a large power of $2$ (ignoring least significant (signed-) bits). This again shows that we can reduce the number of TCAM rules by reducing the effective width of the values, by rounding the parts of the desired partition.


\bibliographystyle{IEEEtran}
\bibliography{reference}


\end{document}